\newtheorem{theorem}{Theorem}[section]
\newtheorem{lemma}[theorem]{Lemma}
\newtheorem{claim}[theorem]{Claim}
\theoremstyle{definition}
\newtheorem{definition}[theorem]{Definition}
\newcommand{\conn}{\mu}
\newcommand{\tw}{\mathbf{tw}}
\newcommand{\parent}{\mathrm{parent}}
\newcommand{\Oh}{\ensuremath{\mathcal{O}}}
\newcommand{\GI}{{\sc{Graph Isomorphism}}\xspace}
\def\cqedsymbol{\ifmmode$\lrcorner$\else{\unskip\nobreak\hfil
\penalty50\hskip1em\null\nobreak\hfil$\lrcorner$
\parfillskip=0pt\finalhyphendemerits=0\endgraf}\fi} 
\newcommand{\cqed}{\renewcommand{\qed}{\cqedsymbol}}
\newcommand{\executeiffilenewer}[3]{%
\ifnum\pdfstrcmp{\pdffilemoddate{#1}}%
{\pdffilemoddate{#2}}>0%
{\immediate\write18{#3}}\fi%
} 
\newcommand{%
\executeiffilenewer{figures/.svg}{figures/.pdf}%
{inkscape -z -D --file=figures/.svg %
--export-pdf=figures/.pdf --export-latex}%
{\input{figures/.pdf_tex}}}[1]{%
\executeiffilenewer{figures/#1.svg}{figures/#1.pdf}%
{inkscape -z -D --file=figures/#1.svg %
--export-pdf=figures/#1.pdf --export-latex}%
{\input{figures/#1.pdf_tex}}}%
\newcommand{\app}{$\spadesuit$}
\newcommand{\imp}[2]{#1^{\langle#2\rangle}}
\newcommand{\termfam}{\mathbb{T}}
\newcommand{\Op}{\mathbb{O}}
\newcommand{\term}{\mathbf{t}}
\newcommand{\leaf}{\mathfrak{l}}
\newcommand{\join}{\mathfrak{j}}
\newcommand{\introduce}{\mathfrak{i}}
\newcommand{\forget}{\mathfrak{f}}
\newcommand{\edge}{\mathfrak{e}}
\newcommand{\Labels}{\Sigma}
\newcommand{\used}{\mathbf{used}}
\newcommand{\baggraph}{\mathbf{bag}}
\newcommand{\labG}{\mathfrak{G}}
\newcommand{\lab}{\lambda}
\newcommand{\bagfam}{\mathcal{B}}
\newcommand{\canon}{\mathfrak{c}}
\newcommand{\tl}{\vartriangleleft}
\newcommand{\tg}{\vartriangleright}
\newcommand{\tleq}{\trianglelefteq}
\newcommand{\pot}{\Phi}
\title{Fixed-parameter tractable canonization and isomorphism test\\ for graphs of bounded treewidth%
\thanks{%
A preliminary version of this paper has been presented at FOCS 2014.
D. Lokshtanov is supported by the BeHard grant under the recruitment programme of the of Bergen Research Foundation.
The research of M. Pilipczuk and M. Pilipczuk leading to these results has received funding from the European Research Council under the European Union's Seventh Framework Programme (FP/2007-2013) / ERC Grant Agreement n. 267959.
S. Saurabh is supported by PARAPPROX, ERC starting grant no. 306992.}}
\author{
  Daniel Lokshtanov
  \thanks{
    Department of Informatics, University of Bergen, Norway, \texttt{daniello@ii.uib.no}.
  }
  \and
  Marcin Pilipczuk
  \thanks{
     Department of Computer Science, University of Warwick, UK, \texttt{M.Pilipczuk@dcs.warwick.ac.uk}.
  }
  \and
  Micha\l{} Pilipczuk
  \thanks{
    Institute of Informatics, University of Warsaw, Poland, \texttt{michal.pilipczuk@mimuw.edu.pl}.
  }
  \and 
  Saket Saurabh\thanks{
    Institute of Mathematical Sciences, India, \texttt{saket@imsc.res.in}, and
    Department of Informatics, University of Bergen, Norway, \texttt{Saket.Saurabh@ii.uib.no}.
  }
}
\date{}
\begin{document}

\begin{titlepage}
\def\thepage{}
\thispagestyle{empty}
\maketitle

\begin{abstract}
We give a fixed-parameter tractable algorithm that, given a parameter $k$ and two graphs $G_1,G_2$, either concludes that one of these graphs has treewidth at least $k$, or determines whether $G_1$ and $G_2$ are isomorphic. The running time of the algorithm on an $n$-vertex graph is $2^{\Oh(k^5\log k)}\cdot n^5$, and this is the first fixed-parameter algorithm for \GI parameterized by treewidth. 

Our algorithm in fact solves the more general {\em{canonization}} problem. We namely design a procedure working in $2^{\Oh(k^5\log k)}\cdot n^5$ time that, for a given graph $G$ on $n$ vertices, either concludes that the treewidth of $G$ is at least $k$, or:
\begin{itemize}
\item finds in an isomorphic-invariant way a graph $\canon(G)$ that is isomorphic to $G$;
\item finds an isomorphism-invariant {\em{construction term}} --- an algebraic expression that encodes $G$ together with 
a tree decomposition of $G$ of width $\Oh(k^4)$.
\end{itemize}
Hence, the isomorphism test reduces to verifying whether the computed isomorphic copies or the construction terms
for $G_1$ and $G_2$ are equal.

\end{abstract}
\end{titlepage}

\section{Introduction}\label{sec:intro}
\GI is one of the most fundamental graph problems: given two graphs $G_1,G_2$, decide whether $G_1$ and $G_2$ are {\em{isomorphic}}, i.e., there exists a bijection $\phi$ between $V(G_1)$ and $V(G_2)$ such that $uv\in E(G_1)$ if and only if $\phi(u)\phi(v)\in E(G_2)$. Despite extensive research on the topic, it is still unknown whether the problem can be solved in polynomial time. On the other hand, there are good reasons to believe that \GI is not NP-hard either, since NP-hardness of \GI would imply a collapse of the polynomial hierarchy~\cite{Schoning88}.

A significant amount of effort has been put into understanding and broadening the spectrum of classes of graphs where polynomial-time isomorphism tests can be designed. Perhaps the most important example is the classic algorithm of Babai and Luks~\cite{BabaiL83,Luks82}, which solves \GI on graphs of maximum degree $d$ in time $n^{\Oh(d)}$. On the other hand, following polynomial-time solvability of \GI on planar graphs~\cite{HopcroftT72,HopcroftT73,HopcroftW74,Weinberg} it has been investigated how more general topological constraints can be exploited to design efficient algorithms for the problem. Isomorphism tests for graphs of genus $g$ working in time $n^{\Oh(g)}$ were proposed independently by Filotti and Mayer~\cite{FilottiM80} and by Miller~\cite{Miller80}. These results were improved by Ponomarenko~\cite{ponomarenko}, who gave an $\Oh(n^{f(|H|)})$ algorithm for graphs excluding a fixed graph $H$ as a minor, for some function $f$. The result of Ponomarenko implies also that \GI can be solved in polynomial time on graphs of constant treewidth. A simple algorithm for graphs of treewidth $k$ running in time $\Oh(n^{k+4.5})$ was independently given by Bodlaender~\cite{Bodlaender90}. Finally, we  mention the result of Arnborg and Proskurowski~\cite{ArnborgP92}, who gave canonical representation of partial $2$- and $3$-trees.


Recently, Grohe and Marx~\cite{marx-grohe-arxiv,marx-grohe} obtained a structure theorem for graphs excluding a fixed graph $H$ as a topological minor. This theorem roughly states that such graphs can be decomposed along small separators into parts that are either $H$-minor-free, or of almost bounded degree (in terms of $|H|$). Using previous algorithms for $H$-minor-free graphs~\cite{ponomarenko} and bounded-degree graphs~\cite{BabaiL83,Luks82}, they managed to show that \GI can be solved in $\Oh(n^{f(|H|)})$ time for $H$-topological-minor-free graphs, for some function $f$. Note that this result generalizes both
the algorithms for \GI on minor free-graphs~\cite{ponomarenko} and on bounded degree graphs~\cite{BabaiL83,Luks82}. The work of Grohe and Marx constitutes the current frontier of polynomial-time solvability of \GI.

Observe that in all the aforementioned results the exponent of the polynomial depends on the considered parameter, be it the maximum degree, genus, treewidth, or the size of the excluded (topological) minor. In the field of parameterized complexity such algorithms are called XP algorithms (for {\em{slice-wise polynomial}}), and are often compared to the more efficient FPT algorithms (for {\em{fixed-parameter tractable}}), where the running time is required to be of the form $f(k)\cdot n^c$. Here $k$ is the parameter, $f$ is a computable function, and $c$ is a universal constant independent of $k$. One of the main research directions in parameterized complexity is to consider problems that admit XP algorithms and determine whether they admit an FPT algorithm;
we refer to the monographs~\cite{DowneyF13,FlumGroheBook} for more information on parameterized complexity. 
It is therefore natural to ask which of the the aforementioned results on \GI can be improved to fixed-parameter tractable algorithms.

Prior to this work very little was known about such improvements. In particular, the existence of FPT algorithms for \GI parameterized by maximum degree, genus or treewidth of the input graph have remained intriguing open problems.
%
%
%
%
%
A reader familiar with the algorithmic aspects of treewidth might find it surprising that the existence of an FPT algorithm for \GI parameterized by treewidth is a difficult problem. The parameter has been studied intensively during the last 25 years, and is quite well-understood. 
Many problems that are very hard on general graphs become polynomial time, or even linear-time solvable on graphs of constant treewidth. For the vast majority of these problems, designing an FPT algorithm parameterized by treewidth boils down to designing a straightforward dynamic programming algorithm over the decomposition. For \GI this is not the case, even the relatively simple  $\Oh(n^{k+4.5})$ time algorithm of Bodlaender~\cite{Bodlaender90} is structurally quite different from most algorithms on bounded treewidth graphs. This might be the reason why \GI was one of very few remaining problems of fundamental nature, whose fixed-parameter tractability when parameterized by treewidth was unresolved.
%
%
%
%

Therefore, fixed-parameter tractability of \GI parameterized by treewidth has been considered an important open problem in parameterized complexity for years. This question (and its weaker variants for width measures lower-bounded by treewidth) was asked explicitly in~\cite{open-iwpec08,BoulandDK12,marx-grohe-arxiv,KawarabayashiM08,KratschS10,Otachi12,YamazakiBFT99}, and appears on the open problem list of the recent edition of the monograph of Downey and Fellows~\cite{DowneyF13}. 

Most of the related work on the parameterized complexity of \GI with respect to width measures considers parameters that are always at least as large as treewidth. The hope has been that insights gained from these considerations might eventually lead to settling the main question. In particular, fixed-parameter tractable algorithms for \GI has been given for the following parameters: tree-depth~\cite{BoulandDK12}, feedback vertex set number~\cite{KratschS10}, connected path distance width~\cite{Otachi12}, and rooted tree distance width~\cite{YamazakiBFT99}. Very recent advances by Otachi and Schweitzer~\cite{pascale} give FPT algorithms for parameterizations by root-connected tree distance width and by connected strong treewidth. Even though all these parameters are typically much larger than treewidth, already settling fixed-parameter tractability for them required many new ideas and considerable technical effort. This supports the statement by Kawarabayashi and Mohar~\cite{KawarabayashiM08} that ``[...] even for graphs of bounded treewidth, the graph isomorphism problem is not trivial at all''.

\paragraph*{Our results.} In this paper we answer the question of fixed-parameter tractability of \GI parameterized by treewidth in the affirmative, by proving the following theorem:

\begin{theorem}\label{main:isomorphism}
There exists an algorithm that, given an integer $k$ and two graphs $G_1,G_2$ on $n$ vertices, works in time $2^{\Oh(k^5\log k)}\cdot n^5$ and either correctly concludes that $\tw(G_1)\geq k$ or $\tw(G_2)\geq k$, or determines whether $G_1$ and $G_2$ are isomorphic.
\end{theorem}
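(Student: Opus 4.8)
The plan is to reduce the isomorphism test to the canonization procedure promised in the abstract, and build that canonization procedure by induction on a carefully chosen width-bounded structure. First I would handle the easy direction: run the known constant-factor-approximation algorithm for treewidth (e.g.\ Bodlaender--Drange--Dregi--Fomin--Lokshtanov--Pilipczuk, or any $f(k)\cdot n^{\Oh(1)}$ routine giving a decomposition of width $\Oh(k)$ if $\tw\le k$); if it fails on $G_1$ or $G_2$, we output the first conclusion and stop. Otherwise we have, for each input graph, a rooted tree decomposition of width $\Oh(k)$, and it remains to decide isomorphism of two graphs of treewidth $\Oh(k)$. Crucially this decomposition is \emph{not} canonical, so we may not simply compare decompositions; the whole technical content is to extract from the graph an isomorphism-invariant object.

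The core of the argument is to construct, for a graph $G$ of treewidth at most $k$, an isomorphism-invariant \emph{construction term} $\term(G)$ --- an algebraic expression over the operations $\leaf,\introduce,\forget,\join,\edge$ on labeled graphs (with label set $\Labels$ of size $\Oh(k^4)$) that evaluates to a graph isomorphic to $G$ together with a tree decomposition of width $\Oh(k^4)$ --- such that $G_1\cong G_2$ iff $\term(G_1)=\term(G_2)$ as terms (up to the inherent commutativity of $\join$, which can itself be canonized by recursively sorting subterms). I would build $\term(G)$ recursively: identify in an isomorphism-invariant way a bounded-size set of ``central'' separators or a canonical tree of separations (for instance, following the line of the improved-recursion trick of Grohe or the Robertson--Seymour-style decomposition into torsos), split $G$ along them, canonize each piece relative to the interface it inherits, and then canonize the way the pieces are glued. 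To make the recursion terminate and isomorphism-invariant one must (i) group the many possible pieces hanging off an interface into isomorphism types of \emph{labeled} graphs and sort these types, and (ii) when the canonical separator set is not unique, take a further canonical choice --- typically by enumerating all $\Oh(1)$ many candidate separators of bounded size adjacent to the current interface, canonizing the resulting decompositions, and picking the lexicographically smallest. Blowing the label set up from $\Oh(k)$ to $\Oh(k^4)$ is what buys enough ``memory'' on the interface to record adhesion structure and glueing patterns; the $2^{\Oh(k^5\log k)}$ factor arises because each node processes $2^{\poly(k)}$ labeled graphs on $\Oh(k^4)$ vertices, and the $n^5$ from combining this with the $n^{\Oh(1)}$ approximation and the recursion depth.

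I expect the main obstacle to be precisely the isomorphism-invariant \emph{choice of the decomposition}: a graph of treewidth $k$ has many optimal (and near-optimal) tree decompositions, and naively canonizing ``the'' decomposition is circular because deciding which decompositions are ``the same'' is itself an isomorphism question. The resolution --- and the heart of the paper --- is to show that one can always find a bounded-adhesion separation, or a bounded-size family of them, that is definable by a graph-theoretic property invariant under automorphisms (e.g.\ being a minimal separator of minimum order separating two canonically chosen ``parts'', or the unique separator arising from a canonical tree of 3-connected components / from the Grohe--Marx-style reduction to bounded-degree pieces), and then to recurse into labeled sub-instances where the interface vertices carry distinct labels and are therefore fixed pointwise, collapsing their automorphisms. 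Once the interface is rigid, canonizing each attached piece and sorting the resulting labeled-graph types is routine; the delicate points are bounding the adhesion by $\Oh(k^4)$ so that the label budget suffices, and checking that every ``canonical choice'' we make (smallest separator, sorted order of pieces, sorted order of $\join$-children) is genuinely isomorphism-invariant and computable in the claimed time. Finally, given $\term(G_1),\term(G_2)$, equality testing of the two terms --- after recursively sorting commutative operands --- is linear, which yields the stated running time for Theorem~\ref{main:isomorphism}.
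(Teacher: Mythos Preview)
Your high-level plan --- reduce the isomorphism test to computing an isomorphism-invariant construction term for each graph and then compare terms --- is exactly what the paper does; the proof of Theorem~\ref{main:isomorphism} in the paper is literally two lines invoking the canonization theorem. You also correctly identify the central difficulty: one cannot simply run a treewidth approximation and compare decompositions, because the decomposition is not canonical.

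The gap is that you do not actually solve this difficulty. You write that one should ``show that one can always find a bounded-adhesion separation, or a bounded-size family of them, that is definable by a graph-theoretic property invariant under automorphisms'', and you offer as candidates the tree of 3-connected components, Grohe--Marx-style reductions, or ``minimal separator of minimum order separating two canonically chosen parts''. None of these work as stated: the 3-connected-component tree gives no width bound; the Grohe--Marx structure theorem is for excluded topological minors and does not produce a canonical small separator in a general bounded-treewidth graph; and ``canonically chosen parts'' is circular, since choosing the parts is itself the problem. Your fallback of enumerating candidate separators ``adjacent to the current interface'' and picking the lexicographically smallest canonization does not give FPT time unless you can bound the number of candidates by $f(k)$, and you give no such bound.

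The paper's resolution is specific and is really the content of the theorem. Given the current interface $S$, one takes the \emph{union} of the separators of \emph{all} minimum-order $S_L$--$S_R$ separations over all partitions $(S_L,S_R)$ of $S$ (restricted appropriately so there are only $2^{\mathrm{poly}(k)}$ of them). The heart of the argument is a submodularity lemma showing that, even after this huge union, every connected component of the complement still has at most $|S|$ neighbours; this is what keeps the adhesion bounded through the recursion and makes the family of potential bags have size $f(k)\cdot n^{\Oh(1)}$ rather than $n^{\Oh(k)}$. A second ingredient you are missing is how to handle the case where $S$ is so well-connected that no small balanced separation exists: the paper passes to the $k$-improved graph and uses the (isomorphism-invariant) clique minimal separator decomposition to peel off such pieces. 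Finally, the $\Oh(k^4)$ label budget and the $2^{\Oh(k^5\log k)}$ factor do not arise from ``recording glueing patterns'' as you suggest; they come from a separate bag-size-reduction step that replaces each potential bag of size $2^{\Oh(k\log k)}$ by all of its subsets of size $\Oh(k^4)$. Your initial step of running a (non-canonical) approximation algorithm plays no role in any of this and should be dropped.
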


In fact, we prove a stronger statement, that is, we provide a {\em{canonization}} algorithm for graphs of bounded treewidth.
This can be defined in two manners. 
First, following the formalism of Grohe and Marx~\cite{marx-grohe-arxiv,marx-grohe}, we show an algorithm that, given $G$,
constructs a canonical graph $\canon(G)$ on the vertex set $\{1,2,\ldots,|V(G)|\}$ isomorphic to $G$, such that $\canon(G_1)=\canon(G_2)$ whenever $G_1$ and $G_2$ are isomorphic.

\begin{theorem}\label{simple:canonization}
There exists an algorithm that, given an integer $k$ and a graph $G$ on $n$ vertices, works in time $2^{\Oh(k^5\log k)}\cdot n^5$ and either correctly concludes that $\tw(G)\geq k$ or outputs, in an isomorphic-invariant way, a graph $\canon(G)$ that is isomorphic to $G$,
together with a mapping $\phi:V(G) \to V(\canon(G))$ that certifies this isomorphism.
\end{theorem}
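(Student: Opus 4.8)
The plan is to reduce the canonization of $G$ to canonizing a bounded-width \emph{construction term}, and to concentrate all the real work on producing such a term in an isomorphism-invariant way.

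\textbf{From construction terms to canonical forms.} A construction term is a rooted tree whose nodes are labelled by operations from $\{\leaf,\introduce,\forget,\join,\edge\}$ together with symbols from a set $\Labels$ of size $\Oh(k^4)$, and whose evaluation is a graph carrying a tree decomposition of width $|\Labels|-1$. Canonizing such terms is then essentially bottom-up tree canonization: recursively compute canonical encodings of the subterms rooted at the children, sort these encodings, and concatenate them with the operation of the current node to obtain the canonical encoding of the current subterm; since two terms evaluate to isomorphic graphs precisely when one is turned into the other by relabelling $\Labels$, one additionally minimizes over all $|\Labels|!=2^{\Oh(k^4\log k)}$ relabellings. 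Reading the labelling off the minimizing run produces both $\canon(G)$ and the certifying bijection $\phi\colon V(G)\to V(\canon(G))$, all within the claimed budget.

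\textbf{Reducing to an isomorphism-invariant tree decomposition.} What we actually need to build, given $G$, is a tree decomposition $\treedecomp$ of $G$ that is \emph{isomorphism-invariant} (isomorphic graphs get isomorphic $\treedecomp$'s), has adhesions of size $\Oh(k)$, and whose every torso is \emph{unbreakable}: for a suitable $q=\Theta(k^4)$, every separation of order $<2k$ leaves fewer than $q$ vertices strictly on one of its two sides. A standard balanced-separator argument together with $\tw(G)<k$ then forces each such torso to have only $\Oh(k^4)$ vertices. Granting $\treedecomp$, we root it and process it bottom-up: each torso is small, so --- respecting the canonically ordered, bounded-size adhesion towards its parent --- it can be canonized by brute force over orderings, and splicing these local construction terms together along $\treedecomp$ yields the global construction term that the previous step consumes. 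To get started we run a fixed-parameter algorithm that either reports $\tw(G)\ge k$ (and we stop) or returns \emph{some} width-$(<k)$ decomposition; this decomposition is used only internally, to compute small balanced separators efficiently, and is discarded --- it is not isomorphism-invariant and plays no role in $\canon(G)$.

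\textbf{Building $\treedecomp$ canonically --- the crux.} This is the genuinely difficult part, and the heart of the whole argument. The usual proofs that unbreakable tree decompositions exist proceed by greedy, arbitrary choices and are nowhere near isomorphism-invariant; and the one canonical decomposition theory readily available --- the Robertson--Seymour tree decomposition distinguishing all tangles of a fixed order --- is of no use here, since its parts, although combinatorially canonical, can be arbitrarily large, whereas unbreakability is a much stronger, ``smallness-forcing'' condition than tangle-freeness. So one must construct $\treedecomp$ by hand, through a recursion that canonizes boundaried graphs $(G,\bar X)$ with $|\bar X|=\Oh(k)$: if $G$ is already small, brute-force it; otherwise \emph{canonically} select a family $\mathcal F$ of $\Theta(k^4)$-balanced separations of order $<2k$ that is pairwise non-crossing --- hence induces a tree decomposition --- and recurse inside the resulting pieces with their (size-$\Oh(k)$) boundaries. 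The delicate work is producing $\mathcal F$ canonically even though the eligible separations may cross one another badly: one has to uncross them, exploiting submodularity of the order function together with an isomorphism-stable tie-breaking rule (prefer smaller order, then the lexicographically smallest ``closest'' side, and so on), in such a way that $\mathcal F$, the recursion it triggers, and the final $\treedecomp$ all commute with isomorphisms of $(G,\bar X)$. The width bound $\Oh(k^4)$, the $2^{\Oh(k^5\log k)}$ factor (from the brute-force canonization of the bounded-size pieces and the optimization over relabellings of $\Labels$), and the polynomial factor $n^5$ (from the separator computations performed at the $\Oh(n)$ nodes generated through the recursion) all emerge from the analysis of this procedure, which I expect to be where essentially all of the technical effort lies.
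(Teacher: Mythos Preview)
Your overall plan---reduce to computing an isomorphism-invariant construction term, then read off $\canon(G)$ and $\phi$---matches the paper's, and the deduction of Theorem~\ref{simple:canonization} from Theorem~\ref{main:canonization} is indeed a one-liner (order vertices by when they are forgotten in the canonical term). But the route you propose to the canonical term diverges from the paper, and contains a genuine gap.

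\textbf{The gap.} Your crux step asks to ``canonically select a family $\mathcal F$ of \ldots\ separations \ldots\ that is pairwise non-crossing'', uncrossing competitors via ``an isomorphism-stable tie-breaking rule (prefer smaller order, then the lexicographically smallest `closest' side, and so on)''. That tie-breaking is circular: ``lexicographically smallest'' presupposes a canonical vertex ordering, which is precisely what you are trying to compute. Without such an ordering there is no isomorphism-invariant way to prefer one side of a separation over the other, or one of several equal-order uncrossings over another. Submodularity only tells you that uncrossed pairs do not increase order, not which of the two resulting laminar separations to keep. So your construction of $\treedecomp$ is not actually isomorphism-invariant as stated, and the argument collapses there. (Relatedly, canonical unbreakable decompositions are not known; the paper lists making that decomposition canonical as an open direction.)

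\textbf{What the paper does instead.} The paper never builds a canonical tree \emph{decomposition}. It builds only an isomorphism-invariant \emph{family} $\bagfam\subseteq 2^{V(G)}$ of potential bags---size $2^{\Oh(k^5\log k)}\cdot n^2$, each bag of size $\Oh(k^4)$---guaranteed to capture \emph{some} (not canonical) tree decomposition of width $\Oh(k^4)$. The device that replaces your uncrossing step is Lemma~\ref{lem:magical}: rather than choosing one balanced separator, take the \emph{union} of the separators of \emph{all} $S$-stable separations (minimum $S_L$--$S_R$ cuts over all partitions of the current boundary $S$); the lemma shows, by a submodularity argument, that every component of the complement still has at most $|S|$ neighbours, so adhesions do not blow up even though no choice was made. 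The residual case where $S$ induces a clique (so no nontrivial cut exists) is handled via the clique minimal separator decomposition of the $k$-improved graph, whose set of atoms is isomorphism-invariant even though its tree shape is not. Finally, the canonical term is produced not by brute-forcing small torsos of a canonical decomposition, but by a dynamic program over states $(B,\lambda,Z)$ with $B\in\bagfam$, computing for each state the $\trianglelefteq$-minimum term among all ways to extend it; no canonical tree structure is ever needed.
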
 

A second way is through so-called {\em{construction terms}}, defined formally in Section~\ref{sec:canon}: they are algebraic expressions encoding construction procedures for graphs of bounded treewidth. Thus, construction terms can be seen as an alternative definition of treewidth via graph grammars (see Lemma~\ref{lem:tw-ct}).

\begin{theorem}\label{main:canonization}
There exists an algorithm that, given a graph $G$ and a positive integer $k$, in time $2^{\Oh(k^5\log k)}\cdot n^{5}$ either correctly concludes that $\tw(G)\geq k$, or outputs an isomorphism-invariant term $\term$ that constructs $G$ and uses at most $\Oh(k^4)$ labels. Moreover, this term has length at most $\Oh(k^4\cdot n)$.
\end{theorem}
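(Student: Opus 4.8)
The plan is to produce $\term$ from an isomorphism-invariant rooted tree decomposition of $G$ of width $\Oh(k^4)$, and then read the term off from that decomposition. Concretely, I would design a recursive procedure that takes a vertex-colored graph $G$ together with a distinguished boundary $W\subseteq V(G)$ of size $\Oh(k^4)$ and either reports $\tw(G)\geq k$ or returns an isomorphism-invariant term constructing $G$ in which the vertices of $W$ carry prescribed, pairwise distinct labels that are introduced first and never forgotten; Theorem~\ref{main:canonization} is the case $W=\emptyset$. Before entering the recursion we run a known fixed-parameter treewidth algorithm, which either certifies $\tw(G)\geq k$ or delivers a (non-canonical) decomposition of width $\Oh(k)$ that we use only as a \emph{scaffold}. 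Since the decomposition we ultimately build has at most $n$ bags of size $\Oh(k^4)$, the resulting term has $\Oh(k^4 n)$ structural nodes and uses $\Oh(k^4)$ labels, as required; conversely, if the procedure fails then $G$ has no tree decomposition of width $<k$, i.e.\ $\tw(G)\geq k$. The role of Lemma~\ref{lem:tw-ct} here is to guarantee that such a width-$\Oh(k^4)$ decomposition does translate into a legitimate construction term with $\Oh(k^4)$ labels.

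The recursion has three layers. First, reduce to the connected case: each connected component of $G-W$, taken with its neighborhood in $W$ and the colors inherited from $W$, is canonized recursively; isomorphic colored components are then grouped, and within a group they are attached to $W$ by a chain of join operations, which symmetrizes over permutations of identical branches and keeps the whole construction invariant. Second, for a connected piece $(G,W)$, use the scaffold to enumerate, in fixed-parameter time, a family of polynomial size of candidate separations of order $<k$, and from it extract an isomorphism-invariant subfamily --- for instance the inclusion-minimal separators of minimum order, then closed up so as to become laminar --- which in the standard fashion induces an isomorphism-invariant tree decomposition with adhesions bounded in terms of $k$ and with \emph{unbreakable} torsos: no separator of order $<k$ splits a torso into two large sides. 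Third, recurse into the children of this decomposition, each adhesion set --- processed if necessary to also record the torso's internal interface, but kept of size $\Oh(k^4)$ --- playing the role of the new boundary $W$.

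The crux, and the step I expect to be the main obstacle, is the base of the second layer: canonizing an \emph{unbreakable} piece $(H,W)$ of treewidth $<k$ relative to its boundary. Here divide-and-conquer on small balanced separators has bottomed out, yet $H$ can still be arbitrarily large (a star is unbreakable), so a genuinely new handle is needed. The plan is to prove a structural dichotomy: either $|V(H)|$ is bounded by a function of $k$, in which case $H$ is canonized directly by brute force over its $|V(H)|^{\Oh(|V(H)|)}$ vertex orderings; or one can single out, in an isomorphism-invariant way, a set $Z\subseteq V(H)$ of size $\poly(k)$ --- assembled from the small sides of the minimum-order separations together with the high-degree vertices that unbreakability forces --- whose addition to $W$ lets the recursion continue on strictly smaller pieces. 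Establishing this dichotomy, extracting $Z$ canonically, and --- most delicately --- verifying that the boundary never grows past $\Oh(k^4)$ over the whole recursion, is where essentially all of the difficulty lies. Finally, the running time follows by bounding the recursion tree by $\Oh(n)$ and charging to each node the polynomial-time scaffold and separator-enumeration steps together with the cost of symmetrizing over relabelings of the $\Oh(k^4)$-sized boundary, which, accumulated through the recursion, produces the $2^{\Oh(k^5\log k)}$ factor, giving the stated bound $2^{\Oh(k^5\log k)}\cdot n^5$.
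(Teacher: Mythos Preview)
Your outline diverges substantially from the paper's proof, and the parts you yourself flag as ``the main obstacle'' are exactly where the real content lies --- and they are not resolved in your proposal.

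The paper does \emph{not} build a canonical tree decomposition recursively with a growing boundary $W$. Instead it first produces, in one shot, an isomorphism-invariant family $\bagfam\subseteq 2^{V(G)}$ of potential bags (each of size $\Oh(k^4)$, total size $2^{\Oh(k^5\log k)}\cdot n^2$) that is guaranteed to capture \emph{some} tree decomposition of width $\Oh(k^4)$; then it runs a dynamic program over triples $(B,\lambda,Z)$ with $B\in\bagfam$, taking the $\tleq$-minimum over all candidate terms. The heavy lifting is in constructing $\bagfam$. The two ingredients you are missing are:
\begin{itemize}
\item A submodularity argument (Lemma~\ref{lem:magical}) showing that if $S$ is the current interface and one takes the union $X$ of the separators of \emph{all} $S$-stable separations (minimum-order $S_L$--$S_R$ separations over all partitions $(S_L,S_R)$ of $S$), then every component of $G\setminus X$ still has at most $|S|$ neighbours. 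This is precisely the statement that the boundary does not grow; your proposal asserts the analogous bound (``the boundary never grows past $\Oh(k^4)$'') but offers no mechanism for it. Your suggestion of ``inclusion-minimal separators of minimum order, then closed up so as to become laminar'' does not come with any such bound, and in general taking a laminar closure can blow up sizes.
\item A way to make progress when $S$ is already highly connected and no small separator splits it --- exactly your ``unbreakable'' base case. The paper handles this not via a structural dichotomy on unbreakable pieces, but by passing to the $k$-improved graph $\imp{G}{k}$: if $S$ cannot be split, then $S$ is a clique in $\imp{G}{k}$, and one first removes all clique separators via the (canonical) clique minimal separator decomposition, so that inside each atom some nonadjacent pair in $S$ always exists and its pushed minimum $x$--$y$ separators can be added canonically.
\end{itemize}

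Two further issues: the non-canonical ``scaffold'' decomposition cannot be used to \emph{select} an isomorphism-invariant subfamily --- anything you extract using its structure inherits its non-invariance, so at best it serves for running-time bookkeeping, not for defining canonical objects. And the dichotomy you sketch for unbreakable pieces (``small sides of minimum-order separations together with high-degree vertices'') is not established; for arbitrary $(q,k)$-unbreakable graphs of treewidth $<k$ it is not clear such a canonical $Z$ of size $\poly(k)$ exists, and you would need a concrete lemma here rather than a plan. As written, the proposal identifies the right difficulties but does not overcome them.
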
 

The approach to treewidth and tree decompositions via graph grammars and tree automata dates back to the earliest works on this subject, and is the foundation of intensive research on links between treewidth and monadic second-order logic; we refer to a recent monograph of Courcelle and Engelfriet~\cite{0030804} for a more thorough introduction. Unfortunately there is currently no agreed standard notation for these concepts. In order to ensure clarity of notation, we introduce our own formalism in Section~\ref{sec:terms}. 

Let us point out that for all the aforementioned classes where \GI can be solved in XP time, corresponding XP canonization algorithms were also developed: for bounded-degree graphs by Babai and Luks~\cite{BabaiL83}, for $H$-minor-free graphs by Ponomarenko~\cite{ponomarenko}, and for $H$-topological-minor-free graphs by Grohe and Marx~\cite{marx-grohe-arxiv,marx-grohe}.

We also remark that Theorems~\ref{main:isomorphism} and~\ref{simple:canonization} are straightforward corollaries of Theorem~\ref{main:canonization}. For Theorem~\ref{main:isomorphism}, we apply the algorithm of Theorem~\ref{main:canonization} to both $G_1$ and $G_2$, and verify whether the obtained terms are equal. Similarly, for Theorem~\ref{simple:canonization}, we may order the vertices of the input graph $G$
according to the order of they appearance in the canonical term. Formal proofs are provided in Section~\ref{sec:canon-cors}.

\paragraph*{Our techniques.}
We now sketch the main ideas behind the proof of Theorem~\ref{main:canonization}. The starting point is the classic algorithm of Bodlaender~\cite{Bodlaender90} that resolves isomorphism of graphs of treewidth $k$ in time $\Oh(n^{k+4.5})$. Essentially, this algorithm considers all the $(k+1)$-tuples of vertices of each of the graphs as potential bags of a tree decomposition, and tries to assemble both graphs from these building blocks in the same manner using dynamic programming.
It turns out that with a slight modification, the algorithm of Bodlaender can be extended to solve the canonization problem as well. We now direct our attention to speeding up the algorithm.

Our idea is the following: if we were able to constrain ourselves to a small enough family of potential bags, then basically the same algorithm restricted to the pruned family of states would work in FPT time. For this to work we need the family to be of size $f(k)\cdot n^c$ for some function $f$ and constant $c$. Furthermore, we would need an algorithm that given as input the graph $G$, computes this family in FPT time. Finally, we would need this family of bags to be {\em isomorphism invariant}. Informally, we want the pruned family of bags to only depend on the (unlabelled) graph $G$, and not on the labelling of vertices of $G$ given as input. A formal definition of what we mean by isomorphism invariance is given in the preliminaries.
%
%
%

Therefore, the goal is to find a family $\bagfam\subseteq 2^{V(G)}$ of potential bags that is on one hand isomorphism-invariant and reasonably small, and on the other hand it is rich enough to contain all the bags of some tree decomposition of width at most $g(k)$, for some function $g$. Coping with this task is the main contribution of this paper, and this is achieved in Theorems~\ref{thm:no-clique-seps},~\ref{thm:clique-seps}, and~\ref{thm:redsizes}. 
The idea that finding an isomorphism-invariant family of potential bags of size $f(k)\cdot n^c$ is sufficient for designing an isomorphism test was first observed by Otachi and Schweitzer~\cite{pascale}.%
\footnote{We remark that even though the work of Otachi and Schweitzer was announced almost simultaneously with our work, we believe it should be considered prior to our results. Their results were presented at a Shonan meeting as early as in May 2013~\cite{shonan}.}
However, their approach for proving this initial step is completely different.


The crucial idea of our construction of a small isomorphism-invariant family of bags 
is to start with the classic $4$-approximation algorithm for treewidth given in the Graph Minors series by Robertson and Seymour~\cite{gm13}; we also refer to the textbook of Kleinberg and Tardos~\cite{0015106} for a comprehensive exposition of this algorithm. Since a good understanding of this algorithm is necessary for our considerations, let us recall it briefly. 

The algorithm of Robertson and Seymour constructs a tree decomposition of the input graph $G$ in a top-down manner. More precisely, it is a recursive procedure that at each point maintains a separator $S$ of size at most $3k+\Oh(1)$, which separates the part of $G$ we are currently working with (call it $H$) from the rest. The output of the procedure is a tree decomposition of $H$ with the set $S$ as the top adhesion. At each recursive call the algorithm proceeds as follows. If $S$ is small, the algorithm adds to $S$ an arbitrarily chosen vertex $u\in V(H)$. If $S$ is large, the algorithm attempts  to break $S$ into two roughly equal-sized pieces using a separator $X$ of size at most $k+1$. The fact that the graph has treewidth at most $k$ ensures that such a separator $X$ will always exist. The new set $S'$, defined as $S\cup \{u\}$ or $S\cup X$, becomes the top-most bag. Below this bag we attach tree decompositions obtained from the recursive calls for instances $(H:=G[N[Z]],S:=N(Z))$, where $Z$ iterates over the family of vertex sets of the connected components of $H\setminus S'$. The crucial point is that if $S$ is large (of size roughly $3k$) and $X$ is of size at most $k+1$, then every such component $Z$ will neighbour at most $|S|$ vertices of $S'$, and hence the size of $S$ will not increase over the course of the recursion.

Our high-level plan is to modify this algorithm so that it works in an (almost) isomorphism-invariant manner, and then return all the produced bags as the family $\bagfam$. At a glance, it seems that the algorithm inherently uses two `very non-canonical' operations: adding an arbitrarily chosen vertex $u$ and breaking $S$ using an arbitrarily chosen separator $X$. 
%
Especially canonizing the choice of the separator $X$ seems like a hard nut to crack. We circumvent this obstacle in the following manner: we take $X$ to be the union of `all possible' separators that break $S$ evenly. The problem that now arises is that it is not clear how to bound the number of neighbours in $S'=S\cup X$ that can be seen from a connected component we want to recurse on; recall that we needed to bound this number by $|S|$, in order to avoid an explosion of the bag sizes throughout the course of the algorithm. The crucial technical insight of the paper, proved in Lemma~\ref{lem:magical}, is that if one defines `all possible separators' in a very careful manner, then this bound holds. The proof of Lemma~\ref{lem:magical} relies on a delicate argument that exploits submodularity of vertex separations. 


Even if the problem of canonizing the choice of $X$ is solved, we still need to cope with the arbitrary choice of $u$ in case $S$ is small. It turns out that the problem appears essentially only if the set $S$ is very well connected: for every two vertices $x,y\in S$, the vertex flow between $x$ and $y$ is more than $k$. In other words, the set $S$ constitutes a {\em{clique separator}} in the {\em{$k$-improved graph of $G$}}, i.e., a graph derived from $G$ by making every pair of vertices with vertex flow more than $k$ adjacent. It is known that for the sake of computing tree decompositions of width $k$ one can focus on the $k$-improved graph rather than the original one (see, e.g.,~\cite{Bodlaender96} and Lemma~\ref{lem:same-decomp}). Therefore, our algorithm will work without any problems provided that the $k$-improved graph of the input one does not admit any clique separators. The produced tree decomposition has width $2^{\Oh(k\log k)}$, due to a possibly exponential number of separators breaking $S$ at each step, and is isomorphism-invariant up to the choice of a single vertex from which the whole procedure begins. By running the algorithm from every possible starting point and computing the union of the families of bags of all the obtained decompositions, we obtain an isomorphism-invariant family of potential bags of size $\Oh(n^2)$. This result is obtained in Theorem~\ref{thm:no-clique-seps}, which summarizes the case when no clique separators are present.

However, the behaviour of clique separators in the graph has been well understood already in the 1980s, starting from the work of Tarjan~\cite{Tarjan85},
and studied intensively from a purely graph-theoretical viewpoint. 
It turns out that all inclusion-wise minimal clique separators of a graph form a tree-like structure, giving raise to so-called {\em{clique minimal separator decomposition}},
   which decomposes the graph into pieces that do not admit any clique separators. These pieces are often called {\em{atoms}}.
Most importantly for us, the set of atoms of a graph is isomorphism-invariant, and can be computed in polynomial time. Therefore, in the general case we can compute the clique minimal separator decomposition of the $k$-improved graph of the input graph, run the algorithm for the case of no clique separators on each atom separately, and finally output the union of all the obtained families. This result is obtained in Theorem~\ref{thm:clique-seps}. 
We refer to the introductory paper of Berry et al.~\cite{BerryPS10} for more information on clique separators.

The family $\bagfam$ given by Theorem~\ref{thm:clique-seps} is essentially already sufficient for running the modified algorithm of Bodlaender~\cite{Bodlaender90} on it, and thus resolving fixed-parameter tractability of \GI parameterized by treewidth. However, the bags contained in the family $\bagfam$ may be of size as much as $2^{\Oh(k\log k)}$. 
Our canonization algorithm considers 
every permutation of every candidate bag. Hence, this would result in a double-exponential dependence on $k$ in the running time. In Section~\ref{sec:redsizes} we demonstrate how to reduce this dependence to $2^{\text{poly}(k)}$. More precisely, we prove that instead of the original family $\bagfam$, we can consider a modified family $\bagfam'$ constructed as follows: for every $B\in \bagfam$, we replace $B$ with all the subsets of $B$ that have size $\Oh(k^4)$. Thus, every bag of $\bagfam$ gives raise to $\binom{2^{\Oh(k\log k)}}{\Oh(k^4)}=2^{\Oh(k^5\log k)}$ sets in $\bagfam'$, and hence $|\bagfam'|\leq 2^{\Oh(k^5\log k)}\cdot |\bagfam|$. However, now every bag of $\bagfam'$ has only $2^{\Oh(k^4\log k)}$ possible permutations, instead of a number that is double-exponential in $k$. In this manner, we can trade a possible explosion of the size of the constructed family for a polynomial upper bound on the cardinality of its members. This trade-off is achieved in Theorem~\ref{thm:redsizes}, and leads to a better time complexity of the canonization algorithm.

\paragraph*{Organization of the paper.} Section~\ref{sec:prelims} contains preliminaries. Sections~\ref{sec:no-cliqueseps},~\ref{sec:cliqueseps},~\ref{sec:redsizes} contain proofs of Theorems~\ref{thm:no-clique-seps},~\ref{thm:clique-seps},~\ref{thm:redsizes}, respectively. In Section~\ref{sec:canon} we utilize Theorem~\ref{thm:redsizes} to present the canonization algorithm, i.e., to prove Theorems~\ref{main:isomorphism} and~\ref{main:canonization}. In particular, Section~\ref{sec:terms} introduces the formalism of construction terms. In Section~\ref{sec:conc} we gather concluding remarks. Proofs of lemmas denoted by (\app) are straightforward, and have been moved to the appendix in order not to disturb the flow of the arguments.

\section{Preliminaries}\label{sec:prelims}
In most cases, we use standard graph notation, see e.g.~\cite{diestel}.

\paragraph*{Separations, separators, and clique separators.} We recall here standard definitions and facts about separations and separators in graphs.

\begin{definition}[\bf separation]
A pair $(A,B)$ where $A \cup B = V(G)$ is called a {\em separation}
if $E(A \setminus B, B \setminus A) = \emptyset$.
The {\em{separator}} of $(A,B)$ is $A\cap B$ and the {\em order} of a separation $(A,B)$ is $|A \cap B|$.
\end{definition}

Let $X,Y\subseteq V(G)$ be two not necessarily disjoint subsets of vertices. Then a separation $(A,B)$ is an {\em{$X-Y$ separation}} if $X\subseteq A$ and $Y\subseteq B$. The classic Menger's theorem states that for given $X,Y$, the minimum order of an $X-Y$ separation is equal to maximum vertex-disjoint flow between $X$ and $Y$ in $G$. This minimum order (denoted further $\conn(X,Y)$) can be computed in polynomial time, using for instance the Ford-Fulkerson algorithm. Moreover, among the $X-Y$ separations of minimum order there exists a unique one with inclusion-wise minimal $A$, and a unique one with inclusion-wise minimal $B$. We will call these minimum-order $X-Y$ separations {\em{pushed towards $X$}} and {\em{pushed towards $Y$}}, respectively. It is known that the Ford-Fulkerson algorithm can actually find the minimum order $X-Y$ separations that are pushed towards $X$ and $Y$ within the same running time.

For two vertices $x,y\in V(G)$, by $\conn(x,y)$ we denote the minimum order of a separation $(A,B)$ in $G$ such that $x\in A\setminus B$ and $y\in B\setminus A$. Note that if $xy\in E(G)$ then such a separation does not exist; in such a situation we put $\conn(x,y)=+\infty$. Again, classic Menger's theorem states that for nonadjacent $x$ and $y$, the value $\conn(x,y)$ is equal to the maximum number of internally vertex-disjoint $x-y$ paths that can be chosen in the graph, and this value can be computed in polynomial time using the Ford-Fulkerson algorithm. The notions of minimum-order separations pushed towards $x$ and $y$ are defined analogously as before.
If the graph we are referring to is not clear from the context, we put it in the subscript by the symbol $\conn$.

We emphasize here that, contrary to $X-Y$ separations, in an $x-y$ separation we require $x \in A \setminus B$ and $y \in B \setminus A$, that is, the separator $A \cap B$
cannot contain $x$ nor $y$. This, in particular, applies to minimum-order $x-y$ separations pushed towards $x$ or $y$.

\begin{definition}[\bf clique separation]
A separation $(A,B)$ is called a {\em{clique separation}} if $A\setminus B\neq \emptyset$, $B\setminus A\neq \emptyset$, and $A\cap B$ is a clique in $G$.
\end{definition}

Note that an empty set of vertices is also a clique, hence any separation with an empty separator is in particular a clique separation. We will say that a  graph is {\em{clique separator free}} if it does not admit any clique separation. Such graphs are often called also {\em{atoms}}, see e.g.~\cite{BerryPS10}. From the previous remark it follows that every clique separator free graph is connected.

\paragraph*{Tree decompositions.} In this paper it is most convenient to view tree decompositions of graphs as rooted. The following notation originates in Marx and Grohe~\cite{marx-grohe}, and we use an extended version borrowed from~\cite{CyganLPPS13}.

Let $T$ be a rooted tree and let $t$ be any non-root node of $T$. The parent of $t$ in $T$ will be denoted by $\parent(t)$. A node $s$ is a {\emph{descendant}} of $t$, denoted $s\preceq t$, if $t$ lies on the unique path connecting $s$ to the root. We will also say that $t$ is an {\em{ancestor}} of $s$. Note that in this notation every node is its own descendant as well as its own ancestor.

\begin{definition}[\bf tree decomposition]
A {\em tree decomposition} of a graph $G$ is a pair $(T,\beta)$, where $T$ 
is a rooted tree and $\beta \colon V(T) \to 2^{V(G)}$ is a mapping such that:
\begin{itemize}
  \item for each node $v \in V(G)$, the set $\{t \in V(G)\ |\ v \in \beta(t)\}$ induces a nonempty and connected subtree of~$T$,
  \item for each edge $e \in E(G)$, there exists $t \in V(T)$ such that $e \subseteq \beta(t)$.
\end{itemize}
\end{definition}

Sets $\beta(t)$ for $t\in V(T)$ are called the {\em{bags}} of the decomposition, while sets $\beta(s)\cap \beta(t)$ for $st\in E(T)$ are called the {\em{adhesions}}. We sometimes implicitly identify a node of $T$ with the bag associated with it. The {\em{width}} of a tree decomposition $T$ is equal to its maximum bag size decremented by one, i.e., $\max_{t\in V(T)} |\beta(t)|-1$. The {\em{adhesion width}} of $T$ is equal to its maximum adhesion size, i.e., $\max_{st\in E(T)} |\beta(s)\cap \beta(t)|$. We define also additional mappings as follows:
\begin{align*}
\gamma(t) & = \bigcup_{u\preceq t} \beta(u), \\
\sigma(t) & = \begin{cases} \emptyset & \text{if t is the root of }T \\ \beta(t) \cap \beta(\parent(t)) & \text{otherwise,}\end{cases} \\
\alpha(t) & = \gamma(t)\setminus \sigma(t).
\end{align*}
The {\em{treewidth}} of a graph, denoted $\tw(G)$, is equal to the minimum width of its tree decomposition. Let us remark that in this paper we will be mostly working with graphs of treewidth {\em{less than}} $k$, while most of the literature on the subject considers graphs of treewidth {\em{at most}} $k$. This irrelevant detail will help us avoid clumsy additive constants in many arguments.

If $\bagfam\subseteq 2^{V(G)}$ is a family of subsets of vertices, then we say that $\bagfam$ {\em{captures}} a tree decomposition $(T,\beta)$ if $\beta(t)\in \bagfam$ for each $t\in V(T)$. In this context we often call $\bagfam$ a {\em{family of potential bags}}.

Graphs of treewidth at most $k$ are known to be {\em{$k$-degenerate}}, that is, every subgraph of a graph of treewidth at most $k$ has a vertex of degree at most $k$. This in particular implies that an $n$-vertex graph of treewidth at most $k$ can have at most $kn$ edges.

Let $G$ be a graph and let $S\subseteq V(G)$. We say that a separation $(A,B)$ in $G$ is {\em{$\alpha$-balanced for $S$}} if $|(A\setminus B)\cap S|,|(B\setminus A)\cap S|\leq \alpha|S|$. The following lemma states that graphs of bounded treewidth provide balanced separations of small order.

\begin{lemma}[\cite{Bodlaender98}]\label{lem:balanced}
Let $G$ be a graph with $\tw(G)<k$ and let $S\subseteq V(G)$ be a subset of vertices. Then there exists a $\frac{2}{3}$-balanced separation for $S$ of order at most $k$.
\end{lemma}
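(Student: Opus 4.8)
The plan is to extract the balanced separation directly from a tree decomposition of $G$. Fix a tree decomposition $(T,\beta)$ of $G$ of width less than $k$, so that $|\beta(t)|\leq k$ for every $t\in V(T)$. For each $x\in S$ choose a node $\phi(x)\in V(T)$ with $x\in\beta(\phi(x))$, and give each node $t\in V(T)$ the weight $w(t)$ equal to the number of $x\in S$ with $\phi(x)=t$; then the total weight over $V(T)$ is exactly $|S|$. First I would invoke the standard weighted-centroid property of trees: there is a node $t^\star$ such that every connected component of $T\setminus\{t^\star\}$ carries $w$-weight at most $\tfrac{1}{2}|S|$. This is proved by starting at an arbitrary node and repeatedly moving towards the unique neighbour on whose side the weight exceeds half the total; such a walk never revisits a node, so it terminates, and the terminal node is $t^\star$.

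Second, I would bundle the components of $T\setminus\{t^\star\}$ into two groups, each of total $w$-weight at most $\tfrac{2}{3}|S|$. The component weights form a collection of nonnegative numbers, each at most $\tfrac{1}{2}|S|$, summing to at most $|S|$, and any such collection admits a $2$-partition with both parts of weight at most $\tfrac{2}{3}|S|$: take a $2$-partition minimising the larger part, and observe that if the larger part exceeded $\tfrac{2}{3}|S|$ then it would contain at most two components (three components each exceeding $\tfrac13|S|$ would already total more than $|S|$), and moving a suitable one of them to the lighter part would not increase the maximum, a contradiction. Write $L$ and $R$ for the unions of the components placed in the two bundles.

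Third, I would set
\[
A=\beta(t^\star)\cup\bigcup_{t\in L}\beta(t),\qquad B=\beta(t^\star)\cup\bigcup_{t\in R}\beta(t).
\]
Then $A\cup B=V(G)$, and since every edge of $G$ is contained in a single bag, every edge lies entirely in $A$ or entirely in $B$, so $(A,B)$ is a separation. If a vertex lies both in a bag of an $L$-component and in a bag of an $R$-component, then, as the nodes whose bags contain it induce a connected subtree of $T$ and every $L$-to-$R$ path in $T$ passes through $t^\star$, that vertex lies in $\beta(t^\star)$; hence $A\cap B\subseteq\beta(t^\star)$ and the order of $(A,B)$ is at most $k$. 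Finally, if $x\in(A\setminus B)\cap S$ then $x\notin\beta(t^\star)$ and $x$ lies in no bag of an $R$-component, so $\phi(x)$ must lie in an $L$-component; therefore $|(A\setminus B)\cap S|\leq\sum_{t\in L}w(t)\leq\tfrac{2}{3}|S|$, and symmetrically $|(B\setminus A)\cap S|\leq\tfrac{2}{3}|S|$. Thus $(A,B)$ is $\tfrac{2}{3}$-balanced for $S$ and has order at most $k$.

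The only step requiring genuine care is the bundling argument in the second paragraph; the constant $\tfrac{2}{3}$ is exactly what the exchange argument yields and cannot be lowered in general, since two components of weight $\tfrac12|S|$ already force one side of any partition to reach $\tfrac12|S|$. Everything else is a routine unwinding of the tree-decomposition axioms, so I do not expect a real obstacle.
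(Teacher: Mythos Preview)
The paper does not prove this lemma at all: it is quoted from Bodlaender's survey~\cite{Bodlaender98} and used as a black box, so there is no ``paper's own proof'' to compare against. Your argument is the standard one (weighted centroid in the decomposition tree, then group the components of $T\setminus\{t^\star\}$ into two sides), and it is correct in outline.

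One step deserves a bit more care than you give it. In the bundling paragraph you assert that if the larger part $P$ of a minimising $2$-partition has weight exceeding $\tfrac{2}{3}|S|$ then $P$ contains at most two components, justified by ``three components each exceeding $\tfrac{1}{3}|S|$ would already total more than $|S|$''. But you have not yet argued that every component in $P$ exceeds $\tfrac{1}{3}|S|$; that is exactly where minimality must be invoked (moving any component $c$ from $P$ to $Q$ cannot decrease the maximum, hence $w(Q)+w(c)\ge w(P)$, giving $w(c)\ge w(P)-w(Q)>\tfrac{1}{3}|S|$). After that, one component alone would violate the $\tfrac{1}{2}|S|$ bound, and with two components moving the \emph{smaller} one to $Q$ strictly decreases the maximum (not merely ``does not increase'' it), yielding the contradiction. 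With these two sentences inserted the proof is complete.
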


\paragraph*{Improved graph.} For a positive integer $k$, we say that a graph $H$ is {\em{$k$-complemented}} if the implication $(\conn_H(x,y)\geq k) \Rightarrow (xy\in E(H))$ holds for every pair of vertices $x,y\in V(H)$. For every graph $G$ we can construct a {\em{$k$-improved graph}} $\imp{G}{k}$ by having $V(\imp{G}{k})=V(G)$ and $xy\in E(\imp{G}{k})$ if and only if $\conn_G(x,y)\geq k$. Observe that $\imp{G}{k}$ is a supergraph of $G$, since $\conn_G(x,y)=+\infty$ for all $x,y$ that are adjacent in $G$. Moreover, observe that every $k$-complemented supergraph of $G$ must be also a supergraph of $\imp{G}{k}$. It appears that $\imp{G}{k}$ is $k$-complemented itself, and hence it is the unique minimal $k$-complemented supergraph of $G$.

\begin{lemma}[\cite{Bodlaender03,ClautiauxCMN03},\app]
\label{lem:improval}
For every graph $G$ and positive integer $k$, the graph $\imp{G}{k}$ is $k$-complemented. Consequently, it is the unique minimal $k$-complemented supergraph of $G$.
\end{lemma}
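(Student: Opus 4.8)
The plan is to show that for every pair of nonadjacent vertices $x,y \in V(\imp{G}{k})$ we have $\conn_{\imp{G}{k}}(x,y) < k$; contrapositively this gives $\conn_{\imp{G}{k}}(x,y) \geq k \Rightarrow xy \in E(\imp{G}{k})$, which is exactly $k$-complementedness. The second sentence of the statement is then immediate: any $k$-complemented supergraph $H$ of $G$ satisfies $\conn_H(x,y) \geq \conn_G(x,y)$ (adding edges cannot decrease vertex connectivity), hence $\conn_G(x,y) \geq k$ forces $\conn_H(x,y) \geq k$ and so $xy \in E(H)$; thus $H \supseteq \imp{G}{k}$, and since $\imp{G}{k}$ is itself $k$-complemented it is the unique minimal such supergraph.

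So the real content is the first claim. First I would fix nonadjacent $x,y$ in $\imp{G}{k}$, which in particular means $xy \notin E(G)$ and $\conn_G(x,y) < k$. Take a minimum-order $x$--$y$ separation $(A,B)$ in $G$, so $S := A \cap B$ has $|S| = \conn_G(x,y) \leq k-1$, with $x \in A \setminus B$ and $y \in B \setminus A$. The natural goal is to argue that $S$ is still an $x$--$y$ separator in $\imp{G}{k}$, i.e. that no edge of $\imp{G}{k}$ jumps between $A \setminus B$ and $B \setminus A$; then $\conn_{\imp{G}{k}}(x,y) \leq |S| \leq k-1 < k$ and we are done. Such a jumping edge would be a pair $u \in A \setminus B$, $v \in B \setminus A$ with $\conn_G(u,v) \geq k$. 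The key step is to derive a contradiction from this: since $(A,B)$ is a separation of $G$ with separator $S$ of size $\leq k-1$, and $u,v$ lie strictly on opposite sides, $(A,B)$ is itself a $u$--$v$ separation of $G$ (note $u \notin S$ and $v \notin S$, so the strictness condition $u \in A\setminus B$, $v \in B \setminus A$ holds), giving $\conn_G(u,v) \leq |S| \leq k-1 < k$, contradicting $\conn_G(u,v) \geq k$.

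The main obstacle — really the only subtlety — is bookkeeping about the side conditions in the definition of $x$--$y$ separations: one must be careful that $u$ and $v$ genuinely lie in $A\setminus B$ and $B\setminus A$ respectively and not in the separator $S$, so that $(A,B)$ legitimately witnesses a small $u$--$v$ separation. This is exactly the distinction the preliminaries emphasize between $X$--$Y$ separations and $x$--$y$ separations. Once that is handled the argument is a one-line submodularity-free contradiction. A symmetric remark: the argument also implicitly uses that $\conn_G(\cdot,\cdot)$ for adjacent pairs is $+\infty$, so the $k$-complementedness implication is vacuous for edges of $G$ and we only ever need to worry about nonadjacent pairs. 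I would present the whole thing as: (i) reduce to showing small $x$--$y$ connectivity in $\imp{G}{k}$ for nonadjacent $x,y$; (ii) the separator witnessing small $\conn_G(x,y)$ also works in $\imp{G}{k}$, by the contradiction above; (iii) deduce uniqueness and minimality from monotonicity of connectivity under edge addition.
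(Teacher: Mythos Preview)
Your proposal is correct and is essentially the same argument as the paper's proof: take a small $x$--$y$ separation $(A,B)$ in $G$ witnessing $\conn_G(x,y)<k$, observe that for any $u\in A\setminus B$ and $v\in B\setminus A$ the same separation certifies $\conn_G(u,v)<k$ so $uv\notin E(\imp{G}{k})$, hence $(A,B)$ is also a separation in $\imp{G}{k}$ and $\conn_{\imp{G}{k}}(x,y)<k$. You additionally spell out the ``Consequently'' clause via monotonicity of $\conn$ under adding edges, which the paper only states informally before the lemma; this is a welcome clarification but not a different route.
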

For completeness we give a proof of Lemma~\ref{lem:improval} in the appendix. The following lemma formally relates tree decompositions of a graph and its improved graph, and states that for the sake of computing a tree decomposition of small width we may focus on the improved graph. The proof (given in the appendix) is basically an application of a simple fact that two vertices $x,y$ with $\conn(x,y)\geq k$ must be simultaneously present in some bag of every tree decomposition of width smaller than $k$.

\begin{lemma}[\app]\label{lem:same-decomp}
Let $k$ be a positive integer and let $G$ be a graph. Then every tree decomposition $(T,\beta)$ of $G$ that has width less than $k$, is also a tree decomposition of $\imp{G}{k}$. In particular, if $\tw(G)<k$ then $\tw(G)=\tw(\imp{G}{k})$.
\end{lemma}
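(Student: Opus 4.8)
The plan is to notice that $\imp{G}{k}$ is obtained from $G$ by adding precisely the edges $xy$ with $\conn_G(x,y)\geq k$, and that any such pair $x,y$ is \emph{already forced} to occur together in some bag of every tree decomposition of $G$ of width less than $k$. Since the vertex set does not change, the first (connectivity) axiom of a tree decomposition transfers verbatim, and the only thing that needs checking is the edge axiom for the newly added edges. Everything then reduces to one combinatorial claim about tree decompositions.

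Concretely, I would prove: if $(T,\beta)$ is a tree decomposition of $G$ of width less than $k$ and $\conn_G(x,y)\geq k$, then $x,y\in\beta(t)$ for some node $t$. Suppose not. Then $T_x=\{t:x\in\beta(t)\}$ and $T_y=\{t:y\in\beta(t)\}$ are disjoint, nonempty, and each induces a connected subtree of $T$. Let $s$ be the endpoint in $T_x$ of the unique shortest path in $T$ joining $T_x$ to $T_y$, and let $t$ be the next node on that path, so $st\in E(T)$, $s\in T_x$, $t\notin T_x$, and all of $T_y$ lies in the component $C_t$ of $T-st$ not containing $s$; write $C_s$ for the other component, $V_s=\bigcup_{u\in C_s}\beta(u)$ and $V_t=\bigcup_{u\in C_t}\beta(u)$. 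A routine check shows $(V_s,V_t)$ is a separation of $G$ with separator $\beta(s)\cap\beta(t)$. Since $x\in\beta(s)\subseteq V_s$ while no bag on the $C_t$ side contains $x$ (that side avoids $T_x$), and symmetrically for $y$, we get $x\in V_s\setminus V_t$ and $y\in V_t\setminus V_s$, so $(V_s,V_t)$ is an $x-y$ separation. Its order is $|\beta(s)\cap\beta(t)|$, and because $x\in\beta(s)\setminus(\beta(s)\cap\beta(t))$ and $|\beta(s)|\leq k$ (width less than $k$), that order is at most $k-1$, contradicting $\conn_G(x,y)\geq k$.

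Given the claim, the lemma follows: $\conn_G(x,y)=+\infty$ for adjacent $x,y$, so $E(G)\subseteq E(\imp{G}{k})$ and $V(\imp{G}{k})=V(G)$, hence the connectivity axiom for $(T,\beta)$ over $\imp{G}{k}$ is inherited; and every $xy\in E(\imp{G}{k})$ either lies in $E(G)$ (covered by a bag since $(T,\beta)$ decomposes $G$) or has $\conn_G(x,y)\geq k$ (covered by a bag by the claim). Thus $(T,\beta)$ is a tree decomposition of $\imp{G}{k}$. For the last sentence, if $\tw(G)<k$, applying this to an optimal tree decomposition of $G$ gives $\tw(\imp{G}{k})\leq\tw(G)$, while $\imp{G}{k}\supseteq G$ gives the reverse inequality, so the two treewidths coincide. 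The only genuinely delicate point is choosing the edge $st$ so that $s\in T_x$: this is exactly what makes $\beta(s)\cap\beta(t)$ a \emph{proper} subset of $\beta(s)$, hence of size at most $k-1$ rather than merely $k$ — the slack needed for the contradiction. The rest (that $(V_s,V_t)$ is a separation, and that the two axioms transfer) is bookkeeping.
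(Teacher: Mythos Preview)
Your proof is correct and follows essentially the same route as the paper's: both reduce to showing that any pair $x,y$ with $\conn_G(x,y)\geq k$ must share a bag, argue by contradiction by finding a tree edge separating the (disjoint, connected) subtrees $T_x$ and $T_y$, and exploit that $x$ lies in the bag on one endpoint but not in the adhesion to get order at most $k-1$. Your explicit emphasis on choosing $s\in T_x$ so that $\beta(s)\cap\beta(t)\subsetneq\beta(s)$ is exactly the key step the paper uses (there phrased as $x\in\beta(t_x)\setminus\beta(t_y)$).
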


The idea of focusing on the improved graph dates back to the work of Bodlaender on a linear time FPT algorithm for treewidth~\cite{Bodlaender96}. Actually, Bodlaender was using a weaker variant an improved graph, where an edge is added only if the vertices in question share at least $k$ common neighbors. The main point was that this weaker variant can be computed in linear time~\cite{Bodlaender96} with respect to the size of the graph. In our work we use the stronger variant, and we can afford spending more time on computing the improved graph.

\begin{lemma}[\app]\label{lem:comp-imp}
There exists an algorithm that, given a positive integer $k$ and a graph $G$ on $n$ vertices, works in $\Oh(k^2n^3)$ time and either correctly concludes that $\tw(G)\geq k$, or computes $\imp{G}{k}$. 
\end{lemma}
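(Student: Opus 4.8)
The plan is to compute $\imp{G}{k}$ directly from the definition, being careful to detect the case $\tw(G) \geq k$ early enough that we never spend too much time. Recall that $xy \in E(\imp{G}{k})$ if and only if $\conn_G(x,y) \geq k$, where for adjacent $x,y$ we have $\conn_G(x,y) = +\infty \geq k$. So the edges already present in $G$ survive, and for each nonadjacent pair $\{x,y\}$ we must decide whether the maximum number of internally vertex-disjoint $x$--$y$ paths is at least $k$. This is a single max-flow computation: split every vertex other than $x,y$ into an in-copy and an out-copy joined by a unit-capacity arc, give all original edges unit capacity in both directions, and run Ford--Fulkerson from $x$ to $y$. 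Crucially, we do not need the exact value of $\conn_G(x,y)$ — we only need to know whether it reaches $k$ — so we run at most $k$ augmentation rounds and stop as soon as we have found $k$ augmenting paths (declaring $xy \in E(\imp{G}{k})$) or we run out of augmenting paths earlier (declaring $xy \notin E(\imp{G}{k})$).

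The key cost-control step exploits Lemma~\ref{lem:same-decomp} together with degeneracy. If $\tw(G) < k$, then by Lemma~\ref{lem:same-decomp} we have $\tw(\imp{G}{k}) = \tw(G) < k$, so $\imp{G}{k}$ is $k$-degenerate and therefore has at most $kn$ edges; in particular $G$ itself has at most $kn$ edges. So before doing anything expensive we check $|E(G)| \leq kn$: if this fails we output ``$\tw(G) \geq k$'' and stop. Having at most $kn$ edges, each Ford--Fulkerson augmentation round (a BFS/DFS in the $\Oh(n + kn)$-size split graph) runs in $\Oh(kn)$ time, so the $\leq k$ rounds for a single pair cost $\Oh(k^2 n)$. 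Running this over all $\Oh(n^2)$ pairs would give $\Oh(k^2 n^3)$, which matches the claimed bound — but we must also guard against $\imp{G}{k}$ turning out to have more than $kn$ edges: we maintain a running count of edges added to $\imp{G}{k}$ (starting from $|E(G)|$), and the moment it exceeds $kn$ we again conclude $\tw(G) \geq k$ and halt. Since in the ``good'' case we never add more than $kn$ edges total, and in the ``bad'' case we halt as soon as the threshold is crossed, this extra bookkeeping does not affect the asymptotic running time.

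For correctness of the output graph: whenever the algorithm actually produces a graph $H$ (i.e.\ does not bail out), it produces exactly $\{xy : \conn_G(x,y) \geq k\}$ on vertex set $V(G)$, which is $\imp{G}{k}$ by definition. For correctness of the rejections: if the edge-count threshold is exceeded — either $|E(G)| > kn$ at the start, or the edge count of the partially-built $\imp{G}{k}$ exceeds $kn$ — then, since $\imp{G}{k} \supseteq G$ and (were $\tw(G) < k$) $\imp{G}{k}$ would be $k$-degenerate with at most $kn$ edges, we would have a contradiction; hence $\tw(G) \geq k$ and the rejection is justified. The main thing to watch is that the two termination conditions really do cap the work: the flow computations run in $\Oh(k^2 n)$ each because we bounded both the number of augmenting rounds (by $k$) and the size of the flow network (by $\Oh(kn)$, which is where the edge-count guard is essential), so the only genuinely delicate point is interleaving the edge-count check tightly enough with the pairwise flow computations that a dense $\imp{G}{k}$ cannot force us past the $\Oh(k^2 n^3)$ budget — and that is handled by checking the counter after each pair rather than only at the end.
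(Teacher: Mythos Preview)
Your proposal is correct and follows essentially the same approach as the paper: check that $G$ is sparse enough (the paper uses the slightly tighter threshold $(k-1)n$ via $(k-1)$-degeneracy), then for each pair run at most $k$ augmentation rounds of Ford--Fulkerson in the $\Oh(kn)$-size graph. One remark: the running edge-count guard on the partially built $\imp{G}{k}$ is superfluous, since all flow computations are performed in $G$ and their total cost is already $\Oh(k^2 n^3)$ regardless of how many edges $\imp{G}{k}$ ends up having; the paper accordingly omits this step.
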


\newcommand{\obj}{\mathcal{D}}
\newcommand{\strobj}{\mathcal{E}}

\paragraph{Isomorphisms and isomorphism-invariance.} We say that two graphs $G_1,G_2$ are {\em{isomorphic}} if there exists a bijection $\phi\colon V(G_1)\to V(G_2)$, called {\em{isomorphism}}, such that $xy\in E(G_1)\Leftrightarrow \phi(x)\phi(y)\in E(G_2)$ for all $x,y\in V(G_1)$. We will often say that some object $\obj(G)$ (e.g., a set of vertices, a family of sets of vertices), whose definition depends on the graph, is {\em{isomorphism-invariant}} or {\em{canonical}}. By this we mean that for any graph $G'$ that is isomorphic to $G$ with isomorphism $\phi$, it holds that $\obj(G')=\phi(\obj(G))$, where $\phi(\obj(G))$ denotes the object $\obj(G)$ with all the vertices of $G$ replaced by their images under $\phi$. The precise meaning of this term will be always clear from the context. Most often, isomorphism-invariance of some definition or of the output of some algorithm will be obvious, since the description does not depend on the internal representation of the graph, nor uses any tie-breaking rules for choosing arbitrary objects.

We also extend the notion of isomorphism-invariance to {\em{structures}}, which are formed by the considered graph $G$ together with some object $\strobj$ (e.g., a vertex, a set of vertices, a subgraph). Such structures usually represent the graph together with an initial set of choices made by some algorithm, for instance the starting vertex from which the construction of a tree decomposition begins. We say that some object $\obj(G,\strobj)$, whose definition is dependant on the structure $(G,\strobj)$, is {\em{invariant under isomorphism of $(G,\strobj)$}}, if $\obj(G',\strobj')=\phi(\obj(G,\strobj))$ for any structure $(G',\strobj')$ such that $\phi$ is an isomorphism between $G$ and $G'$ that additionally satisfies $\phi(\strobj)=\strobj'$. Again, the precise definition of this term will be always clear from the context.

\section{The case of no clique separators}\label{sec:no-cliqueseps}
Let $G$ be graph and let $S\subseteq V(G)$ be any subset of vertices. The following definition will be a crucial technical ingredient in our reasonings.

\begin{definition}
Suppose that $(S_L,S_R)$ is a partition of $S$. We say that a separation $(A,B)$ of $G$ is {\em{stable for $(S_L,S_R)$}} if $(A,B)$ is a minimum-order $S_L-S_R$ separation. A separation $(A,B)$ is {\em{$S$-stable}} if it is stable for some partition of $S$.
\end{definition}

\newcommand{\sepfam}{\mathcal{F}}

Note that $(S_L,V(G))$ and $(V(G),S_R)$ are both $S_L-S_R$ separations, and they have orders $|S_L|$ and $|S_R|$, respectively. Hence, if $(A,B)$ is a separation that is stable for $(S_L,S_R)$, then in particular $|A\cap B|\leq \min(|S_L|,|S_R|)$.

The following lemma will be the main tool for constructing an isomorphism invariant family of candidate bags.

\begin{lemma}\label{lem:magical}
Let $G$ be a graph, let $S\subseteq V(G)$ be any subset of vertices, and let $\sepfam$ be any finite family of $S$-stable separations. Define
$$X:=S\cup \bigcup_{(A,B)\in \sepfam} A\cap B.$$
Suppose that $Z$ is the vertex set of any connected component of $G\setminus X$. Then $|N(Z)|\leq |S|$.
\end{lemma}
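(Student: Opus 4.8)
The key idea is to argue by contradiction: suppose $|N(Z)| > |S|$ for some connected component $Z$ of $G \setminus X$. Since $Z$ is disjoint from $X \supseteq S$, every vertex of $N(Z)$ lies in $X \setminus S$, hence in $A \cap B$ for some separation $(A,B) \in \sepfam$. The plan is to show that this forces one of the separations in $\sepfam$ to fail to be of minimum order, which contradicts stability.

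Let me think about the structure. Each $(A,B) \in \sepfam$ is stable for some partition $(S_L, S_R)$ of $S$. Since $Z$ is connected in $G \setminus X$ and $A \cap B \subseteq X$, the component $Z$ lies entirely on one side: either $Z \subseteq A \setminus B$ or $Z \subseteq B \setminus A$ (it cannot meet both sides since $(A,B)$ is a separation and $A\cap B \subseteq X$ is disjoint from $Z$). WLOG say $Z \subseteq A \setminus B$ for a given such separation. Now the plan is to build a "better" separation. Consider modifying $(A,B)$ by moving the part of $N(Z)$ that lies in $A \cap B$ — more precisely, the plan is to exploit submodularity. Set $A' = A \setminus Z$ restricted appropriately; one wants a separation whose separator is $(A\cap B) \setminus N(Z) \cup (\text{something on the boundary of } Z)$... but the cleanest route is: since $N(Z) \subseteq X \setminus S$, we have $N(Z) \cap S = \emptyset$, so $S \subseteq (A\setminus N(Z)) \cup (B \setminus N(Z))$ with $S$ split the same way, i.e.\ $S_L \subseteq A \setminus (B \cup Z)$ and $S_R \subseteq B \setminus A$.

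Here is the concrete plan. Pick a separation $(A,B)\in\sepfam$ stable for $(S_L,S_R)$ with $Z\subseteq A\setminus B$ (such exists because each vertex of $N(Z)$ witnesses membership in some $A\cap B$, and $Z$ lies on one side; if $Z$ lies on the $B$-side for all of them, swap roles via the symmetric argument). Consider the separation $(A', B')$ where $B' = B \cup N[Z]$ and $A' = A \setminus Z$. Then $A' \cup B' = V(G)$, and $E(A'\setminus B', B'\setminus A') = \emptyset$ since any edge leaving $Z$ goes to $N(Z) \subseteq A'\cap B'$, and the old separator $A\cap B$ still separates the rest. The new separator is $A'\cap B' = ((A\cap B)\setminus Z) \cup N(Z) \subseteq (A\cap B) \cup N(Z)$; more carefully $A' \cap B' = ((A\cap B) \cup N(Z)) \setminus Z = (A\cap B) \cup N(Z)$ since $Z\cap(A\cap B)=\emptyset$. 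Now $(A',B')$ is still an $S_L - S_R$ separation: $S_L \subseteq A\setminus B$, and $S_L \cap Z = \emptyset$ (as $S\cap Z=\emptyset$) and $S_L \cap N(Z) = \emptyset$, so $S_L \subseteq A' \setminus B'$; similarly $S_R \subseteq B \subseteq B'$ and $S_R \cap A' = \emptyset$. Its order is $|A\cap B| + |N(Z) \setminus (A\cap B)|$. Symmetrically, pushing $Z$ to the $A$-side yields a separation $(A'', B'')$ with $A'' = A$, $B'' = B\setminus Z$, separator $(A\cap B)\setminus Z$... wait, that doesn't add $N(Z)$. Let me redo: to get the contradiction I should instead compare against the minimum more cleverly — take the separation that results from replacing $A\cap B$ by $N(Z)$ together with the part of $A \cap B$ that is "inside" $A\setminus(B\cup Z)$ side's closure.

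The real mechanism — let me fix it. Since $(A,B)$ has minimum order among $S_L-S_R$ separations, and $(A' , B') = (A\setminus Z,\ B\cup N[Z])$ is also an $S_L - S_R$ separation, we get $|A\cap B| \le |A'\cap B'| = |(A\cap B)\setminus Z| + |N(Z)| - |N(Z)\cap (A\cap B)| = |A\cap B| + |N(Z)\setminus(A\cap B)|$, which is vacuous. So I need the \emph{other} direction: consider $(A^\ast, B^\ast)$ where $A^\ast = (A\setminus (N(Z)\cup Z)) \cup (A\cap B)$... The productive move is: the component $Z$ together with $N(Z)$ sits inside $A$; the set $A\cap B$ does \emph{not} separate $Z$ from $S_R$ any better than $N(Z)$ does, because every $S_L$–$S_R$ path through $Z$ must pass through $N(Z)$. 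Formally, define $A^\ast := N[Z]$ and $B^\ast := (V(G)\setminus Z)$, which is an $S_L'$–$S_R$-type separation only if $S_L\subseteq N[Z]$, which need not hold. So that's wrong too.

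The correct and clean argument, which I'll now commit to: assume for contradiction $|N(Z)|>|S|$. Build the separation $(A_0, B_0)$ with $B_0 = V(G)\setminus Z$ and $A_0 = N[Z]$, so $A_0\cap B_0 = N(Z)$, order $>|S|$. Now for each $(A,B)\in\sepfam$ stable for $(S_L,S_R)$, since $|A\cap B|\le\min(|S_L|,|S_R|)\le|S|<|N(Z)| = |A_0\cap B_0|$, combine $(A,B)$ with $(A_0,B_0)$ via submodularity of the separator function: $|A\cap B| + |A_0\cap B_0| \ge |(A\cap A_0)\cap(B\cap B_0)\text{-type}|$... Specifically, using the standard fact that for separations $(A,B),(C,D)$ one has $|A\cap B| + |C\cap D| \ge |(A\cup C)\cap(B\cap D)| + |(A\cap C)\cap(B\cup D)|$ (submodularity), applied with $(C,D) = (A_0,B_0)$ or its reverse, one side of this will be an $S_L-S_R$ separation of order strictly less than $|A\cap B|$ — using $|A_0\cap B_0| > |S| \ge |A\cap B|$ together with the fact that $N(Z)$ crosses $A\cap B$ — contradicting minimality. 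The main obstacle, and the heart of the argument, is choosing the right orientation so that the submodular split produces a \emph{valid} $S_L-S_R$ separation and the counting genuinely drops; this is where I expect the "delicate argument exploiting submodularity" mentioned in the introduction to be needed, and I would organize the proof as: (1) reduce to a single offending $(A,B)$ and fix the side containing $Z$; (2) set up $(A_0,B_0) = (N[Z], V(G)\setminus Z)$; (3) verify both $(A\cap A_0, B\cup B_0)$ and $(A\cup A_0, B\cap B_0)$ are separations, pick the one respecting $S_L$ on the left and $S_R$ on the right; (4) apply submodularity and use $|N(Z)| > |S| \ge |A\cap B|$ to get $|A'\cap B'| < |A\cap B|$, the contradiction.
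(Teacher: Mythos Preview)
Your proposal has a genuine gap at the step ``reduce to a single offending $(A,B)$.'' You never explain how to do this, and in fact it cannot be done: the vertices of $N(Z)$ can be spread across the separators of many different separations in $\sepfam$, with no single $(A,B)$ responsible for more than a few of them. Consequently, your final submodularity argument collapses. Concretely, with $Z \subseteq A\setminus B$ and $(A_0,B_0)=(N[Z],V(G)\setminus Z)$ one has $N(Z)\subseteq A$ (neighbours of $Z$ cannot lie in $B\setminus A$), and the two ``uncrossed'' separations $(A\cap A_0,B\cup B_0)$ and $(A\cup A_0,B\cap B_0)$ have separators exactly $N(Z)$ and $A\cap B$ respectively. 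Submodularity then becomes the trivial equality $|A\cap B|+|N(Z)|\ge |N(Z)|+|A\cap B|$, and minimality of $(A,B)$ gives only $|A\cap B|\le |A\cap B|$. No contradiction arises from any single separation. (Also, a minor slip: it is not true that $N(Z)\subseteq X\setminus S$; $N(Z)$ may well meet $S$.)

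The paper's proof supplies exactly the missing idea: induction on $|\sepfam|$. Write $\sepfam=\sepfam'\cup\{(A,B)\}$, let $X'$ be the union for $\sepfam'$, and let $Z'\supseteq Z$ be the component of $G\setminus X'$ containing $Z$; the inductive hypothesis gives $|N(Z')|\le|S|$. Now set $(C,D)=(V(G)\setminus Z',\,N[Z'])$ so that $C\cap D=N(Z')$, and cross it with $(A,B)$. One checks that $(A\cup D,\,B\cap C)$ is still an $S_L$--$S_R$ separation (since $S\subseteq C$), so minimality of $(A,B)$ bounds its order from below; unpacking this in the $3\times 3$ grid of intersections yields $|N(Z)|\le|N(Z')|\le|S|$. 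The point is that the single removed separation is compared not against the raw $(N[Z],V(G)\setminus Z)$ but against the inductively controlled $(V(G)\setminus Z',N[Z'])$, which carries the accumulated information about all the other separations in $\sepfam$.
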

\begin{proof}
We proceed by induction w.r.t. $|\sepfam|$. For $\sepfam=\emptyset$ we have $N(Z)\subseteq X=S$, so the claim is trivial.

Assume then that $\sepfam=\sepfam'\cup \{(A,B)\}$ for some family $\sepfam'$ with $|\sepfam'|<|\sepfam|$, and let $X'$ be defined in the same manner for $\sepfam'$ as $X$ is for $\sepfam$. As $(A,B)$ is $S$-stable, there is some a partition $(S_L,S_R)$ of $S$ such that $(A,B)$ is a minimum-order $S_L-S_R$ separation. Let $Z$ be the vertex set of any connected component of $G\setminus X$, and let $Z'\supseteq Z$ be the vertex set of the connected component of $G\setminus X'$ that contains $Z$. Since $G[Z]$ is connected and $Z\cap (A\cap B)=\emptyset$, we have that either $Z\subseteq A\setminus B$ or $Z\subseteq B\setminus A$; without loss of generality assume the former. 

Let $(C,D)=(V(G)\setminus Z',N[Z'])$. Observe that $(C,D)$ is a separation in $G$. Moreover, since $Z'\cap S=\emptyset$, then $(C,D)$ is a $S-Z'$ separation, so in particular also a $S-Z$ separation. Furthermore, observe that $C\cap D=N(Z')$, so by the induction hypothesis we obtain $|C\cap D|= |N(Z')|\leq |S|$. We can partition $V(G)$ into $9$ parts, where the part a vertex belongs to depends on its membership to $A\setminus B$, $A\cap B$, or $B\setminus A$, and its membership to $C\setminus D$, $C\cap D$, or $D\setminus C$. Let us call these parts $Q_{1,1}, Q_{1,2},\ldots,Q_{3,3}$, as depicted on Figure~\ref{fig:magical}.

\begin{figure}[htbp!]
                \centering
                \def\svgwidth{0.8\columnwidth}
                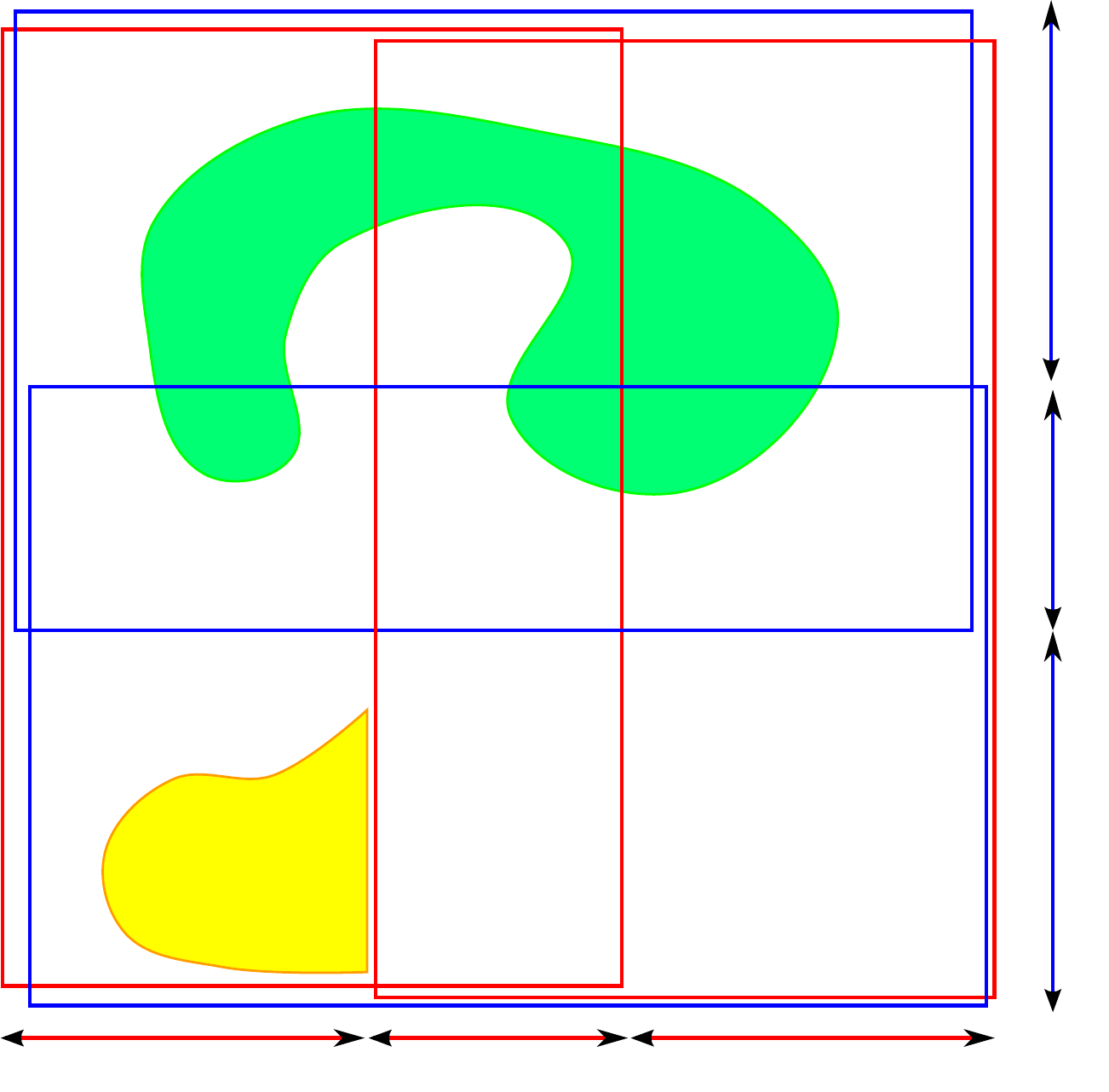
\caption{Sets in the proof of Lemma~\ref{lem:magical}}\label{fig:magical}
\end{figure}

Observe now that $Z\subseteq (A\setminus B)\cap (D\setminus C)=Q_{1,3}$. Since $X\setminus X'\subseteq A\cap B$ and $C\cap D=N(Z')\subseteq X'$, we have that $N(Z)\subseteq (A\cap C\cap D)\cup (Z'\cap A\cap B)=Q_{1,2}\cup Q_{2,2}\cup Q_{2,3}$. On the other hand, by induction hypothesis we have $|N(Z')|=|Q_{1,2}\cup Q_{2,2}\cup Q_{3,2}|\leq |S|$. Hence, to prove that $|N(Z)|\leq |S|$, it suffices to prove that $|Q_{1,2}\cup Q_{2,2}\cup Q_{2,3}|\leq |Q_{1,2}\cup Q_{2,2}\cup Q_{3,2}|$, or, equivalently, $|Q_{2,3}|\leq |Q_{3,2}|$.

To this end, consider a pair of subsets $(L,R)=(A\cup D,B\cap C)$. We first claim that $(L,R)$ is a separation. Indeed, if $u\in L\setminus R=Q_{1,1}\cup Q_{1,2}\cup Q_{1,3}\cup Q_{2,3}\cup Q_{3,3}$ and $v\in R\setminus L=Q_{3,1}$, then existence of an edge $uv$ would contradict the fact that both $(A,B)$ and $(C,D)$ are separations. Now we claim that $(L,R)$ is a $S_L-S_R$ separation. Indeed, we have $S_L\subseteq A\subseteq L$, and moreover $S_R\subseteq B$ and $S_R\subseteq S\subseteq C$ implies that $S_R\subseteq B\cap C=R$.
Since $(A,B)$ is a minimum-order $S_L-S_R$ separation, we have
$$|Q_{2,1}\cup Q_{2,2}\cup Q_{3,2}| = |L \cap R| \geq |A \cap B| = |Q_{2,1}\cup Q_{2,2}\cup Q_{2,3}|.$$
Hence indeed $|Q_{2,3}|\leq |Q_{3,2}|$ and we are done.
\end{proof}

\newcommand{\bagthresh}{\tau}
\newcommand{\bigadh}{\rho}
\newcommand{\bigbag}{\zeta}

Lemma~\ref{lem:magical} is used in the following result, which encapsulates one step of the construction of an isomorphism-invariant family of candidate bags. In the sequel, we will use the following parameters: 
\begin{eqnarray*}
\bagthresh & = & 6k, \\
\bigadh & = & \bagthresh+2(k-1)\cdot \binom{\bagthresh}{2} = \Oh(k^3), \\
\bigbag & = & \bigadh+2k\cdot\binom{\bigadh}{k+1}^2 = 2^{\Oh(k\log k)}. 
\end{eqnarray*}

\begin{lemma}\label{lem:local-decomposition}
Let $k$ be a positive integer and let $H$ be a connected graph that is $k$-complemented. Let $S\subseteq V(H)$ be a subset of vertices such that (a) $\emptyset\neq S\subsetneq V(H)$, (b) $|S|\leq \bigadh$, (c) $S$ does {\em{not}} induce a clique, (d) $H\setminus S$ is connected, and (e) $S=N_H(V(H)\setminus S)$. There exists an algorithm that either correctly concludes that $\tw(H)\geq k$, or finds a set $X$ with the following properties:
\begin{enumerate}[(i)]
\item\label{pr:progress} $X\supsetneq S$, that is, $X$ is a proper superset of $S$;
\item\label{pr:bagbound} $|X|\leq \bigbag$; and
\item\label{pr:adhbound} if $Z$ is the vertex set of any connected component of $H\setminus X$, then $|N(Z)|\leq \bigadh$.
\end{enumerate}
The algorithm runs in $2^{\Oh(k\log k)}\cdot |V(H)|$ time and the constructed set $X$ is invariant with respect to isomorphisms of the structure $(H,S)$.
\end{lemma}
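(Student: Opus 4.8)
The algorithm performs one step of the Robertson and Seymour treewidth approximation, replacing its two non-canonical choices --- adding an arbitrary vertex when $S$ is small, and breaking $S$ with an arbitrary separator when $S$ is large --- by canonical ones, and returns the resulting enlarged set $X$. As a first sanity check, if $|E(H)|>(k-1)|V(H)|$ we report $\tw(H)\ge k$, since graphs of treewidth less than $k$ are $(k-1)$-degenerate; so from now on $|E(H)|=\Oh(k|V(H)|)$. We split into two cases according to the size of $S$. \emph{Case 1: $|S|\le\bagthresh$.} By (c) the set $S$ contains a nonadjacent pair, and since $H$ is $k$-complemented we have $\conn_H(x,y)\le k-1$ for every nonadjacent $x,y\in S$. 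By (e) each of $x,y$ has a neighbour in $V(H)\setminus S$, and $H\setminus S$ is connected by (d), so $x$ and $y$ are joined by a path with all internal vertices in $V(H)\setminus S$; in particular $\conn_H(x,y)\ge 1$. Let $X$ be the union of $S$ with, for every nonadjacent pair $x,y\in S$, the two unique minimum-order $x$--$y$ separators pushed towards $x$ and towards $y$. Each such separator is nonempty, has size at most $k-1$, and must cut the path above, hence contains a vertex outside $S$; thus $X\supsetneq S$, giving (i). Since there are at most $\binom{\bagthresh}{2}$ nonadjacent pairs, $|X|\le\bagthresh+2(k-1)\binom{\bagthresh}{2}=\bigadh\le\bigbag$, giving (ii); and for every component $Z$ of $H\setminus X$ we have $N(Z)\subseteq X$, so $|N(Z)|\le\bigadh$, giving (iii).

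\textbf{Case 2: $|S|>\bagthresh$.} Let $\sepfam$ be the family of all separations obtained as the minimum-order $T_L$--$T_R$ separation pushed towards $T_L$ or towards $T_R$, ranging over all pairs of disjoint $(k+1)$-element subsets $T_L,T_R\subseteq S$ with $\conn_H(T_L,T_R)\le k$ (pairs violating the last inequality are discarded). Set $X:=S\cup\bigcup_{(A,B)\in\sepfam}A\cap B$, exactly as in Lemma~\ref{lem:magical}. I first claim that every $(A,B)\in\sepfam$ is $S$-stable: if $(A,B)$ is a minimum-order $T_L$--$T_R$ separation, consider the partition $(S_L,S_R)$ of $S$ with $S_L:=(S\cap A)\setminus T_R$ and $S_R:=S\setminus S_L$. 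Then $T_L\subseteq S_L\subseteq A$ and $T_R\subseteq S_R\subseteq B$, so every $S_L$--$S_R$ separation is also a $T_L$--$T_R$ separation, hence $\conn_H(S_L,S_R)\ge\conn_H(T_L,T_R)=|A\cap B|$, and therefore $(A,B)$ --- being an $S_L$--$S_R$ separation of order $|A\cap B|$ --- is of minimum order for $(S_L,S_R)$. Now Lemma~\ref{lem:magical} yields $|N(Z)|\le|S|\le\bigadh$ for every component $Z$ of $H\setminus X$, which is (iii). For (ii): there are at most $\binom{\bigadh}{k+1}^2$ pairs $(T_L,T_R)$, each contributing at most two separators of order at most $k$, so $|X|\le\bigadh+2k\binom{\bigadh}{k+1}^2=\bigbag$.

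\textbf{Progress in Case 2.} It remains to verify (i). Fix $(A,B)\in\sepfam$ with associated pair $(T_L,T_R)$, and suppose toward a contradiction that $A\cap B\subseteq S$. Then $V(H)\setminus S$ is a connected set disjoint from $A\cap B$, so it lies entirely on one side, say $V(H)\setminus S\subseteq A\setminus B$; consequently $B\setminus A\subseteq S$. Since $|T_R|=k+1>k\ge|A\cap B|$ and $T_R\subseteq B$, we have $(B\setminus A)\cap T_R\ne\emptyset$, so $B\setminus A$ contains a vertex of $S$, which by (e) has a neighbour in $V(H)\setminus S\subseteq A\setminus B$ --- contradicting that $(A,B)$ is a separation (and symmetrically if $V(H)\setminus S\subseteq B\setminus A$, using $|T_L|=k+1>|A\cap B|$). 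Hence every separator in $\sepfam$ contains a vertex outside $S$, so $X\supsetneq S$ whenever $\sepfam\ne\emptyset$. If instead $\sepfam=\emptyset$, the algorithm reports $\tw(H)\ge k$; this is correct, because if $\tw(H)<k$ then Lemma~\ref{lem:balanced} provides a $\frac{2}{3}$-balanced separation for $S$ of order at most $k$, and since $|S|>\bagthresh=6k$ a direct computation shows each of its two sides meets $S$ in at least $k+1$ vertices, which (extracting $(k+1)$-element subsets and using the order bound) produces a qualifying pair $(T_L,T_R)$.

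\textbf{Running time, invariance, and the main obstacle.} In Case 1 there are $\Oh(k^2)$ pairs, each handled by a flow computation in $\Oh(k|E(H)|)=\Oh(k^2|V(H)|)$ time; in Case 2 there are $\binom{\bigadh}{k+1}^2=2^{\Oh(k\log k)}$ pairs, each handled in $\Oh(k|E(H)|)$ time, for a total of $2^{\Oh(k\log k)}\cdot|V(H)|$. Every ingredient --- the set of nonadjacent pairs of $S$, the set of disjoint $(k+1)$-element subsets of $S$, the flow values $\conn_H$, and the pushed minimum-order separations, which are unique --- is defined without reference to the representation of $H$ or to any tie-breaking rule, and the decision to report $\tw(H)\ge k$ likewise depends only on isomorphism-invariant data; hence $X$ is invariant under isomorphisms of the structure $(H,S)$. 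I expect the main obstacle to be twofold: the progress argument in Case 2, which must combine properties (d) and (e) to rule out a separator hiding entirely inside $S$; and bridging the gap between what we can afford to enumerate --- pairs of small subsets of $S$, of which there are only $2^{\Oh(k\log k)}$ --- and the hypothesis of Lemma~\ref{lem:magical}, which is phrased in terms of minimum-order separations across \emph{partitions} of $S$. The key to the latter is precisely the observation that a minimum $T_L$--$T_R$ separation is automatically of minimum order across a suitably chosen partition $(S_L,S_R)$.
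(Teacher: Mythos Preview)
Your proposal is correct and follows essentially the same two-case construction as the paper, with the same use of Lemma~\ref{lem:magical} via $S$-stability. The one noteworthy variation is in the progress check for Case~2: the paper argues only that the separator coming from a balanced separation contributes a vertex outside $S$, whereas you prove the stronger (and slightly cleaner) fact that \emph{every} separator in $\sepfam$ already meets $V(H)\setminus S$, using (d) and (e) directly, and invoke the balanced separator merely to certify $\sepfam\neq\emptyset$.
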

\begin{proof}
We consider two cases. In the first case we assume that $|S|\leq \bagthresh$, and in the second case we assume that $\bagthresh<|S|\leq \bigadh$.

\paragraph*{Case 1: $|S|\leq \bagthresh$.} Consider any pair of vertices $x,y\in S$ such that $xy\notin E(H)$. Since $H$ is $k$-complemented, we have that $\conn_H(x,y)<k$, and hence the minimum-order separations that separate $x$ and $y$ have order less than $k$. Let $(A^x_{x,y},B^x_{x,y})$ and $(A^y_{x,y},B^y_{x,y})$ be the minimum-order separations separating $x$ and $y$ that are pushed towards $x$ and $y$, respectively. Recall that, by the definition of an $x-y$ separation, the separators $A^x_{x,y} \cap B^x_{x,y}$ and $A^y_{x,y} \cap B^y_{x,y}$ do not contain
$x$ nor $y$. We define the set $X$ as follows:
\begin{eqnarray}\label{eq:casesmall}
X := S\cup \bigcup_{\substack{x,y\in S,\\ xy\notin E(H)}} (A^x_{x,y}\cap B^x_{x,y})\cup (A^y_{x,y}\cap B^y_{x,y}).
\end{eqnarray}
In other words, we enhance $S$ by adding, for every pair of nonadjacent vertices from $S$, both the extreme minimum separators separating them. Observe that the definition of $X$ is invariant with respect to isomorphisms of the structure $(H,S)$. Observe also that $|X|\leq |S|+2(k-1)\cdot \binom{|S|}{2}\leq \bigadh$, since $|S|\leq \bagthresh$. This proves properties (\ref{pr:bagbound}) and (\ref{pr:adhbound}) of $X$. For property (\ref{pr:progress}), observe that by our assumptions about the set $S$ we have that there exists at least one pair $x,y\in S$ with $xy\notin E(H)$ (by properties (a) and (c)), and that for this pair there exists a path that starts in $x$, ends in $y$, and whose all internal vertices are contained in $V(H)\setminus S$ (by properties (a), (d), and (e)). The internal vertices of this path need to include a vertex of $A^x_{x,y}\cap B^x_{x,y}$ and a vertex $A^y_{x,y}\cap B^y_{x,y}$. Hence, the set $X$ contains at least one vertex outside $S$, and property (\ref{pr:progress}) holds.

Note that in this case $X$ can be computed in time $k^{\Oh(1)}\cdot |V(H)|$, by considering every pair of non-adjacent vertices of $S$,
and for each of them running at most $k$ iterations of the Ford-Fulkerson algorithm.
(Observe that, unless $\tw(H) \geq k$, we have $|E(H)| = \Oh(k|V(H)|)$ and, hence, each of these iterations takes $\Oh(k|V(H)|)$ time.)

\paragraph*{Case 2: $\bagthresh<|S|\leq \bigadh$.} We construct the set $X$ as follows. Consider all the pairs $(L,R)$ of subsets of $S$ such that $L\cap R=\emptyset$ and $|L|=|R|=k+1$. For every such pair, let us verify whether the minimum order of a separation separating $L$ and $R$ is at most $k$, and in this case let us compute minimum-order $L-R$ separations $(A^L_{L,R},B^L_{L,R})$, $(A^R_{L,R},B^R_{L,R})$ that are pushed towards $L$ and $R$, respectively (note that here the vertices of $L$ and $R$ can be included in the separator). We now define $X$ similarly as in the previous case:
\begin{eqnarray}\label{eq:casebig}
X := S\cup \bigcup_{\substack{L,R\subseteq S,\ L\cap R=\emptyset,\\ |L|=|R|=k+1,\ \conn(L,R)\leq k}} (A^L_{L,R}\cap B^L_{L,R})\cup (A^R_{L,R}\cap B^R_{L,R}).
\end{eqnarray}
Note that the definition of $X$ is invariant with respect to isomorphism of the structure $(H,S)$. Property (\ref{pr:bagbound}) of $X$ follows directly from the definition and the fact that $|S|\leq \bigadh$. We proceed to checking the other two properties.

For property (\ref{pr:adhbound}), take any pair $(L,R)$ considered in the union in (\ref{eq:casebig}), and let us look at a separation $(A,B)\in \{(A^L_{L,R},B^L_{L,R}),(A^R_{L,R},B^R_{L,R})\}$. We can easily construct a partition $(S_L,S_R)$ of $S$ such that $L\subseteq S_L\subseteq A$ and $R\subseteq S_R\subseteq B$, by assigning vertices of $L$ to $S_L$, vertices of $R$ to $S_R$, and assigning vertices of $S\setminus (L\cup R)$ according to their containment to $A$ or $B$ (thus, we have a unique choice for all the vertices apart from $(S\setminus (L\cup R))\cap (A\cap B)$). Then $(A,B)$ is a $S_L-S_R$ separation, and since it was a minimum-order $L-R$ separation, it must be also a minimum-order $S_L-S_R$ separation. Hence, $(A,B)$ is an $S$-stable separation. We infer that all the separations considered in the union in (\ref{eq:casebig}) are $S$-stable. From Lemma~\ref{lem:magical} it follows that $|N(Z)|\leq |S|\leq \bigadh$, where $Z$ is the vertex set of any connected component of $H\setminus X$. This concludes the proof of property (\ref{pr:adhbound}).

For property (\ref{pr:progress}), if $\tw(H) < k$ then Lemma~\ref{lem:balanced} implies an existence of a separation $(A,B)$ of order at most $k$ that is $\frac{2}{3}$-balanced for $S$.
Consequently,
$$|(A\setminus B)\cap S|\geq |S|-|A\cap B|-|(B\setminus A)\cap S|\geq \frac{1}{3}|S|-k>k,$$
where the last inequality follows from the fact that $|S|>\bagthresh=6k$. Symmetrically, $|(B\setminus A)\cap S|>k$. Let then $L$ and $R$ be any two subsets of $(A\setminus B)\cap S$ and $(B\setminus A)\cap S$, respectively, that have sizes $k+1$. Observe that the existence of separation $(A,B)$ certifies that $\conn(L,R)\leq k$, and hence the pair $(L,R)$ is considered in the union in (\ref{eq:casebig}). Recall that $(A^L_{L,R},B^L_{L,R})$ is then the corresponding $L-R$ separation of minimum order that is pushed towards $L$. Since $|A^L_{L,R}\cap B^L_{L,R}|\leq k$ and $|L|,|R|=k+1$, there exist some vertices $x,y$ such that $x\in L\setminus (A^L_{L,R}\cap B^L_{L,R})$ and $y\in R\setminus (A^L_{L,R}\cap B^L_{L,R})$. Similarly as in Case 1, there exists a path that starts in $x$, ends in $y$, and whose all internal vertices are contained in $V(H)\setminus S$ (by properties (a), (d), and (e)). The internal vertices of this path need to contain at least one vertex from $A^L_{L,R}\cap B^L_{L,R}$. We infer that if $\tw(H)<k$, then $X$ computed using formula~(\ref{eq:casebig}) is a proper superset of $S$. Therefore, we may output the conclusion that $\tw(H)\geq k$ if $X$ computed using formula~(\ref{eq:casebig}) is equal to $S$; otherwise $X$ is a proper superset of $S$, as requested.

Note that in this case $X$ can be computed in time $2^{\Oh(k\log k)}\cdot |V(H)|$ as follows. There are at most $\binom{\bigadh}{k+1}^2=2^{\Oh(k\log k)}$ possible choices for $L$ and $R$, and for each of them we may check whether $\conn(L,R)\leq k$ (and compute $(A^L_{L,R},B^L_{L,R})$ and $(A^R_{L,R},B^R_{L,R})$, if needed) using at most $k+1$ iterations of the Ford-Fulkerson algorithm. Similarly as in Case 1, we may assume that each iteration takes $\Oh(k|V(H)|)$ time.
\end{proof}

Armed with Lemma~\ref{lem:local-decomposition}, we may proceed to the main result of this section, that is, the enumeration of an isomorphism-invariant family of bags that captures a tree decomposition of reasonably small width.

\begin{theorem}\label{thm:no-clique-seps}
Let $k$ be a positive integer, and let $G$ be a graph on $n$ vertices that is clique-separator free (in particular, connected), and $k$-complemented. There exists an algorithm that computes an isomorphism-invariant family of potential bags $\bagfam\subseteq 2^{V(G)}$ with the following properties:
\begin{enumerate}[(i)]
\item\label{pr:small-bags} $|B|\leq \bigbag$ for each $B\in \bagfam$;
\item\label{pr:small-fam} $|\bagfam| = \Oh(n^2)$;
\item\label{pr:capture} assuming that $\tw(G)<k$, the family $\bagfam$ captures some tree decomposition of $G$ that has width at most $\bigbag+1 = 2^{\Oh(k\log k)}$ and adhesion width at most $\bigadh = \Oh(k^3)$.
\end{enumerate}
Moreover, the algorithm runs in $2^{\Oh(k\log k)}\cdot n^3$ time.
\end{theorem}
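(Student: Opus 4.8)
The plan is to emulate the recursive top-down construction of Robertson and Seymour, but replace the single non-canonical step of ``add an arbitrary vertex / break $S$ with an arbitrary separator'' by the canonical enlargement procedure of Lemma~\ref{lem:local-decomposition}. Concretely, I would maintain a recursion on subgraphs of the form $H = G[N[Z]]$ with top adhesion $S = N(Z)$, where $Z$ is the vertex set of a connected component that arose earlier; the invariant maintained is that $|S| \le \bigadh$ and that $(H, S)$ satisfies hypotheses (a)--(e) of Lemma~\ref{lem:local-decomposition} (with the sole exception of the base cases, which I handle separately). At each recursive call I invoke Lemma~\ref{lem:local-decomposition} to obtain a set $X$ with $S \subsetneq X$, $|X| \le \bigbag$, and such that every connected component $Z'$ of $H \setminus X$ has $|N(Z')| \le \bigadh$. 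I make $X$ the new top bag, add $X$ to the family $\bagfam$, and recurse on each pair $(G[N[Z']], N(Z'))$. The key point is that property (iii) of Lemma~\ref{lem:local-decomposition} keeps adhesions bounded by $\bigadh$ throughout, so the invariant is preserved, and property (i) guarantees strict progress, so the recursion terminates.

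Next I would deal with the base cases and with the initial call. The recursion must stop when $H \setminus S$ is empty (then $\beta(t) = S$ is a bag of size $\le \bigadh$), or when $S$ induces a clique in $H$; but here I exploit the hypothesis that $G$ is clique-separator free, which forces $S$ to induce a clique only in degenerate situations --- I expect the cleanest route is to ensure the invariant ``$H \setminus S$ is connected and $S = N_H(V(H)\setminus S)$'' is automatically inherited from the way components $Z'$ are cut out, so the only genuine base case is $H = G[S]$. The initial call is the delicate one: we have no natural canonical starting separator. I would handle this the way the introduction suggests: for the first step, since $G$ is clique-separator free and has at least two nonadjacent vertices (as $G$ is not a clique --- if it were, $\bagfam = \{V(G)\}$ trivially works and $|V(G)| \le k \le \bigbag$), start the recursion not from a canonical $S$ but from a family of starting sets; however, to keep $|\bagfam| = \Oh(n^2)$ while remaining canonical, the right move is to run the whole procedure once for \emph{each} ordered pair $(u,v)$ of nonadjacent vertices (or each single vertex $v$, whichever the details favor), taking $S_0 = \{u,v\}$ or an appropriate canonical neighbourhood thereof, and let $\bagfam$ be the union over all starting points of the bags produced. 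Each individual run is invariant under isomorphisms of $(G, S_0)$ by Lemma~\ref{lem:local-decomposition}, and taking the union over \emph{all} starting structures of a given isomorphism type restores full isomorphism-invariance of $\bagfam$.

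For the size and time bounds: each run of the recursion produces a tree decomposition whose number of nodes is $\Oh(n)$ (because bags strictly grow along each edge towards the root while adhesions stay bounded --- standard for this style of argument, each recursion node can be charged to a fresh vertex entering $X \setminus S$), and each node contributes one bag, so each run contributes $\Oh(n)$ sets to $\bagfam$; running over $\Oh(n)$ (or $\Oh(n^2)$) starting points gives $|\bagfam| = \Oh(n^2)$, establishing (ii). Property (i) is immediate from property (ii) of Lemma~\ref{lem:local-decomposition} together with $|S| \le \bigadh \le \bigbag$ in the base case. For (iii), the tree decomposition produced by \emph{any single successful run} is a valid tree decomposition of $G$ of width at most $\bigbag + 1$ (subtract one for the off-by-one in the width convention) and adhesion width at most $\bigadh$, and all its bags lie in $\bagfam$, so $\bagfam$ captures it; the run succeeds (rather than reporting $\tw(G) \ge k$) for at least one starting point whenever $\tw(G) < k$, because Lemma~\ref{lem:local-decomposition} only ever reports failure when $\tw(H) \ge k$, and $\tw(H) \le \tw(G)$. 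The time bound is $\Oh(n)$ starting points times $\Oh(n)$ recursion nodes times $2^{\Oh(k\log k)} \cdot n$ per node (the cost in Lemma~\ref{lem:local-decomposition}), giving $2^{\Oh(k\log k)} \cdot n^3$.

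I expect the main obstacle to be making the invariant self-perpetuating and genuinely canonical at the seams: verifying that when we cut out a component $Z'$ of $H \setminus X$ and recurse on $(G[N[Z']], N(Z'))$, the new instance really satisfies all of hypotheses (a)--(e) of Lemma~\ref{lem:local-decomposition} --- in particular that $N(Z')$ does \emph{not} induce a clique (which is where clique-separator-freeness of $G$ must be used, possibly after arguing that any clique adhesion can simply be declared a leaf bag), and that $G[N[Z']] \setminus N(Z')$ is connected and $N(Z') = N_{G[N[Z']]}(Z')$ (these two are structural tautologies from the definition of $Z'$). A secondary subtlety is confirming that the union-over-starting-points construction yields a family that is isomorphism-invariant \emph{as a set} even though each constituent run is only invariant under isomorphisms fixing the starting structure; this is a routine ``sum over an orbit'' argument but needs to be stated carefully. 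Bounding the recursion depth/size cleanly (to get the $\Oh(n)$ nodes per run) is the last thing to nail down, and follows the classical potential argument for the Robertson--Seymour decomposition.
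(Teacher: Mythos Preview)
Your proposal is correct and follows essentially the same approach as the paper. The one detail you left open---the starting configuration---is resolved in the paper by taking the root bag to be $N[u]$ for each vertex $u$ of degree less than $k$ (so $\Oh(n)$ starting points, each yielding $\Oh(n)$ bags via the charging argument you sketched), and then maintaining throughout the recursion the extra invariant $u \notin N_G[V(G') \setminus X]$; this invariant is precisely what makes your ``$N(Z')$ is not a clique'' obstacle go through, since a clique $N(Z')$ would then be a clique separator of $G$ separating $Z'$ from $u$.
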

\begin{proof}
We first consider the border case when $G$ is a clique. Then we can output $\bagfam=\{V(G)\}$ if $n\leq k$, and $\bagfam=\emptyset$ if $n>k$. In the following we assume that $G$ is not a clique.

Let $u$ be any vertex of $G$ whose degree is less than $k$. Observe that since graphs of treewidth $t$ are $t$-degenerate, then there exists at least one such vertex, provided that $\tw(G)<k$ (otherwise we may output $\bagfam=\emptyset$). For each such vertex $u$ we will construct a family $\bagfam_u$ such that (a) $\bagfam_u$ satisfies properties (\ref{pr:small-bags}) and (\ref{pr:capture}), (b) $|\bagfam_u| = \Oh(n)$, (c) the definition of $\bagfam_u$ is invariant with respect to isomorphisms of the structure $(G,u)$, and (d) computing $\bagfam_u$ can be done in time $2^{\Oh(k\log k)}\cdot n^2$. The final family $\bagfam$ can be then defined simply as:
\begin{eqnarray*}
\bagfam := \bigcup_{\substack{u\in V(G)\\ \deg(u)<k}} \bagfam_u.
\end{eqnarray*}
It follows that the definition of $\bagfam$ is isomorphism-invariant, and the running time of the whole algorithm follows from the running time of computing a single $\bagfam_u$.

Let us focus on one vertex $u$. The algorithm will actually compute some tree decomposition $(T_u,\beta_u)$ of $G$ that has width $2^{\Oh(k\log k)}$ and adhesion width $\Oh(k^3)$, and then it will output the set of all its bags as $\bagfam_u$. Hence it will be clear that $(T_u,\beta_u)$ is captured by $\bagfam_u$. The definition of $(T_u,\beta_u)$ will be invariant with respect to isomorphisms of the structure $(G,u)$.

We describe the algorithm as a recursive routine that takes as input an induced subgraph $G'$ of $G$ together with a set $X$, $\emptyset\neq X\subseteq V(G')$, that is supposed to be the root bag. We ensure that the algorithm does not loop using a potential $\pot(G',X)=(|V(G')|+1)\cdot (\bigbag+1)-|X|$. Formally, every call of the algorithm will use only calls for inputs with a strictly smaller potential, and the potential is always nonnegative. For the set $X$, we require the following invariants:
\begin{enumerate}[(a)]
\item\label{inv:smallbag} $|X|\leq \bigbag$, 
\item\label{inv:sep} $X\supseteq N_G(V(G')\setminus X)$, that is, $X$ separates $V(G')\setminus X$ from the rest of the graph, 
\item\label{inv:smalladh} $|N_G(Z)|\leq \bigadh$ for $Z$ being the vertex set of any connected component of $G'\setminus X$, and 
\item\label{inv:u} $u\notin N_G[V(G')\setminus X]$.
\end{enumerate}
The output of the algorithm will be a tree decomposition of $G'$ that has $\Oh(|V(G')|)$ bags, width $2^{\Oh(k\log k)}$ and adhesion width $\Oh(k^3)$, and whose root bag is equal to $X$. The top-most call of the routine is for $G'=G$ and $X=N[u]$; that is, we will construct a tree decomposition of $G$ with the top bag being $N[u]$. Note that the invariants (\ref{inv:smallbag}), (\ref{inv:sep}), (\ref{inv:smalladh}), and (\ref{inv:u}) are satisfied in this call.

The algorithm first checks whether $X=V(G')$, in which case it simply returns a decomposition consisting of one root bag being $X$. Assume then that $X\subsetneq V(G')$. The algorithm considers all the connected components of $G'\setminus X$. Let $Z$ be the vertex set of one of them; recall that $|N(Z)|\leq \bigadh$. Let us examine the graph $G''_Z=G'[N[Z]]$, and let $S_Z=N(Z)$. Observe that the pair $(G''_Z,S_Z)$ satisfies the prerequisites of Lemma~\ref{lem:local-decomposition}:
\begin{itemize}
\item prerequisite (a) follows from the fact that $Z$ is nonempty and $G$ is connected;
\item prerequisite (b) follows from invariant (\ref{inv:smalladh});
\item prerequisite (c) follows from the observation that if $S_Z$ induced a clique, then this clique would be a clique separator separating any vertex of $Z$ from $u$ (since invariants (\ref{inv:sep}) and (\ref{inv:u}) are satisfied);
\item prerequisite (d) follows from the fact that $Z$ is connected;
\item prerequisite (e) follows from the definition of $S_Z$.
\end{itemize}
Hence, let us apply the algorithm of Lemma~\ref{lem:local-decomposition} to the pair $(G''_Z,S_Z)$, obtaining a set $X_Z$. 
(If the algorithm of Lemma~\ref{lem:local-decomposition} returned that $\tw(G''_Z)\geq k$, then also $\tw(G)\geq k$ and we may return $\bagfam=\emptyset$.)

Having computed $X_Z$, the algorithm calls itself recursively for the graph $G''_Z$ and the top bag $X_Z$. Note that by the properties guaranteed by Lemma~\ref{lem:local-decomposition}, we either have that $|V(G''_Z)|<|V(G')|$ or that $G'=G''_Z$ and $|X_Z|>|X|$. Hence we have that $\pot(G''_Z,X_Z)<\pot(G',X)$, as was requested. Let us check also that the invariants are satisfied for this call: invariant (\ref{inv:sep}) follows from the definitions of $G''_Z$ and $S_Z$, invariants (\ref{inv:smallbag}) and (\ref{inv:smalladh}) follow directly from Lemma~\ref{lem:local-decomposition}, and invariant (\ref{inv:u}) follows from the definition of $(G''_Z,X_Z)$, and satisfaction of this invariant for the call $(G',X)$.

Let $(T_Z,\beta_Z)$ be the returned tree decomposition of $G''_Z$; this decomposition has guaranteed small width and adhesion width, and its top bag is $X_Z$. We construct the final output decomposition of $G'$ by creating a root bag equal to $X$, and attaching all the decompositions $(T_Z,\beta_Z)$ for connected components $Z$ as subtrees below this bag. It is easy to verify that this is indeed a tree decomposition of the graph $G'$. Moreover, it has required width and adhesion width; note here that adhesions adjacent to the root bag are neighbourhoods of connected components of $G'\setminus X$, which have size at most $\bigadh$ by invariant (\ref{inv:smalladh}).

Observe also that since we start with the top-most call $(G,N[u])$, and operations performed in each recursive call $(G',X)$ are invariant with respect to isomorphisms of the structure $(G',X)$, it follows that the computed tree decomposition of $G$ is invariant with respect to isomorphisms of the structure $(G',u)$.

We are left with bounding the number of bags of the returned decomposition and analyzing the running time of the algorithm. Let $(T,\beta)$ be the returned tree decomposition of $G$. Recall that for every $v\in V(G)$, the set of nodes whose bags contain $v$ induces a connected subtree of $T$. Let $t(v)$ be the top-most node of this subtree, i.e., the top-most node where $v$ appears in the bag. We will also say that vertex $v$ {\em{charges}} node $t(v)$. We now claim that every node of the decomposition $(T,\beta)$ is charged at least once. Indeed, the root node with bag $N[u]$ is charged by $u$, and for every other node, its bag appears as $X_Z$ in some recursive call $(G',X)$. By Lemma~\ref{lem:local-decomposition}, property~(\ref{pr:progress}), we have $X_Z\setminus X\neq \emptyset$, and so the considered node will be charged by any vertex of $X_Z\setminus X$. Consequently, the total number of produced nodes is at most the total number of vertices of the graph, which is $n$.

To bound the running time of the algorithm, we sum the total work performed in each recursive call of the algorithm. Let us take one call, say $(G',X)$, and observe that the work performed in this call (excluding recursive sub-calls) consists of applications of the algorithm of Lemma~\ref{lem:local-decomposition} to instances $(G''_Z,S_Z)$, for all connected components $Z$ of $G'\setminus X$. By Lemma~\ref{lem:local-decomposition}, each such application (together with auxiliary operations like construction of the subinstance) takes at most $2^{\Oh(k\log k)}\cdot n$ time, and results in obtaining one subtree of the decomposition that is then attached below the bag $X$. Let us assign the time spent on running the algorithm of Lemma~\ref{lem:local-decomposition} on instance $(G''_Z,S_Z)$ to the root node of this subtree, i.e., to the node with bag $X_Z$. Observe that thus every node of the constructed tree decomposition $(T,\beta)$ of $G$ is being assigned at most once. Since the total number of nodes in $T$ is at most $n$, we infer that the total time used by the algorithm is $2^{\Oh(k\log k)}\cdot n^2$, as requested.
\end{proof}

\section{Taming clique separators}\label{sec:cliqueseps}
The main tool that we will use in this section is a decomposition theorem that breaks the graph using minimal clique separators into pieces that cannot be decomposed further. It appears that such a decomposition can be done, with a set of its bags defined in a unique manner. The idea of decomposing a graph using clique separators dates back to the work of Tarjan~\cite{Tarjan85}, and has been studied intensively thereafter. We refer to an introductory article of Berry et al.~\cite{BerryPS10} for more details. The following theorem states all the properties of this decomposition in the language of tree decompositions.

\begin{theorem}[see e.g.~\cite{BerryPS10}]\label{thm:cliqsepdecomp}
Let $G$ be a connected graph. There exists a tree decomposition $(T^\star,\beta^\star)$ of $G$, called {\em{clique minimal separator decomposition}}, with the following properties:
\begin{itemize}
\item for every $t\in V(T^\star)$, $G[\beta^\star(t)]$ is clique-separator free;
\item each adhesion of $(T^\star,\beta^\star)$ is a clique in $G$.
\end{itemize}
Moreover, $T^\star$ has at most $n-1$ nodes, and the bags of $(T^\star,\beta^\star)$ are exactly all the inclusion-wise maximal induced subgraphs of $G$ that are clique-separator free. Consequently, the family of bags of $(T^\star,\beta^\star)$ is isomorphism-invariant. Finally, the decomposition $(T^\star,\beta^\star)$ can be computed in $\Oh(nm)$ time.
\end{theorem}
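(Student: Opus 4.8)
The plan is to derive Theorem~\ref{thm:cliqsepdecomp} from the classical theory of clique separator decompositions (Tarjan, Leimer), proving existence by a recursive splitting argument and then importing the $\Oh(nm)$ bound from the minimal‑triangulation machinery. Throughout, call an induced subgraph of $G$ an \emph{atom} if it is clique‑separator free; recall that every atom is connected.

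\emph{Existence of the decomposition.} I would induct on $|V(G)|$. If $G$ is itself an atom, output the one‑bag decomposition. Otherwise, among all clique separations of $G$ pick one whose separator $S$ is inclusion‑minimal; a short argument shows that then $N(C)=S$ for every connected component $C$ of $G\setminus S$ (if $N(C)\subsetneq S$, then the clique $N(C)$ would separate $C$ from $S\setminus N(C)\neq\emptyset$, contradicting minimality), so in particular $S$ is a minimal separator. Let $C_1,\dots,C_r$ ($r\ge 2$) be the components of $G\setminus S$ and set $A_i:=S\cup C_i$. Since $V(G)\setminus A_i\supseteq C_j\ne\emptyset$ for $j\ne i$, each $G[A_i]$ has strictly fewer vertices, so by induction it has a tree decomposition $(T_i,\beta_i)$ whose bags are the maximal atoms of $G[A_i]$ and whose adhesions are cliques. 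As $S$ is a clique of $G[A_i]$, the standard ``every clique lies in some bag'' property yields $r_i\in V(T_i)$ with $S\subseteq\beta_i(r_i)$. Form $(T^\star,\beta^\star)$ from the disjoint union of the $(T_i,\beta_i)$ by adding the edges $r_1r_2,\dots,r_1r_r$. Because $\beta_i(r_i)\subseteq A_i$ and $A_i\cap A_j=S$, each new adhesion equals $S$, hence is a clique; the vertex‑connectivity axiom holds since each vertex of $S$ occupies a connected subtree of every $T_i$ meeting $r_i$, and every edge of $G$ lies inside some $A_i$, hence inside a bag. So $(T^\star,\beta^\star)$ is a tree decomposition of $G$ with all bags atoms and all adhesions cliques.

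\emph{The bags are exactly the maximal atoms, and canonically so.} This is the heart of the matter. The key observation is that no atom of $G$ can meet two different components $C_i,C_j$: if a connected induced subgraph $G[W]$ had $W\cap C_i\ne\emptyset\ne W\cap C_j$, then $S\cap W$ (a subset of the clique $S$, possibly empty) would be a clique separator of $G[W]$, as $G$ has no edge between $C_i$ and $C_j$. Hence every atom $W$ of $G$ satisfies $W\subseteq A_i$ for some $i$, it is then an atom of $G[A_i]$, and it is maximal in $G$ iff it is maximal in $G[A_i]$ (a strictly larger atom in one is a strictly larger atom in the other, again by the observation). By induction the maximal atoms of $G[A_i]$ are exactly the bags of $(T_i,\beta_i)$; combining both directions, the bags of $(T^\star,\beta^\star)$ are exactly the maximal atoms of $G$. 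Distinct maximal atoms are incomparable sets, so the bags are distinct, and an induction on the recursion — or the known bound on the number of atoms of an $n$‑vertex graph, see~\cite{BerryPS10} — gives $|V(T^\star)|\le n-1$, with a single bag when $G$ is complete. Since the set of maximal atoms depends only on the isomorphism type of $G$ and not on which minimal clique separator was chosen at each step, the bag family is isomorphism‑invariant.

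\emph{The $\Oh(nm)$ algorithm, and the main obstacle.} Rather than running the recursion directly, I would compute a minimal triangulation $H$ of $G$ (an inclusion‑minimal chordal supergraph of $G$ on the same vertex set) in $\Oh(nm)$ time by {\sc MCS-M} or {\sc LEX-M}, build in linear time a clique tree of the chordal graph $H$ (bags $=$ maximal cliques of $H$, adhesions $=$ minimal separators of $H$), and then repeatedly contract every tree edge whose adhesion is not a clique of $G$, taking the union of the two bags. By the classical theory of minimal triangulations (Parra and Scheffler; see also~\cite{BerryPS10}), the minimal separators of $H$ form an inclusion‑maximal family of pairwise non‑crossing minimal separators of $G$, and every minimal separator of $G$ that is a clique of $G$ is non‑crossing with all minimal separators, hence belongs to this family for every minimal triangulation; one checks that the contracted decomposition coincides with $(T^\star,\beta^\star)$, which also re‑proves existence and canonicity. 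All steps fit within $\Oh(nm)$ total time as detailed in~\cite{BerryPS10}. The only genuinely delicate ingredient is the order‑independence of the atoms (Leimer's theorem): that the family of maximal clique‑separator‑free induced subgraphs is intrinsic to $G$, and therefore isomorphism‑invariant. Everything else — the tree‑decomposition axioms, cliqueness of adhesions, the $n-1$ bound, and the implementation details — is routine once the right notion of minimal clique separator has been isolated, which is why I would cite the mature literature~\cite{Tarjan85,BerryPS10} for the core argument rather than reprove it.
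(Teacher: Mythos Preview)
The paper does not prove this theorem at all; it is imported as a black box from the literature with the citation ``see e.g.~\cite{BerryPS10}'', and the surrounding text merely remarks that the family of bags is canonical even though the shape of the tree need not be. Your proposal correctly reconstructs the classical argument---recursive splitting along an inclusion-minimal clique separator for existence, the observation that no atom can straddle two components of $G\setminus S$ (hence Leimer's order-independence and the characterization of bags as the maximal atoms), and the minimal-triangulation route via {\sc MCS-M}/{\sc LEX-M} for the $\Oh(nm)$ bound---so it is consistent with the cited sources, and there is nothing further to compare: you have supplied precisely what the paper deliberately delegates to~\cite{Tarjan85,BerryPS10}.
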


We remark that the exact shape of the clique minimal separator decomposition is not isomorphism-invariant: the construction procedure depends on the order in which inclusion-wise minimal clique separators of the graph are considered. However, the family of bags of this decomposition is isomorphism-invariant, since these bags may be characterized as all the inclusion-wise maximal induced subgraphs of $G$ that are clique-separator free. In other words, all the possible runs of the decomposition algorithm yield the same family of bags, just arranged in a different manner.

Theorem~\ref{thm:cliqsepdecomp} enables us to conveniently extend Theorem~\ref{thm:no-clique-seps} to graphs that may contain clique separations.

\begin{theorem}\label{thm:clique-seps}
Let $k$ be a positive integer, and let $G$ be a graph on $n$ vertices that is $k$-complemented. There exists an algorithm that computes an isomorphism-invariant family of bags $\bagfam$ with the following properties:
\begin{enumerate}[(i)]
\item\label{pr1cs} $|B|\leq \bigbag$ for each $B\in \bagfam$;
\item\label{pr2cs} $|\bagfam|\leq \Oh(k^2n^2)$;
\item\label{pr3cs} assuming that $\tw(G)<k$, the family $\bagfam$ captures some tree decomposition of $G$ that has width at most $\bigbag+1 = 2^{\Oh(k\log k)}$ and adhesion width at most $\bigadh = \Oh(k^3)$.
\end{enumerate}
Moreover, the algorithm runs in $2^{\Oh(k\log k)}\cdot n^3$ time.
\end{theorem}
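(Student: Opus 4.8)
The plan is to combine Theorem~\ref{thm:no-clique-seps} with the clique minimal separator decomposition of Theorem~\ref{thm:cliqsepdecomp}, exploiting Lemma~\ref{lem:same-decomp} to pass between $G$ and its atoms. First I would recall that $G$ is already $k$-complemented by hypothesis, so there is nothing to do there; if $\tw(G) \geq k$ we are free to output $\bagfam = \emptyset$, so assume $\tw(G) < k$. The key subtlety is that $G$ need not be connected, so I would first split $G$ into its connected components $G_1, \dots, G_c$; this partition is isomorphism-invariant. For each component $G_i$, apply Theorem~\ref{thm:cliqsepdecomp} to obtain the clique minimal separator decomposition $(T^\star_i, \beta^\star_i)$, whose bags are exactly the inclusion-wise maximal clique-separator-free induced subgraphs of $G_i$ — and hence form an isomorphism-invariant family. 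Each such bag induces a clique-separator-free subgraph, but to run Theorem~\ref{thm:no-clique-seps} on it I also need it to be $k$-complemented; here I would invoke that an induced subgraph $G[\beta^\star_i(t)]$ need \emph{not} be $k$-complemented, so the clean move is instead to work directly: the atoms $G[\beta^\star_i(t)]$ each have treewidth $< k$ (being subgraphs of $G$), so I apply Theorem~\ref{thm:no-clique-seps} to $\imp{G[\beta^\star_i(t)]}{k}$ (which is clique-separator free iff $G[\beta^\star_i(t)]$ is, since $k$-improvement adds no edges inside a separator that was already non-clique — one should check this, or alternatively observe that $\tw = \tw(\imp{\cdot}{k})$ and that improvement cannot destroy clique-separator-freeness because any clique separator of the improved graph whose separator meets the flow bound was already a clique separator). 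I would compute the improved graph of each atom via Lemma~\ref{lem:comp-imp}, then run Theorem~\ref{thm:no-clique-seps} on it, obtaining an isomorphism-invariant family $\bagfam_{i,t}$ of size $\Oh(|\beta^\star_i(t)|^2)$ capturing a tree decomposition of $G[\beta^\star_i(t)]$ of width $\bigbag + 1$ and adhesion width $\bigadh$.

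The family returned is $\bagfam := \bigcup_{i} \bigcup_{t \in V(T^\star_i)} \bagfam_{i,t}$, together with possibly the singleton bags $\{v\}$ needed to glue components; actually it is cleaner to also throw in, for each adhesion clique $C$ of each $T^\star_i$, nothing extra — the atoms already contain their adhesion cliques as subsets of bags, so no new bags are needed for reassembly within a component. For property (\ref{pr1cs}), every member of $\bagfam$ is a bag produced by Theorem~\ref{thm:no-clique-seps}, hence of size at most $\bigbag$. For property (\ref{pr3cs}): within each component $G_i$, replace each node $t$ of $T^\star_i$ by the tree decomposition of $G[\beta^\star_i(t)]$ captured by $\bagfam_{i,t}$, rooting it at an arbitrary bag containing the adhesion clique $\sigma(t)$ (such a bag exists because $\sigma(t)$ is a clique, hence contained in some bag of any tree decomposition — this is exactly the "clique is in a bag" fact used in the proof of Lemma~\ref{lem:same-decomp}); attaching these along the edges of $T^\star_i$ gives a valid tree decomposition of $G_i$, and the new adhesions are exactly the old clique adhesions $\sigma(t)$, which have size at most $\bigadh$ (indeed they are cliques in $G$, but more to the point they live inside a single bag of the atom decomposition of width $\bigbag+1$, and the overall adhesion width bound $\bigadh$ comes from the adhesion widths internal to the atom decompositions plus the $\sigma(t)$'s — one must observe $|\sigma(t)| \leq k$ since $\sigma(t)$ is a clique and $\tw(G) < k$, and $k \leq \bigadh$). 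Finally the decompositions of the components are joined into one tree arbitrarily (say attaching the root of $G_{i+1}$'s tree as a child of $G_1$'s root), which introduces only empty adhesions. The resulting tree decomposition of $G$ is captured by $\bagfam$ and has the claimed width and adhesion width.

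For the size bound (\ref{pr2cs}): each $T^\star_i$ has at most $|V(G_i)| - 1$ nodes, and for each node $t$, $|\bagfam_{i,t}| = \Oh(|\beta^\star_i(t)|^2) = \Oh(n^2)$. A naive product gives $\Oh(n^3)$, which is too weak — so here I would be more careful and use instead the refined bound from Theorem~\ref{thm:no-clique-seps}, namely that it produces $\Oh(n^2)$ bags \emph{but also} that these arise from $\Oh(n)$ choices of start vertex, each contributing $\Oh(n)$ bags; summing $\sum_t \Oh(|\beta^\star_i(t)|^2)$ over a clique separator decomposition, the total of $|\beta^\star_i(t)|$ over $t$ is $\Oh(k \cdot |V(G_i)|)$ because each vertex lies in at most $\Oh(k)$ atoms (a vertex $v$ belongs to several atoms only via clique adhesions through it, and the number of such is bounded by its degree in the "atom tree", which in a graph of treewidth $< k$ is $\Oh(k)$ by a degeneracy/charging argument) — wait, the cleanest statement is that $\sum_t |\beta^\star_i(t)| = \Oh(k\,|V(G_i)|)$, so $\sum_t |\beta^\star_i(t)|^2 \leq (\max_t |\beta^\star_i(t)|) \cdot \sum_t |\beta^\star_i(t)| \leq \bigbag \cdot \Oh(k\,n)$; summing over components and absorbing $\bigbag = 2^{\Oh(k\log k)}$ does \emph{not} give $\Oh(k^2 n^2)$. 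So the \textbf{main obstacle} I anticipate is precisely matching the stated $\Oh(k^2 n^2)$ size bound: I expect the intended argument is that each atom $G[\beta^\star_i(t)]$ has treewidth $< k$, hence is $k$-degenerate, hence has $\Oh(k\,|\beta^\star_i(t)|)$ edges, and Theorem~\ref{thm:no-clique-seps} applied to it gives $\Oh(|\beta^\star_i(t)|^2)$ bags; but the total vertex count $\sum_t |\beta^\star_i(t)| = n + (\text{sum of adhesion clique sizes})$, and since each adhesion clique has size $< k$ and there are $< n$ of them, this sum is $\Oh(kn)$; then $\sum_t |\beta^\star_i(t)|^2 \leq (\sum_t |\beta^\star_i(t)|)^2 = \Oh(k^2 n^2)$. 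That last inequality is the crude $\ell_2 \leq \ell_1$ bound and yields exactly the claimed $\Oh(k^2 n^2)$, so the obstacle dissolves. The running time is dominated by the $\Oh(k^2 n^3)$ of Lemma~\ref{lem:comp-imp} on each atom (summing to $2^{\Oh(k\log k)} \cdot n^3$ since atom sizes sum to $\Oh(kn)$), the $\Oh(nm)$ of Theorem~\ref{thm:cliqsepdecomp}, and the $2^{\Oh(k\log k)} \cdot n^3$ of the applications of Theorem~\ref{thm:no-clique-seps}, giving the stated $2^{\Oh(k\log k)} \cdot n^3$ overall. Isomorphism-invariance is immediate throughout, since connected components, the atom family, $k$-improvement, and the family produced by Theorem~\ref{thm:no-clique-seps} are all isomorphism-invariant, and the final $\bagfam$ is just their union.
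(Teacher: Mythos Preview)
Your approach is essentially the paper's: decompose $G$ via Theorem~\ref{thm:cliqsepdecomp} into atoms, run Theorem~\ref{thm:no-clique-seps} on each atom, take the union of the resulting families, and glue the captured atom decompositions along the clique adhesions of $(T^\star,\beta^\star)$. Your eventual size argument (after the meandering) is also the paper's: $\sum_t |\beta^\star(t)| = n + \sum_{t\neq r^\star}|\sigma^\star(t)| \leq n + (n-2)k = \Oh(kn)$, whence $\sum_t |\beta^\star(t)|^2 \leq (\sum_t |\beta^\star(t)|)^2 = \Oh(k^2n^2)$.

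There is one genuine misstep, though it turns out to be harmless. Your assertion that ``an induced subgraph $G[\beta^\star_i(t)]$ need \emph{not} be $k$-complemented'' is false: any induced subgraph $H=G[S]$ of a $k$-complemented graph $G$ is automatically $k$-complemented, since $\conn_H(x,y)\geq k$ implies $\conn_G(x,y)\geq k$ (every flow in $H$ lives in $G$), hence $xy\in E(G)$, hence $xy\in E(H)$. The paper states this in one line and applies Theorem~\ref{thm:no-clique-seps} directly to each atom. Your detour through $\imp{G[\beta^\star_i(t)]}{k}$ is therefore vacuous --- the improvement adds nothing --- but along the way it forces you into a hand-wavy claim that $k$-improvement preserves clique-separator-freeness, which you never actually justify. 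Drop the detour and the proof is clean; everything else (the gluing along clique adhesions, the bound $|\sigma^\star(t)|\leq k$, the running-time tally) matches the paper.
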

\begin{proof}
In $\Oh(nm)$ time we compute a clique minimal separator decomposition $(T^\star,\beta^\star)$ of $G$. We can assume that all the adhesions of $(T^\star,\beta^\star)$ are of size at most $k$, since otherwise $G$ contains a clique on $k+1$ vertices; then, $\tw(G)\geq k$ and we may output $\bagfam=\emptyset$. In the following, let $r^\star$ be the root of $T^\star$. Observe that:
$$\sum_{t\in V(T^\star)} |\beta^\star(t)|=n+\sum_{t\in V(T^\star)\setminus \{r^\star\}} |\sigma^\star(t)|\leq n+(n-2)k = \Oh(kn).$$
For every $t\in V(T^\star)$, let us examine the graph $G[\beta^\star(t)]$. Since $G[\beta^\star(t)]$ does not admit a clique separation, it is in particular connected. Moreover, since $G$ is $k$-complemented, then so is $G[\beta^\star(t)]$. Hence, $G[\beta^\star(t)]$ satisfies the prerequisites of Theorem~\ref{thm:no-clique-seps}.

For each $t\in V(T^\star)$, let us apply the algorithm of Theorem~\ref{thm:no-clique-seps} with parameter $k$ to the graph $G[\beta^\star(t)]$. Let $\bagfam_t$ be the obtained family of bags. We now define the output family to be simply $\bagfam:=\bigcup_{t\in V(T^\star)} \bagfam_t$. Observe that since the family of bags of $(T^\star,\beta^\star)$ is isomorphism-invariant, and for each $t\in V(T^\star)$ the constructed family $\bagfam_t$ is invariant with respect to isomorphisms of the graph $G[\beta^\star(t)]$, then it follows that the definition of $\bagfam$ is isomorphism-invariant. 

We now verify the requested properties of family $\bagfam$. Property (\ref{pr1cs}) follows directly from the construction and Theorem~\ref{thm:no-clique-seps}. For property (\ref{pr2cs}), by Theorem~\ref{thm:no-clique-seps} we have that $|\bagfam_t|\leq \Oh(|\beta^\star(t)|^2)$ for each $t\in V(T)$. Since $\sum_{t\in V(T^\star)} |\beta^\star(t)|=\Oh(kn)$, it follows that $|\bagfam|\leq \Oh(k^2n^2)$. Note here that a similar argument gives the bound on the running time of the algorithm: processing graph $G[\beta^\star(t)]$ takes $2^{\Oh(k\log k)}\cdot |\beta^\star(t)|^3$ time, so the whole algorithm may be implemented in time $\Oh(nm)+2^{\Oh(k\log k)}\cdot k^3n^3 = 2^{\Oh(k\log k)}\cdot n^3$.

We are left with verifying property (\ref{pr3cs}). Assume that $\tw(G)<k$; then also $\tw(G[\beta^\star(t)])<k$ for each $t\in V(T^\star)$. By Theorem~\ref{thm:no-clique-seps}, for each $t\in V(T^\star)$ there exists some tree decomposition $(T_t,\beta_t)$ of $G[\beta^\star(t)]$ that is captured by $\bagfam_t$ and satisfies properties (\ref{pr:small-bags}), (\ref{pr:small-fam}), and (\ref{pr:capture}) of Theorem~\ref{thm:no-clique-seps}. Since $\sigma^\star(t)$ is a clique for each $t\in V(T^\star)$, it follows that some bag of $(T_t,\beta_t)$ contains the whole $\sigma^\star(t)$. By re-rooting the decomposition $(T_t,\beta_t)$ if necessary, without loss of generality we may assume that $\beta_t(r_t)\supseteq \sigma^\star(t)$ for each $t\in V(T^\star)$, where $r_t$ is the root of $T_t$.

Now the goal is to combine all the decompositions $(T_t,\beta_t)$ into one decomposition $(T,\beta)$ of $G$ that satisfies the conditions expressed in property (\ref{pr3cs}). We construct $(T,\beta)$ by a bottom-up induction on decomposition $(T^\star,\beta^\star)$. For each $t\in V(T^\star)$ we will construct a decomposition $(T'_t,\beta'_t)$ of $G[\gamma^\star(t)]$ with the property that $\sigma^\star(t)$ will be contained in the root bag of $(T'_t,\beta'_t)$. Then we will simply take $(T,\beta):=(T'_{r^\star},\beta'_{r^\star})$. 

Assume we are considering a node $t\in V(T^\star)$, and assume that we have constructed decompositions $\{(T'_{t_i},\beta'_{t_i})\}_{1\leq i\leq p}$ for the children $t_1,t_2,\ldots,t_p$ of $t$ (possibly $p=0$). For each $i=1,2,\ldots,p$, recall that $\sigma^\star(t_i)$ is a clique, and hence there exists some node $s_i$ of $(T_t,\beta_t)$ whose bag contains the whole set $\sigma^\star(t_i)$. We construct 
$(T'_t,\beta'_t)$ from $(T_t,\beta_t)$ by attaching, for every $i=1,\ldots,p$, the decomposition $(T'_{t_i},\beta'_{t_i})$ as a subtree below node $s_i$. Since $\sigma^\star(t_i)$ is exactly the intersection of $\beta_t(s_i)$ and the root bag of $(T'_{t_i},\beta'_{t_i})$, it can be easily verified that $(T'_t,\beta'_t)$ constructed in this manner is a tree decomposition of $G[\gamma^\star(t)]$. Moreover, since $\sigma^\star(t)\subseteq \beta_t(r_t)$, then the invariant that $\sigma^\star(t)$ is contained in the root bag of $(T'_t,\beta'_t)$ is preserved.

The bound on the width of $(T,\beta)$ follows from the bound on the widths of decompositions $(T_t,\beta_t)$ given by Theorem~\ref{thm:no-clique-seps}. For the adhesion width, observe that the only adhesions that were not present in some decomposition $(T_t,\beta_t)$ are the adhesions created when attaching some $(T'_{t_i},\beta'_{t_i})$ below the node $s_i$. However, these adhesions are exactly adhesions of decomposition $(T^\star,\beta^\star)$, which are of size at most $k<\bigadh$. Finally, observe that each bag of $(T,\beta)$ originates in decomposition $(T_t,\beta_t)$ for some $t\in V(T^\star)$; since $(T_t,\beta_t)$ was captured by $\bagfam_t$, it follows that $(T,\beta)$ is captured by $\bagfam$.
\end{proof}

\section{Reducing bag sizes}\label{sec:redsizes}
\begin{definition}
Let $G$ be a graph, let $\bagfam\subseteq 2^{V(G)}$ be a family of candidate bags, and let $q$ be a positive integer. Then 
\begin{align*}
\bagfam^{\leq q} := \{ X\subseteq V(G)\ :\ |X|\leq q \text{ and } \exists_{B\in \bagfam} X\subseteq B\}.
\end{align*}
\end{definition}

Note that if family $\bagfam$ is isomorphism-invariant, then so does $\bagfam^{\leq q}$.

The following lemma will be the crucial technical insight of this section. Intuitively it states that by focusing on the family $\bagfam^{\leq q}$ for large enough $q$, instead of the original $\bagfam$, we still capture some tree decomposition of the graph that has a reasonably small width. The crucial point here is that the candidate bags of $\bagfam^{\leq q}$ are much smaller than those of $\bagfam$. Since the canonization algorithm of Section~\ref{sec:canon} is essentially considering all permutations of all the bags, reducing the bag size will be useful for speeding it up. 

\begin{lemma}\label{lem:Bq-captures}
Let $G$ be a connected graph of treewidth less than $k$, and let $\bagfam\subseteq 2^{V(G)}$ be a family of candidate bags that captures some tree decomposition of $G$ that has width at most $k'$ and adhesion width at most $\ell$, where $k\leq \ell\leq k'$. Then the family $\bagfam^{\leq (k+1)\ell}$ captures some tree decomposition of $G$ that has width at most $(k+1)\ell-1$.
\end{lemma}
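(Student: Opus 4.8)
The plan is to refine the given decomposition bag by bag, keeping each new bag inside an old one. Let $(T,\beta)$ be a tree decomposition of $G$ captured by $\bagfam$, of width at most $k'$ and adhesion width at most $\ell$, rooted arbitrarily, and fix also a tree decomposition $(T^\star,\beta^\star)$ of $G$ with all bags of size at most $k$ (this exists since $\tw(G)<k$). First I would bring $(T,\beta)$ to a \emph{tight} form: by removing from the subtree below an edge any adhesion vertex having no neighbour there, and by splitting a node $t_i$ into one copy per connected component of $G[\alpha(t_i)]$ (restricting the subtree's bags to that component together with the adhesion), we may assume that for every non-root $t$ the set $\alpha(t)$ is nonempty and connected and $N_G(\alpha(t))=\sigma(t)$. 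Each such operation only shrinks bags, so the bags of the tight decomposition are still subsets of members of $\bagfam$, the adhesion width is still at most $\ell$, and $\tw(G)<k$ is untouched. Under tightness, for a fixed node $t$ the connected components of $G-\beta(t)$ are exactly the sets $\alpha(t_i)$ over the children $t_i$ of $t$, plus the part above $t$, and the $G$-neighbourhood of each of these components is precisely the corresponding adhesion of $(T,\beta)$, of size at most $\ell$.

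The heart of the argument is the local refinement. For each node $t$ I would build a tree decomposition $(T^\star,\beta_t^+)$ of $G[\beta(t)]$ by setting
\[
\beta_t^+(s)\;=\;\bigl(\beta^\star(s)\cap\beta(t)\bigr)\ \cup\bigcup_{\substack{C\text{ component of }G-\beta(t)\\ \beta^\star(s)\cap C\neq\emptyset}}N_G(C).
\]
The components $C$ are pairwise disjoint and disjoint from $\beta(t)$, and $|\beta^\star(s)|\le k$, so at most $k$ of them meet $\beta^\star(s)$, whence $|\beta_t^+(s)|\le k+k\ell\le(k+1)\ell$ using $k\le\ell$ (this is exactly where the hypothesis $k\le\ell$, and the factor $(k+1)$, enter). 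That $(T^\star,\beta_t^+)$ is a tree decomposition of $G[\beta(t)]$ is then a routine check: every edge of $G[\beta(t)]$ lies in some $\beta^\star(s)\cap\beta(t)$; and for $v\in\beta(t)$ the set of $s$ with $v\in\beta_t^+(s)$ is the union of the occurrence subtree of $v$ in $(T^\star,\beta^\star)$ with, for each component $C$ with $v\in N_G(C)$, the set $\{s:\beta^\star(s)\cap C\neq\emptyset\}$, which is connected because $G[C]$ is connected and meets the occurrence subtree of $v$ because $v$ has a neighbour in $C$ — so the union is a connected subtree. Finally, since each such component $C$ is nonempty it is met by some $\beta^\star(s)$, and hence $N_G(C)$ — which equals $\sigma(t_i)$ when $C=\alpha(t_i)$ — is contained in a bag of $(T^\star,\beta_t^+)$; the adhesion $\sigma(t)$ to the parent is handled the same way from the part above $t$.

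It remains to glue these local pieces along the shape of $T$: take disjoint copies of $T^\star$, one per node of $T$, and for each edge $tt_i$ of $T$ join a node of the $t$-copy whose $\beta^+_t$-bag contains $\sigma(t_i)$ to a node of the $t_i$-copy whose $\beta^+_{t_i}$-bag contains $\sigma(t_i)$. The resulting $(\hat T,\hat\beta)$ is a tree, it is a tree decomposition of $G$ because every edge of $G$ lies in some $\beta(t)$ and is therefore covered inside the $t$-copy, and because the occurrence subtree of each vertex in $T$ lifts to a connected set in $\hat T$ (two adjacent copies share the vertex exactly at their gluing node whenever the vertex lies in the connecting adhesion). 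Every bag of $(\hat T,\hat\beta)$ is a subset of some $\beta(t)$, hence of a member of $\bagfam$, and has size at most $(k+1)\ell$, so it belongs to $\bagfam^{\leq(k+1)\ell}$; therefore this family captures $(\hat T,\hat\beta)$, which has width at most $(k+1)\ell-1$.

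The step I expect to demand the most care is the tight-form preprocessing together with the accompanying connectivity bookkeeping: one must check that deleting useless adhesion vertices and splitting nodes along components of $G-\beta(\cdot)$ can be carried out consistently over the whole rooted tree, that this really achieves $N_G(\alpha(t))=\sigma(t)$ with $G[\alpha(t)]$ connected at every edge simultaneously (in particular so that each adhesion $\sigma(t)$, and not merely a subset of it, ends up inside a bag of the local refinement while keeping all occurrence sets connected), and that no operation ever enlarges a bag so that the "subset of a member of $\bagfam$" invariant survives. Once tightness is in place, the size estimate $k+k\ell\le(k+1)\ell$ and the gluing are essentially mechanical.
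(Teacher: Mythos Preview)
Your overall architecture---tighten the captured decomposition, locally refine each bag using a small-width decomposition, then glue---matches the paper's. The tight preprocessing you worry about is indeed routine (it is exactly Lemma~\ref{lem:sensitivitation}). The genuine gap lies elsewhere: your claim that ``the adhesion $\sigma(t)$ to the parent is handled the same way from the part above $t$'' is false. Tightness guarantees $\sigma(t)=N_G(\alpha(t))$, i.e., every vertex of $\sigma(t)$ has a neighbour \emph{below}, but it says nothing about neighbours \emph{above}. The set $V(G)\setminus\gamma(t)$ need not be connected, and no single component of $G\setminus\beta(t)$ lying above $t$ need have all of $\sigma(t)$ as its neighbourhood. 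Concretely, let $G$ be the tree on $\{u,v,a,x,y\}$ with edges $ua,va,ux,vy$, and take the cs-decomposition with root bag $\{u,x\}$, child $\{u,v\}$, and grandchildren $\{v,y\}$ and $\{u,v,a\}$; here $k=\ell=k'=2$. At the node $t$ with $\beta(t)=\{u,v,a\}$ we have $\sigma(t)=\{u,v\}$, while the components of $G\setminus\beta(t)$ are $\{x\}$ and $\{y\}$ with neighbourhoods $\{u\}$ and $\{v\}$. Since $a$ is a cut vertex separating $\{u,x\}$ from $\{v,y\}$, no width-$1$ decomposition $(T^\star,\beta^\star)$ has a bag meeting both sides, so no $\beta_t^+(s)$ contains $\{u,v\}$ and your gluing at $t$ fails.

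The fix is what the paper does: simply add $\sigma(t)$ to every bag of the local refinement at $t$. For this to stay within $(k+1)\ell$ you must sharpen your count: writing $a=|\beta^\star(s)\cap\beta(t)|$ and $b$ for the number of components hit, one has $a+b\le k$, hence $|\beta_t^+(s)|\le a+b\ell\le k\ell$ (not merely $k+k\ell$), and $k\ell+|\sigma(t)|\le (k+1)\ell$. Apart from this, the paper differs only stylistically: instead of restricting a single global width-$(<k)$ decomposition as you do, it contracts each $\alpha(t_i)$ to a vertex in $G[\gamma(t)]$, takes a fresh width-$(<k)$ decomposition of the resulting minor, replaces each contracted vertex by $\sigma(t_i)$, and adds $\sigma(t)$ to every bag. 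Both routes give the same bound once you include the missing $\sigma(t)$-augmentation.
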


Before we proceed with the proof of Lemma~\ref{lem:Bq-captures}, we need one more definition.

\begin{definition}[\bf connectivity-sensitive tree decomposition]
We say that a tree decomposition $(T,\beta)$ of a connected graph $G$ is {\em{connectivity-sensitive}} ({\em{cs-tree decomposition}}, for short), if the following conditions are satisfied for every $t\in V(T)$:
\begin{itemize}
\item $G[\alpha(t)]$ is connected, and
\item $\sigma(t)=N_G(\alpha(t))$.
\end{itemize}
\end{definition}

Actually, one can see that the tree decomposition constructed in the proof of Theorem~\ref{thm:no-clique-seps} is connectivity sensitive, so the family $\bagfam$ actually captures a cs-tree decomposition of the graph with required width and adhesion width. A similar conclusion, however, is not so clear in the case of Theorem~\ref{thm:clique-seps}. Fortunately, it is easy to see that every tree decomposition of a connected graph can be turned into a cs-tree decomposition without increasing the widths. For the sake of completeness, we attach the easy proof in the appendix.

\begin{lemma}[\app]\label{lem:sensitivitation}
If a connected graph $G$ admits a tree decomposition $(T,\beta)$ of width $k$ and adhesion width $\ell$, then $G$ admits also a cs-tree decomposition $(T',\beta')$ of width at most $k$ and adhesion width at most $\ell$. Moreover, every bag appearing in $(T',\beta')$ is a subset of some bag of $(T,\beta)$.
\end{lemma}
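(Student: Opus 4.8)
The plan is to transform $(T,\beta)$ into the required decomposition by repeatedly ``splitting'' a node along the connected components of its $\alpha$-set, always working as high in the tree as possible. Root $(T,\beta)$ arbitrarily and call a node $t$ \emph{bad} if $G[\alpha(t)]$ is disconnected or $\sigma(t)\neq N_G(\alpha(t))$. The root is never bad: for it $\alpha=V(G)$, which is connected since $G$ is connected, and $\sigma=\emptyset=N_G(V(G))$. So as long as some node is bad we may pick a topmost bad node $t$; all ancestors of $t$ are then already ``good''. Let $A_1,\dots,A_m$ ($m\geq 2$, since $G[\alpha(t)]$ is disconnected whenever $t$ is bad and, as one checks, $\sigma(t)=N_G(\alpha(t))$ already follows once $G[\alpha(t)]$ is connected) be the vertex sets of the connected components of $G[\alpha(t)]$. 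A short, standard argument --- using that every vertex of $\alpha(t)=\gamma(t)\setminus\sigma(t)$ occurs only inside the subtree rooted at $t$ --- shows $N_G(A_i)\subseteq\sigma(t)$ for each $i$; in particular $|N_G(A_i)|\leq\ell$, and $\sigma(t)$ together with $A_1,\dots,A_m$ partitions $\gamma(t)$.

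Given this, the split replaces the subtree rooted at $t$ by $m$ subtrees, one per $A_i$: in the $i$-th copy keep only the nodes whose bag meets $A_i\cup N_G(A_i)$, intersect every bag with $A_i\cup N_G(A_i)$, and attach the resulting tree below $\parent(t)$. Since $G[A_i]$ is connected and every vertex of $N_G(A_i)$ has a neighbour in $A_i$, the kept nodes form a connected subtree of $T$ and every vertex of $A_i\cup N_G(A_i)$ still occupies a connected subtree there, so the copy is again a tree decomposition of $G[A_i\cup N_G(A_i)]$; moreover the old node $t$ survives in every copy as its new root, with $\alpha=A_i$ (connected) and $\sigma=N_G(A_i)$, so these roots are good. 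Edges of $G$ with both endpoints in $A_i\cup N_G(A_i)$ are covered inside copy $i$, while edges with both endpoints in $\sigma(t)\subseteq\beta(\parent(t))$ are covered by $\parent(t)$; hence the modified $(T,\beta)$ is still a tree decomposition of $G$. All bags only ever got intersected with subsets of $V(G)$, so they stay of size at most $k+1$ and each is a subset of an original bag; the only new adhesions are the sets $N_G(A_i)$, of size $\leq\ell$, and the surviving adhesions only shrank, so the adhesion width stays $\leq\ell$. The ``subset of original bags'' claim will then be immediate.

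It remains to see that this process terminates, after which no node is bad and the decomposition is connectivity-sensitive. For this one can use the multiset $\{|\gamma(t)| : t\in V(T)\}$ ordered by the (well-founded) multiset order: when splitting at $t$, every node outside the subtree rooted at $t$ keeps its $\gamma$-set, every node of that subtree (in particular $t$, with $|\gamma(t)|$) is removed, and every newly created node $s$ in copy $i$ has $\gamma(s)=\gamma_{\mathrm{old}}(s)\cap(A_i\cup N_G(A_i))$, a set of size at most $|A_i|+|N_G(A_i)|\leq |A_i|+|\sigma(t)|<|\sigma(t)|+\sum_j|A_j|=|\gamma(t)|$ because $m\geq 2$; thus the number of occurrences of the value $|\gamma(t)|$ strictly drops while no larger value changes, so the multiset strictly decreases. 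Combining everything, the resulting $(T',\beta')$ is a cs-tree decomposition of width $\leq k$ and adhesion width $\leq\ell$, with every bag contained in a bag of $(T,\beta)$.

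I expect the one point that genuinely needs care to be the adhesion-width bookkeeping around a split: proving $N_G(A_i)\subseteq\sigma(t)$, checking that the only fresh adhesions created are exactly the $N_G(A_i)$ (and that internal adhesions of the copies, even where intermediate nodes were deleted, are still contained in old adhesions, hence of size $\leq\ell$), and verifying that all tree-decomposition axioms survive the split --- especially that each vertex of $\sigma(t)$ still has a connected occurrence set (its occurrences in each copy are the intersection of two subtrees of $T$, and all copies hang off $\parent(t)$, which also contains it) and that every edge of $G$, including those inside $\sigma(t)$, is still covered after they ``migrate'' up to $\parent(t)$. The width bound, the subset-of-bags property, and the termination argument are routine once the split is set up as above.
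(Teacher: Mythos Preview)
Your key claim ``as one checks, $\sigma(t)=N_G(\alpha(t))$ already follows once $G[\alpha(t)]$ is connected'' is false, and with it your assertion that $m\geq 2$ whenever $t$ is bad. Take $G$ to be the path $a\text{--}b\text{--}c\text{--}d$ with root bag $\{a,b,c\}$ and a single child $t$ with bag $\{b,c,d\}$: then $\alpha(t)=\{d\}$ is connected, but $\sigma(t)=\{b,c\}\neq\{c\}=N_G(\alpha(t))$, so $t$ is a topmost bad node with $m=1$. In this case your termination inequality $|A_i|+|\sigma(t)|<|\gamma(t)|$ fails (it becomes an equality), so the multiset argument as written does not go through.

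The repair is easy: bound each new $|\gamma(s)|$ by $|A_i|+|N_G(A_i)|$ rather than by $|A_i|+|\sigma(t)|$. If $m\geq 2$ then $|A_i|<|\alpha(t)|$ gives $|A_i|+|N_G(A_i)|<|\gamma(t)|$; if $m=1$ then badness of $t$ forces $N_G(\alpha(t))\subsetneq\sigma(t)$, whence $|A_1|+|N_G(A_1)|=|\alpha(t)|+|N_G(\alpha(t))|<|\alpha(t)|+|\sigma(t)|=|\gamma(t)|$. Either way the multiset strictly drops, and the rest of your argument survives. For comparison, the paper sidesteps all termination bookkeeping by a direct bottom-up induction: for each node $t$ and each connected component $Z$ of $G[\alpha(t)]$ it builds a cs-decomposition of $G[N[Z]]$ whose root bag is $\beta(t)\cap N[Z]$, attaching below it the inductively constructed decompositions for the connected components of $Z\cap\alpha(t_i)$ at every child $t_i$. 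This yields the same ``intersect each bag with $N[Z]$'' effect as your split, but packaged as a single recursive definition rather than an iterative fixpoint.
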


We are ready to proceed to the proof of Lemma~\ref{lem:Bq-captures}.

\begin{proof}[Proof of Lemma~\ref{lem:Bq-captures}]
Let $(T_0,\beta_0)$ be a tree decomposition of $G$ that has width at most $k'$ and adhesion width at most $\ell$, and is captured by $\bagfam$. Let $(T,\beta)$ be the cs-tree decomposition of $G$ given by Lemma~\ref{lem:sensitivitation} applied to $(T_0,\beta_0)$. That is, $(T,\beta)$ is connectivity-sensitive, has width at most $k'$ and adhesion width at most $\ell$, and its  every  bag is contained in some bag $(T_0,\beta_0)$, so in particular in some bag of $\bagfam$. We now prove that there exists a tree decomposition $(T',\beta')$ of $G$ such that:
\begin{itemize}
\item $(T',\beta')$ has width at most $(k+1)\ell-1$,
\item every bag of $(T',\beta')$ is a subset of some bag of $(T,\beta)$.
\end{itemize}
From these properties it follows that $(T',\beta')$ is captured by $\bagfam^{\leq k\ell}$, which will conclude the proof.

We construct the decomposition $(T',\beta')$ by a bottom-up induction on the decomposition $(T,\beta)$. For each node $t\in V(T)$, we construct a decomposition $(T'_t,\beta'_t)$ that has the following properties:
\begin{enumerate}[(i)]
\item\label{pr:smw} $(T'_t,\beta'_t)$ is a tree decomposition of $G[\gamma(t)]$ of width at most $(k+1)\ell-1$;
\item\label{pr:subset} every bag of $(T'_t,\beta'_t)$ is a subset of some bag of $(T,\beta)$;
\item\label{pr:root} the root bag of $(T'_t,\beta'_t)$ contains $\sigma(t)$.
\end{enumerate}
The decomposition $(T',\beta')$ will be then simply $(T'_r,\beta'_r)$, where $r$ is the root node of $(T,\beta)$.

Take any node $t$, and let $t_1,t_2,\ldots,t_p$ be its children in $T$ (possibly $p=0$ if $t$ is a leaf). By induction hypothesis we have decompositions $\{(T'_{t_i},\beta'_{t_i})\}_{1\leq i\leq p}$  for the subtrees below $t$ that satisfy properties (\ref{pr:smw}), (\ref{pr:subset}), (\ref{pr:root}).

Let us construct a graph $H_t$ from $G[\gamma(t)]$ by contracting, for every $i\in \{1,2,\ldots,p\}$, the subgraph $G[\alpha(t_i)]$ into a single vertex $u_i$; recall that this subgraph is connected since $(T,\beta)$ is connectivity-sensitive. Moreover, by connectivity-sensitivity we have that $N_{H_t}(u_i)=\sigma(t_i)$. Since $H_t$ is a minor of $G$, we infer that $\tw(H_t)\leq \tw(G)<k$. Let then $(T_{H_t},\beta_{H_t})$ be a tree decomposition of $H_t$ of width less than $k$.

We construct decomposition $(T'_t,\beta'_t)$ from $(T_{H_t},\beta_{H_t})$ in the following steps:
\begin{enumerate}
\item\label{st1} Include all the vertices of $\sigma(t)$ into each bag of $(T_{H_t},\beta_{H_t})$.
\item\label{st2} Replace every occurrence of each vertex $u_i$ in each bag by all the vertices of $N_{H_t}(u_i)=\sigma(t_i)$.
\item\label{st3} For every $i=1,2,\ldots,p$, find any node of the decomposition whose bag originally contained $u_i$ (and so now it contains $N_{H_t}(u_i)=\sigma(t_i)$). Attach the decomposition $(T'_{t_i},\beta'_{t_i})$ as a subtree below this node.
\end{enumerate}
It is straightforward to verify that $(T'_t,\beta'_t)$ constructed in this manner is a valid tree decomposition of $G[\gamma(t)]$. Moreover, observe that the bags of $(T'_t,\beta'_t)$ are of size at most $k\ell$: For bags originating in decompositions $(T'_{t_i},\beta'_{t_i})$ this follows from induction hypothesis, while bags originating in $(T_{H_t},\beta_{H_t})$ had size at most $k$ in the beginning, then got augmented by at most $\ell$ vertices in step (\ref{st1}), and finally some of the original vertices got replaced by $\ell$ other vertices in step (\ref{st2}). This implies that these bags have size at most $k\ell+\ell=(k+1)\ell$ at the end. This proves property (\ref{pr:smw}). Properties (\ref{pr:subset}) and (\ref{pr:root}) follow directly from the construction: every bag originating in $(T_{H_t},\beta_{H_t})$ is a subset of $\beta(t)$ and a superset of $\sigma(t)$, and property (\ref{pr:subset}) for bags originating in decompositions $(T'_{t_i},\beta'_{t_i})$ follows from the induction hypothesis.
\end{proof}

Using Lemma~\ref{lem:Bq-captures}, we can further refine Theorem~\ref{thm:clique-seps}. The new property (\ref{pr4}) is a technical condition that will be used later.

\begin{theorem}\label{thm:redsizes}
Let $k$ be a positive integer, and let $G$ be a graph on $n$ vertices that is connected and $k$-complemented. There exists an algorithm that computes an isomorphism-invariant family of bags $\bagfam$ with the following properties:
\begin{enumerate}[(i)]
\item\label{pr1} $|B|\leq (k+1)\bigadh\in \Oh(k^4)$ for each $B\in \bagfam$;
\item\label{pr2} $|\bagfam|\leq 2^{\Oh(k^5\log k)}\cdot n^2$;
\item\label{pr3} assuming that $\tw(G)<k$, the family $\bagfam$ captures some tree decomposition of $G$ that has width at most $(k+1)\bigadh-1\in \Oh(k^4)$;
\item\label{pr4} family $\bagfam$ is closed under taking subsets.
\end{enumerate}
Moreover, the algorithm runs in $2^{\Oh(k^5\log k)}\cdot n^3$ time.
\end{theorem}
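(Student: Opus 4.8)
The plan is to combine Theorem~\ref{thm:clique-seps} with Lemma~\ref{lem:Bq-captures} in a straightforward way. First I would run the algorithm of Theorem~\ref{thm:clique-seps} on $G$ with parameter $k$, obtaining in $2^{\Oh(k\log k)}\cdot n^3$ time an isomorphism-invariant family $\bagfam_0$ of bags such that $|B|\leq \bigbag = 2^{\Oh(k\log k)}$ for each $B\in\bagfam_0$, $|\bagfam_0|\leq \Oh(k^2n^2)$, and --- provided $\tw(G)<k$ --- the family $\bagfam_0$ captures some tree decomposition of $G$ of width at most $\bigbag+1$ and adhesion width at most $\bigadh = \Oh(k^3)$. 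Then I set $\bagfam := \bagfam_0^{\leq (k+1)\bigadh}$, that is, I replace every bag $B\in\bagfam_0$ by the family of all its subsets of size at most $(k+1)\bigadh$, and take the union; to get property~(\ref{pr4}) one simply notes that $\bagfam_0^{\leq q}$ is automatically closed under taking subsets, since a subset of a set of size at most $q$ that is contained in some $B\in\bagfam_0$ is again such a set.

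The correctness checks are then routine. Property~(\ref{pr1}) is immediate from the definition of $\bagfam_0^{\leq q}$ with $q=(k+1)\bigadh$, and $(k+1)\bigadh = (k+1)\cdot\Oh(k^3) = \Oh(k^4)$. For property~(\ref{pr2}), each $B\in\bagfam_0$ has size at most $\bigbag = 2^{\Oh(k\log k)}$, and hence contributes at most $\binom{\bigbag}{(k+1)\bigadh} = \binom{2^{\Oh(k\log k)}}{\Oh(k^4)} = 2^{\Oh(k^4)\cdot \Oh(k\log k)} = 2^{\Oh(k^5\log k)}$ subsets; multiplying by $|\bagfam_0|\leq \Oh(k^2n^2)$ gives $|\bagfam|\leq 2^{\Oh(k^5\log k)}\cdot n^2$. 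Isomorphism-invariance of $\bagfam$ follows from that of $\bagfam_0$ together with the remark (noted just after the definition of $\bagfam^{\leq q}$) that $\bagfam^{\leq q}$ inherits isomorphism-invariance. The running time is $2^{\Oh(k\log k)}\cdot n^3$ for computing $\bagfam_0$, plus the cost of enumerating, for each of the $\Oh(k^2n^2)$ bags of $\bagfam_0$, all its at most $2^{\Oh(k^5\log k)}$ relevant subsets; this is bounded by $2^{\Oh(k^5\log k)}\cdot n^2$, so in total $2^{\Oh(k^5\log k)}\cdot n^3$.

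The only part that needs a small argument is property~(\ref{pr3}), and this is exactly where Lemma~\ref{lem:Bq-captures} is invoked. Assuming $\tw(G)<k$, the family $\bagfam_0$ captures a tree decomposition of $G$ of width at most $k' := \bigbag+1$ and adhesion width at most $\ell := \bigadh$; note $k\leq \ell\leq k'$, so the hypotheses of Lemma~\ref{lem:Bq-captures} are met (here we also use that $G$ is connected, which holds since we are in the regime where $\tw(G)<k$ and $G$ is $k$-complemented --- or rather it is simply given as a hypothesis of the theorem). Lemma~\ref{lem:Bq-captures} then yields that $\bagfam_0^{\leq (k+1)\ell}$ captures some tree decomposition of $G$ of width at most $(k+1)\ell-1 = (k+1)\bigadh-1 \in\Oh(k^4)$, which is precisely $\bagfam$ capturing a decomposition of the stated width. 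I do not expect any genuine obstacle here: all the work has already been done in Theorem~\ref{thm:clique-seps} and Lemma~\ref{lem:Bq-captures}, and this theorem is just the bookkeeping that assembles them, choosing the threshold $q=(k+1)\bigadh$ to match the adhesion width $\bigadh$ produced by Theorem~\ref{thm:clique-seps}. The mildly delicate point to get right is the arithmetic in property~(\ref{pr2}): one must be careful that $\log\binom{\bigbag}{(k+1)\bigadh}$ is $\Oh((k+1)\bigadh\cdot\log\bigbag) = \Oh(k^4\cdot k\log k) = \Oh(k^5\log k)$ and not something larger, but this is a direct estimate of the binomial coefficient.
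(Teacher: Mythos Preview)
Your proposal is correct and follows essentially the same approach as the paper: run Theorem~\ref{thm:clique-seps} to obtain $\bagfam_0$, set $\bagfam:=\bagfam_0^{\leq (k+1)\bigadh}$, and read off the properties, with~(\ref{pr3}) coming from Lemma~\ref{lem:Bq-captures}. The only minor imprecision is that each $B\in\bagfam_0$ contributes $\sum_{i=0}^{(k+1)\bigadh}\binom{|B|}{i}$ subsets rather than a single binomial coefficient, but the asymptotic estimate is unaffected.
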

\begin{proof}
We run the algorithm of Theorem~\ref{thm:clique-seps} on the graph $G$ to obtain an isomorphism-invariant family $\bagfam_0$. Then, we output the family $\bagfam:=\bagfam_0^{\leq (k+1)\bigadh}$. Observe that since $|B|\leq \bigbag$ for each $B\in \bagfam_0$, then each $B\in \bagfam_0$ gives rise to at most $\sum_{i=0}^{(k+1)\bigadh} \binom{\bigbag}{i}\in 2^{\Oh(k^5\log k)}$ sets in the output family $\bagfam$. This justifies the bound on $|\bagfam|$ (property (\ref{pr2})) and on the running time. Properties (\ref{pr1}) and (\ref{pr4}) follow directly from the construction, and property (\ref{pr3}) follows from Lemma~\ref{lem:Bq-captures}.
\end{proof}

\section{Canonization}\label{sec:canon}
In this section we utilize the isomorphism-invariant family of candidate bags constructed in Theorems~\ref{thm:no-clique-seps},~\ref{thm:clique-seps}, and~\ref{thm:redsizes} to give a canonization algorithm for graphs of bounded treewidth running in FPT time. 
Our main goal is to prove Theorem~\ref{main:canonization}; we then deduce Theorems~\ref{main:isomorphism} and~\ref{simple:canonization}
in Section~\ref{sec:canon-cors}.

First we introduce a concept that we call {\em{construction terms}}, which is an alternative definition of treewidth and tree decompositions via graph grammars. Our canonization algorithm then produces a canonical expression (construction term) that builds the graph; the isomorphism tests boils down to verifying equality of these canonical expressions.

\subsection{Construction terms}\label{sec:terms}
 The formalization given in this section has been known in the graph grammar literature from eighty's.  We refer to~\cite{Bodlaender98,BevernFGR13} for a review on 
these topics. We would also like to mention that the materials presented in this subsection is not much more than a formalization of what is commonly known as nice tree decomposition~\cite{Kloks94}. We give the details here to make the presentation self-content. 
In the sequel, for a positive integer $q$ we denote by $[q]=\{1,2,\ldots,q\}$. For a function $f$ by $f[x\to y]$ we denote a function defined as follows:
\begin{align*}
f[x\to y](z) = \begin{cases} y & \text{if }z=x\text{,} \\ f(z) & \text{otherwise.}\end{cases}
\end{align*}
Note that this definition is correct regardless whether $x$ was in the domain of $f$ or not. If $x$ belongs to the domain of $f$, then by $f[x\to \bot]$ we denote the function $f\setminus \{(x,f(x))\}$, i.e., $f$ with $x$ deleted from the domain.

Let $k$ be a positive integer and let $\Labels=\{1,2,\ldots,k\}$ be an alphabet of $k$ labels. We now define a family $\termfam$ of terms; each term $\term\in \termfam$ will have a prescribed subset $\used(\term)\subseteq \Labels$ of labels {\em{used}} by $\term$, and a graph $\baggraph(\term)$ with vertex set $\used(\term)$.
\begin{itemize}
\item We have a {\em{leaf term}} $\leaf\in \termfam$, with $\used(\leaf)=\emptyset$ and $\baggraph(\leaf)$ being the empty graph.
\item If $\term\in \termfam$ and $i\in \Labels\setminus \used(\term)$, then we can create an {\em{introduce term}} $\introduce_i(\term)\in \termfam$, with $\used(\introduce_i(\term))=\used(\term)\cup \{i\}$ and $\baggraph(\introduce_i(\term))$ being $\baggraph(\term)$ with an isolated vertex $i$ introduced.
\item If $\term\in \termfam$ and $i\in \used(\term)$, then we can create a {\em{forget term}} $\forget_i(\term)\in \termfam$, with $\used(\forget_i(\term))=\used(\term)\setminus \{i\}$ and $\baggraph(\forget_i(\term))=\baggraph(\term)\setminus \{i\}$.
\item If $\term\in \termfam$, $i,j\in \used(\term)$, $i\neq j$ and $ij\notin E(\baggraph(\term))$, then we can create an {\em{introduce edge term}} $\edge_{i,j}(\term)\in \termfam$, with $\used(\edge_{i,j}(\term))=\used(\term)$ and $\baggraph(\edge_{i,j}(\term))$ being $\baggraph(\term)$ with edge $ij$ added. 
\item Let $q\geq 2$ be any integer. Suppose that there are terms $\term_1,\term_2,\ldots,\term_q\in \termfam$ such that 
\begin{itemize}
\item $\used(\term_1)=\used(\term_2)=\ldots=\used(\term_q)$, and
\item all the graphs $\baggraph(\term_1),\baggraph(\term_2),\ldots,\baggraph(\term_q)$ are edgeless.
\end{itemize}
Then we can create a {\em{join term}} $\join(\term_1,\term_2,\ldots,\term_q)\in \termfam$, with 
$$\used(\join(\term_1,\term_2,\ldots,\term_q))=\used(\term_1)=\used(\term_2)=\ldots=\used(\term_q),$$
and $\baggraph(\join(\term_1,\term_2,\ldots,\term_q))$ being the edgeless graph on vertex set $\used(\join(\term_1,\term_2,\ldots,\term_q))$.
\end{itemize}
The family $\termfam$ comprises all the terms that can be built from leaf terms using introduce, forget, introduce edge, and join terms. Note that the join terms can have arbitrarily large arity, but has to be at least $2$. We define the {\em{length}} of the term $\term$, denoted $|\term|$, as the total number of operators $\leaf,\introduce_i,\forget_i,\edge_{i,j},\join$ used in it.

Terms from $\termfam$ have a natural interpretation as expressions building graphs with at most $k$ distinguished vertices. More formally, with every term $\term\in \termfam$ we associate a pair $\labG[\term]=(G[\term],\lab[\term])$, called a {\em{labelled graph}}, where $G[\term]$ is a graph and $\lab[\term]$ is bijection between some subset of $V(G[\term])$ of cardinality $|\used(\term)|$, and $\used(\term)$. The bijection $\lab[\term]$ is also called the {\em{labelling}}. We maintain the invariant that $\lab[\term]$ is an isomorphism between the graph induced by its domain in $G[\term]$ and the graph $\baggraph(\term)$; this invariant follows by a trivial induction from the definition to follow. The labelled graph $\labG[\cdot]$ is defined as follows:
\begin{itemize}
\item If $\term=\leaf$, then $G[\leaf]$ is an empty graph and $\lab[\leaf]$ is an empty function.
\item If $\term=\introduce_i(\term')$ for some $\term'\in \termfam$ and $i\in \Labels$, then $G[\term]$ is equal to $G[\term']$ with a new independent vertex $v$ introduced, and $\lab[\term]=\lab[\term'][v\to i]$.
\item If $\term=\forget_i(\term')$ for some $\term'\in \termfam$ and $i\in \Labels$, then $G[\term]=G[\term']$ and $\lab[\term]=\lab[\term'][\lab[\term']^{-1}(i)\to \bot]$.
\item If $\term=\edge_{i,j}(\term')$ for some $\term'\in \termfam$ and $i,j\in \Labels$, $i\neq j$, then $G[\term]$ is equal to $G[\term']$ with an edge between $\lab[\term']^{-1}(i)$ and $\lab[\term']^{-1}(j)$ introduced, and $\lab[\term]=\lab[\term']$. Recall that $ij\notin E(\baggraph(\term'))$, so by the induction hypothesis we have that $\lab[\term']^{-1}(i)$ and $\lab[\term']^{-1}(j)$ are not adjacent in $G[\term']$.
\item Suppose $\term=\join(\term_1,\term_2,\ldots,\term_q)$ for some $\term_1,\term_2,\ldots,\term_q\in \termfam$. Then $G[\term]$ is constructed by taking the disjoint union of $G[\term_1],G[\term_2],\ldots,G[\term_q]$, and, for every $i\in \used(\term_1)=\used(\term_2)=\ldots=\used(\term_q)$, identifying all vertices $\{\lab[\term_j]^{-1}(i)\ \colon\ j=1,2,\ldots,q\}$ into one vertex. This identified vertex is assigned label $i$ in the labelling $\lab[\term]$.
\end{itemize}

We now say that $\term$ is a {\em{construction term}} for graph $G$ if $\used(\term)=\emptyset$ and $G[\term]$ is isomorphic to $G$. As the reader probably suspects, construction terms and tree decompositions are tightly related.

\begin{lemma}[\app]\label{lem:tw-ct}
A graph $G$ has treewidth less than $k$ if and only if it admits a construction term that constructs it and uses at most $k$ labels.
\end{lemma}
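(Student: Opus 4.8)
The plan is to prove both implications of Lemma~\ref{lem:tw-ct} by providing explicit translations between tree decompositions and construction terms, in the standard style known from the theory of nice tree decompositions. Throughout we think of a construction term $\term$ with $\used(\term)=\emptyset$ as being parsed into a rooted tree $T_\term$ whose nodes are the subterms of $\term$: a leaf node for each $\leaf$, a single-child node for each $\introduce_i$, $\forget_i$, $\edge_{i,j}$, and a $q$-ary node for each $\join$. With each node carrying the subterm $\term'$ rooted there we associate the bag
$$\beta(\term') = \lab[\term']^{-1}(\used(\term')) \subseteq V(G[\term]),$$
i.e.\ the set of currently-labelled vertices. The key structural fact, proved by a straightforward induction on $\term$ together with the inductive definition of $\labG[\cdot]$, is that $V(G[\term]) = \bigcup_{\term' \preceq \term} \beta(\term')$, that each vertex of $G[\term]$ occupies a connected subtree of $T_\term$ (it is ``alive'' from the $\introduce$ that creates it, or the $\join$ that merges its copies, up to the $\forget$ that removes its label), and that each edge of $G[\term]$ is created by exactly one $\edge_{i,j}$ operator, whose bag contains both endpoints. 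Note the one subtlety with $\join$: the vertices shared between the children are exactly those with a common label, and they are identified, so a vertex's ``alive'' subtree passes through a $\join$ node into possibly several child subtrees — this is still connected in $T_\term$. This gives the $(\Leftarrow)$ direction: $(T_\term,\beta)$ is a tree decomposition of $G[\term]\cong G$, and since every bag satisfies $|\beta(\term')| = |\used(\term')| \le k$, its width is at most $k-1 < k$.

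For the $(\Rightarrow)$ direction, suppose $(T,\beta)$ is a tree decomposition of $G$ of width less than $k$, so every bag has at most $k$ vertices. First I would massage it into a \emph{nice} shape in the usual way: root it arbitrarily; make it so every node is a leaf node with empty bag, an introduce node (one child, bag equal to the child's bag plus one vertex), a forget node (one child, bag equal to the child's bag minus one vertex), or a join node (children all having exactly the same bag as the node). The only nonstandard point is that our $\join$ operator requires its input subterms to have \emph{edgeless} bag graphs; this is harmless because we may postpone the introduction of edges: I will process the tree so that at each node of the nice decomposition the ``bag graph built so far'' is edgeless, and only add each edge $uv\in E(G)$ via an $\edge$ operator at a node whose bag we insert specifically for this purpose just above the (unique, after making the decomposition nice) node that first contains both $u$ and $v$ with no edge yet added — equivalently, I process a node's incident edges as a short chain of $\edge$ operators placed immediately above a join but below the point where the decomposition splits. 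Concretely, for each node $t$ of the nice decomposition I emit: the operators coming from $t$'s children (recursively), then at $t$ itself either $\leaf$, or $\introduce_{\ell}$ for the label $\ell$ assigned to the newly introduced vertex, or $\forget_{\ell}$, or $\join(\cdot,\ldots,\cdot)$ of the children's terms; and I insert $\edge_{i,j}$ operators for edges of $G$ at the moment both endpoints are present and before any join that would need an edgeless bag. To name labels, I assign to each vertex $v$ a label in $[k]$ that is consistent along the connected subtree $\{t : v\in\beta(t)\}$ and distinct from the labels of the other (at most $k-1$) vertices sharing a bag with $v$; such a proper coloring of the ``bag intersection graph'' by $k$ colors exists because that graph has treewidth less than $k$ hence is $(k-1)$-degenerate, and one can pick it greedily processing $T$ top-down so the choice stays consistent on each vertex's subtree.

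The resulting term $\term$ lies in $\termfam$ — each operator application respects the side conditions by construction (introduce only adds an unused label since bags have size $\le k$ and labels are a proper coloring; forget only removes a used label; $\edge$ only adds a genuinely new edge between used labels; $\join$ is applied only to subterms with identical used-sets and edgeless bag graphs) — and has $\used(\term)=\emptyset$ because the root bag is empty. Finally $G[\term]\cong G$: by the inductive invariant each operator mirrors the corresponding step of building $G$ along $(T,\beta)$, all vertices of $G$ are introduced (respectively merged at joins) exactly once and all edges of $G$ are added exactly once, so $G[\term]$ and $G$ are isomorphic via the natural map. Since the decomposition uses at most $k$ labels, this finishes the proof.

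I expect the main obstacle to be the $\join$-edgelessness bookkeeping in the $(\Rightarrow)$ direction: one must schedule the $\edge$ operators so that no $\edge$ is ever ``below'' a $\join$ that still needs an edgeless bag, while still ensuring every edge of $G$ gets added. The clean way to handle this is to observe that in a nice tree decomposition every edge $uv$ can be charged to a unique highest node whose bag contains both $u$ and $v$, and to route all $\edge$-operators for edges charged to $t$ into a private chain of length $\deg$-many nodes placed on the edge of $T$ between $t$ and $\parent(t)$ (so above any join happening at $t$'s descendants but affecting only a still-unjoined bag). All the other verifications — connectivity of each vertex's occurrence subtree through $\join$ nodes, the length bound $|\term| = \Oh(k\cdot n)$ if one cared to state it, and the isomorphism invariant — are routine inductions and I would relegate them to a line or two each, which is consistent with this lemma being marked (\app).
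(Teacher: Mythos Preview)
Your proof is correct and the $(\Leftarrow)$ direction is essentially identical to the paper's. For the $(\Rightarrow)$ direction your argument is also correct, but the paper takes a slightly different route that is worth noting.

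The paper does \emph{not} convert to a nice tree decomposition and does \emph{not} fix a global $k$-coloring of vertices. Instead, it works directly on an arbitrary tree decomposition $(T,\beta)$ and proves by bottom-up induction that for every node $t$ and every injective labelling $\lambda\colon\sigma(t)\to[k]$ of the \emph{adhesion}, one can build a term $\term_{t,\lambda}$ constructing the labelled graph $(G[\gamma(t)]\setminus\binom{\sigma(t)}{2},\lambda)$. The invariant ``edges inside the current adhesion are not yet present'' replaces your careful scheduling of $\edge$ operators: since the subterm for each child $t_i$ leaves out all edges inside $\sigma(t_i)\subseteq\beta(t)$, the bag graph of every child term is automatically edgeless at the moment of the $\join$. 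After the join, the paper applies $\edge$ operators for all edges in $\binom{\beta(t)}{2}\setminus\binom{\sigma(t)}{2}$ and then forgets $\beta(t)\setminus\sigma(t)$. Labellings are chosen freshly at each node by extending $\lambda$ arbitrarily to $\beta(t)$.

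Your approach buys a more ``global'' picture via the $k$-coloring and fits the familiar nice-tree-decomposition template, at the cost of the extra massaging step and the edge-scheduling argument (which you correctly resolve: every edge charged to the topmost node $t$ must involve the vertex that $\parent(t)$ forgets, so the bag graph is edgeless again right after the forget). The paper's approach avoids both preprocessing steps by baking the edgelessness into the inductive invariant itself. Both are standard and equally acceptable for a lemma of this weight.
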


Let $\Op=\{\introduce_i : i\in \Sigma\}\cup \{\forget_i : i\in \Sigma\}\cup \{\edge_{i,j} : i,j\in \Sigma, i\neq j\}\cup \{\leaf,\join\}$ be the set of operators used in the terms of $\termfam$. Let us introduce an arbitrary linear order $\tleq$ on the elements of $\Op$: for instance first come operators $\introduce_i$, sorted by $i$, then operators $\forget_i$, sorted by $i$, then operators $\edge_{i,j}$, sorted lexicographically by $(i,j)$, and finally operators $\leaf$ and $\join$. Given this order, we may inductively define a linear order $\tleq$ on the terms from $\termfam$ as follows. Let $\term_1,\term_2$ be two terms, and let $o_1,o_2\in \Op$ be the top-most operations used in $\term_1$ and $\term_2$, respectively. Then relation $\tleq$ between $\term_1$ and $\term_2$ is defined inductively based on the definition for terms of smaller depth.
\begin{itemize}
\item If $o_1\neq o_2$, then $\term_1\tl \term_2$ if $o_1\tl o_2$, and $\term_1\tg \term_2$ if $o_1\tg o_2$.
\item If $o_1=o_2=\leaf$, then $\term_1=\term_2$.
\item If $o_1=o_2\notin \{\leaf,\join\}$, then let $\term_1=o(\term_1')$ and $\term_2=o(\term_2')$, where $o=o_1=o_2$. If $\term_1'=\term_2'$ then $\term_1=\term_2$, if $\term_1'\tl \term_2'$ then 
$\term_1\tl \term_2$, and if $\term_1'\tg \term_2'$ then $\term_1\tg \term_2$.
\item Suppose $o_1=o_2=\join$, and let the arity of the join operation in $\term_1,\term_2$ be equal to $q_1,q_2$, respectively. Let $\term_1=\join(\term_{1,1},\term_{1,2},\ldots,\term_{1,q_1})$ and $\term_2=\join(\term_{2,1},\term_{2,2},\ldots,\term_{2,q_2})$. Since terms $\term_{1,j}$ and $\term_{2,j}$ has smaller depth than $\term_1$ and $\term_2$, respectively, the order $\tleq$ is already defined for them. Hence, we may compare sequences $(\term_{1,1},\term_{1,2},\ldots,\term_{1,q_1})$ and $(\term_{2,1},\term_{2,2},\ldots,\term_{2,q_2})$ lexicographically. If $(\term_{1,1},\term_{1,2},\ldots,\term_{1,q_1})\tl (\term_{2,1},\term_{2,2},\ldots,\term_{2,q_2})$ then $\term_1\tl \term_2$, if $(\term_{1,1},\term_{1,2},\ldots,\term_{1,q_1})\tg (\term_{2,1},\term_{2,2},\ldots,\term_{2,q_2})$ then $\term_1\tg \term_2$, and if $(\term_{1,1},\term_{1,2},\ldots,\term_{1,q_1})=(\term_{2,1},\term_{2,2},\ldots,\term_{2,q_2})$ then $\term_1=\term_2$.
\end{itemize}

Note here that two join terms that differ only in the order of arguments are considered different, even though they construct the same labelled graph. The term where the arguments are sorted nondecreasingly is considered the smallest.

\subsection{Constructing a canonical construction term}

\newcommand{\states}{\mathbb{S}}
\newcommand{\cand}{\mathcal{C}}
\newcommand{\br}{\mathtt{break}}

We are finally ready to prove the main result of this paper.

\begin{theorem}[Theorem~\ref{main:canonization}, restated]\label{thm:canonization}
There exists an algorithm that, given a graph $G$ and a positive integer $k$, in time $2^{\Oh(k^5\log k)}\cdot n^{5}$ either correctly concludes that $\tw(G)\geq k$, or outputs an isomorphism-invariant term $\term$ that constructs $G$ and uses at most $(k+1)\bigadh\in \Oh(k^4)$ labels. Moreover, this term has length at most $\Oh(k^4\cdot n)$.
\end{theorem}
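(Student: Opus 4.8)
The plan is to adapt the classical isomorphism/canonization algorithm of Bodlaender, but restricted to the small isomorphism-invariant family of candidate bags $\bagfam$ produced by Theorem~\ref{thm:redsizes}. First I would run the algorithm of Theorem~\ref{thm:redsizes} to obtain, in $2^{\Oh(k^5\log k)}\cdot n^3$ time, an isomorphism-invariant family $\bagfam$ with $|B|\le (k+1)\bigadh\in\Oh(k^4)$ for each $B\in\bagfam$, $|\bagfam|\le 2^{\Oh(k^5\log k)}\cdot n^2$, closed under subsets, and --- assuming $\tw(G)<k$ --- capturing some tree decomposition of $G$ of width at most $(k+1)\bigadh-1$. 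If $\tw(G)\ge k$ the algorithm may report this; otherwise every bag we ever need lives in $\bagfam$, so it suffices to canonically assemble $G$ from pieces of $\bagfam$.

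The core of the argument is a dynamic programming over \emph{states}. A state should record a piece of the graph ``below'' a bag together with the bag itself, made canonical: concretely, a state is a pair consisting of a bag $B\in\bagfam$ together with an isomorphism type of the induced subgraph on the ``cone'' $\gamma$ it spans, where the cone is glued to the rest of $G$ only along $B$ (so $B$ plays the role of $\sigma$). Since $|B|\in\Oh(k^4)$, there are only $2^{\Oh(k^4\log k)}$ orderings of a bag; combined with $|\bagfam|\le 2^{\Oh(k^5\log k)}\cdot n^2$ this keeps the number of states under control. For each state I would compute, by considering which operator $\introduce_i,\forget_i,\edge_{i,j},\join$ could have produced it (as in the nice-tree-decomposition formalism of Section~\ref{sec:terms}), the lexicographically smallest construction term for that state, using the total order $\tleq$ on terms defined there. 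Processing states by increasing size of the cone, each transition combines already-computed optimal subterms; the crucial point --- exactly as in Bodlaender's argument --- is that because $\bagfam$ captures a bounded-width decomposition of $G$, the optimal term for the ``full'' state (empty bag, whole graph) is realized, and because $\bagfam$, the bag orderings, and the order $\tleq$ are all defined without reference to the input labelling, the resulting term is isomorphism-invariant. Joins of unbounded arity are handled by the convention (from Section~\ref{sec:terms}) that their arguments are sorted according to $\tleq$, so the canonical join is unambiguous.

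The main obstacle I expect is twofold. First, one must argue that using only bags from $\bagfam$ does not lose any graph: this is where Theorem~\ref{thm:redsizes}(\ref{pr3}) plus Lemma~\ref{lem:tw-ct} are invoked --- a tree decomposition captured by $\bagfam$ of width $\Oh(k^4)$ yields a construction term on $\Oh(k^4)$ labels, and conversely an optimal run of the DP over $\bagfam$-states will find (at least) that term, hence a valid construction term; if no state corresponding to the whole graph is reachable we safely output $\tw(G)\ge k$. Second, and more delicate, is the bookkeeping that makes the output genuinely canonical: I would need to show that the partial order in which states are resolved, the tie-breaking via $\tleq$, and the enumeration of candidate ``last operators'' all commute with an arbitrary relabelling $\phi$ of $V(G)$, so that running the algorithm on $\phi(G)$ yields exactly $\phi$ applied to the term computed for $G$ --- here the subset-closedness of $\bagfam$ (property~(\ref{pr4})) is what guarantees that intermediate bags arising from $\introduce/\forget$ steps stay inside $\bagfam$.

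Finally I would account for the resources. The term length is $\Oh(k^4\cdot n)$ because a nice tree decomposition of width $w$ on a graph with $\Oh(n)$-sized decomposition tree has $\Oh(w\cdot n)$ nodes (each original tree-decomposition node is refined into $\Oh(w)$ nice nodes for the introduce/forget/edge steps), and $w\in\Oh(k^4)$. For the running time: there are $|\bagfam|\cdot 2^{\Oh(k^4\log k)}=2^{\Oh(k^5\log k)}\cdot n^2$ states; each transition (in particular the join, which must match up partners across components sharing a bag) can be handled by iterating over compatible pairs/subfamilies of states with the same bag, contributing another factor of roughly $2^{\Oh(k^5\log k)}\cdot n$ work, and polynomial overhead for manipulating cones of size $\Oh(n)$; multiplying through and adding the $2^{\Oh(k^5\log k)}\cdot n^3$ preprocessing from Theorem~\ref{thm:redsizes} gives the claimed $2^{\Oh(k^5\log k)}\cdot n^5$ bound. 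The construction term can then be returned directly, which also yields Theorems~\ref{main:isomorphism} and~\ref{simple:canonization} as indicated in Section~\ref{sec:canon-cors}.
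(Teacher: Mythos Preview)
Your high-level plan coincides with the paper's: compute the isomorphism-invariant family $\bagfam$ from Theorem~\ref{thm:redsizes}, then run a Bodlaender-style dynamic program over states built from $\bagfam$, choosing at each state the $\tleq$-minimal candidate term. Two concrete gaps in your write-up, however, would make the argument fail as stated.

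First, you cannot apply Theorem~\ref{thm:redsizes} directly to $G$: that theorem requires the input graph to be $k$-complemented (and connected). The paper first computes the $k$-improved graph $\imp{G}{k}$ via Lemma~\ref{lem:comp-imp}, applies Theorem~\ref{thm:redsizes} to $\imp{G}{k}$, and then uses Lemma~\ref{lem:same-decomp} to argue that the resulting family still captures a decomposition of $G$. Without this step the hypotheses of Theorem~\ref{thm:redsizes} are simply not met.

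Second, your state definition --- a bag $B$ together with ``an isomorphism type of the induced subgraph on the cone'' --- is circular: deciding equality of such isomorphism types is exactly the problem you are trying to solve, and a DP cannot index cells by objects it cannot compare. The paper instead takes a state to be a triple $(B,\lambda,Z)$ where $\lambda\colon B\to[k']$ is an injective labelling and $Z$ is an \emph{actual} connected component of $G\setminus B$ (or $\emptyset$). This adds a further factor of $n$ to the state count (so $|\states|=2^{\Oh(k^5\log k)}\cdot n^3$, not $n^2$ as you wrote), but it is exactly what makes the transitions well-defined: the forget transition $\term_{\forget,v,i}$ must split $Z\setminus\{v\}$ into its connected components, which requires knowing $Z$ concretely, not just up to isomorphism. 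Relatedly, the correctness argument (your ``optimal run of the DP will find at least that term'') needs the captured decomposition to be \emph{connectivity-sensitive} so that the components $Z$ encountered in the DP match the $\alpha(t)$'s of the decomposition; the paper secures this via Lemma~\ref{lem:sensitivitation} together with the subset-closedness of $\bagfam$, and then proves reachability of the full state by a nested induction (Claim~\ref{cl:canon:nonempty}). Your sketch gestures at Lemma~\ref{lem:tw-ct} here, but that lemma alone does not give you that the particular DP over $\bagfam$-states succeeds.
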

\begin{proof}
Let $k'=(k+1)\bigadh$. Firstly, without loss of generality we assume that $G$ is connected. For disconnected graphs we can apply the algorithm to each connected component $G_1,G_2,\ldots,G_p$ separately, obtaining terms $\term_1,\term_2,\ldots,\term_p$, then sort these terms nondecreasingly so that $\term_1\tleq \term_2\tleq \ldots\tleq \term_p$, and output the term $\term:=\join(\term_1,\term_2,\ldots,\term_p)$. Thus, providing that the construction for a connected graph is isomorphism-invariant, then due to the sorting step so is the construction for disconnected graphs.

We now compute the $k$-improved graph $\imp{G}{k}$, using Lemma~\ref{lem:comp-imp}. If computation of this graph revealed that $\tw(G)\geq k$, then we provide a negative answer to the whole algorithm. Now, we apply Theorem~\ref{thm:redsizes} to the graph $\imp{G}{k}$, obtaining an isomorphism-invariant family of candidate bags $\bagfam$. Observe that since the definition of $\imp{G}{k}$ is invariant w.r.t. isomorphisms of $G$, and the definition of $\bagfam$ is invariant w.r.t. isomorphisms of $\imp{G}{k}$, then the family $\bagfam$ is invariant w.r.t. isomorphisms of $G$. 

Assume for a moment that $G$ has a tree decomposition of width less than $k$. Then, by Lemma~\ref{lem:same-decomp}, so does $\imp{G}{k}$. Consequently, by Theorem~\ref{thm:redsizes} we have that $\bagfam$ captures some tree decomposition of $\imp{G}{k}$ that has width at most $k'-1$. Since $\imp{G}{k}$ is a supergraph of $G$, this tree decomposition is also a tree decomposition of $G$. By Lemma~\ref{lem:sensitivitation} and property~(\ref{pr4}) of Theorem~\ref{thm:redsizes}, we can further infer that $\bagfam$ captures some cs-tree decomposition of $G$ of width at most $k'-1$. Let us denote this cs-tree decomposition by $(T,\beta)$.

The plan for the rest of the proof is as follows. We provide a dynamic programming algorithm that exploits the family $\bagfam$ to compute a term $\term$ that constructs $G$. From the algorithm it will be clear that the definition of $\term$ is isomorphism-invariant. It is possible that the computation of $\term$ fails, but only if $\tw(G)\geq k$: using the captured cs-tree decomposition $(T,\beta)$, we will argue that the algorithm computes some feasible construction term, providing that $\tw(G)<k$. Hence, in case of failure we can safely report that $\tw(G)\geq k$.

Let us define the family of states $\states$ as the family of all the triples $(B,\lab,Z)$, where
\begin{itemize}
\item $B\in \bagfam$;
\item $\lab$ is an injective function from $B$ to $[k']$;
\item $Z=\emptyset$ or $Z$ is the vertex set of a connected component of $G\setminus B$.
\end{itemize}
Observe that $|\states|\leq |\bagfam|\cdot k'!\cdot (n+1)=2^{\Oh(k^5\log k)}\cdot n^3$. For every state $I=(B,\lab,Z)\in \states$, we compute a term $\term[I]$ that constructs the labeled graph $\labG[I]:=(G[B\cup Z]\setminus \binom{B}{2},\lab)$, i.e., the graph $G[B\cup Z]$ with all the edges inside $B$ cleared, and with labelling $\lab$ on $B$. The definition of $\term[I]$ will be invariant w.r.t. isomorphisms of the structure $\labG[I]$. Computation of $\term[I]$ can possibly fail, in which case we denote it by $\term[I]=\bot$. The output term $\term$ is simply defined as $\term[\emptyset,\emptyset,V(G)]$, and using the captured cs-tree decomposition $(T,\beta)$ we will make sure that $\term[\emptyset,\emptyset,V(G)]\neq \bot$ in case $\tw(G)<k$.

To make sure that the inductive definition of $\term[I]$ is well-defined, we also define the {\em{potential}} $\pot$ of a state $I=(B,\lab,Z)$ similar to the one defined  in Theorem~\ref{thm:no-clique-seps}: $\pot(B,\lab,Z)=2|Z|+|B|$. The definition of $\term[I]$ depends only on the terms for states with a strictly smaller potential. Since the potential is always nonnegative, the definition is valid.

Before we proceed to the definition of $\term[I]$, let us introduce one more helpful definition. We often run into situations where we would like to compute the canonical term for a triple $(B,\lab,Z)$ that is not necessarily a state according to our definition, because $Z$ consists of several connected components of $G\setminus B$ rather than at most one. To cope with such situations, we define operator $\br[B,\lab,Z]$. Formally, operator $\br[B,\lab,Z]$ can be applied to triples $(B,\lambda,Z)$ where $B\in \bagfam$, $\lab$ is an injective function from $B$ to $[k']$, and $Z$ comprises vertex sets of some (possibly zero) connected components of $G\setminus B$. The behaviour of $\br[B,\lab,Z]$ is defined as follows:
\begin{itemize}
\item If $Z=\emptyset$ or $G[Z]$ is connected (equivalently, $(B,\lab,Z)\in \states$), then we simply put $\br[B,\lab,Z]=\term[B,\lab,Z]$.
\item If $G[Z]$ consists of more than one connected component, then let $Z_1,Z_2,\ldots,Z_p$ be the vertex sets of these connected components. Let $\term_i=\term[B,\lab,Z_i]$ for $i=1,2,\ldots,p$. If any of the terms $\term_i$ is equal to $\bot$, then we put $\br[B,\lab,Z]=\bot$. Otherwise, by sorting the terms if necessary, assume that $\term_1\tleq \term_2\tleq \ldots \tleq \term_p$. Then $\br[B,\lab,Z]=\join(\term_1,\term_2,\ldots,\term_p)$.
\end{itemize}
Observe that the join operation is valid, since we assumed that term $\term_i$ constructs $(G[B\cup Z_i]\setminus \binom{B}{2},\lab)$, where all the edges between the vertices of $B$ are cleared. We naturally extend the notation $\labG[\cdot]$ to triples that can be arguments of the operator $\br[\cdot]$.

We now proceed to the definition of $\term[I]$ for a state $I=(B,\lab,Z)\in \states$. We generate a family $\cand$ of candidates for $\term[I]$. We put $\term[I]=\bot$ if $\cand=\emptyset$, and otherwise $\term[I]$ is defined as the $\tleq$-minimum element of $\cand$. Elements of $\cand$ reflect possible ways of obtaining the term constructing $\labG[I]$ from simpler terms.

Firstly, if $Z=B=\emptyset$, then we take $\cand=\{\leaf\}$. 

Assume now that $B$ contains some vertex $u$ that is not adjacent to any vertex of $Z$. Then, for every such vertex $u$, we add to $\cand$ the term $\term_{\introduce,u}:=\introduce_{\lab(u)}(\term[I'])$ for $I'=(B\setminus u,\lab[u\to \bot],Z)$. Formally, we add this term only if $I'\in \states$ and $\term[I]\neq \bot$; the same remark holds also for the other elements of $\cand$ to follow. Observe that if $\term[I']$ constructs $\labG[I']$, then $\term_{\introduce,u}$ constructs $\labG[I]$.

Then, for every vertex $v\in Z$ we consider the possibility that $v$ has just been forgotten. Formally, for each $v\in Z$ and each label $i\in [k']\setminus \lab(B)$, we add to $\cand$ the following term:
\begin{align}\label{eq:forget}
\term_{\forget,v,i}:=\forget_i(\edge_{i,j_1}(\edge_{i,j_2}(\ldots \edge_{i,j_q}(\br[I'])\ldots ))),
\end{align}
where $I'=(B\cup \{v\},\lab[v\to i],Z\setminus v)$ and $j_1 < j_2 <\ldots <j_q$ are labels in $\lab$ of neighbors of $v$ in $B$ in the graph $G$. Again, if $\br[\cdot]$ cannot be applied to $I'$ or if $\br[I']=\bot$, then we do not add this candidate. Observe that if $\br[I']$ constructs $\labG[I']$, then $\term_{\forget,v,i}$ constructs $\labG[I]$.

This concludes the definition of the term $\term[I]$; observe that the definition depends only on the definitions for states with strictly smaller potential, as was promised. It can be easily seen by induction that the definition is invariant with respect to isomorphisms of the structure $(G,\labG[I])$, due to taking the $\tleq$-minimum from an invariant family of candidates.\
The following claim shows that, in the end, we obtain a meaningful term provided that $\tw(G) < k$.

\begin{claim}\label{cl:canon:nonempty}
If $\tw(G) < k$ then $\term[\emptyset,\emptyset,V(G)] \neq \bot$.
\end{claim}
\begin{proof}
We proceed by a bottom-up induction on the decomposition $(T,\beta)$. For any $t\in V(T)$ and any injective labelling $\lab\colon \sigma(t)\to [k']$, define $I_{t,\lambda}:=(\sigma(t),\lab,\alpha(t))$. Observe that $I_{t,\lambda}\in \states$, since $\sigma(t)\subseteq \beta(t)\in \bagfam$ and $\bagfam$ is closed under taking subsets, and $G[\alpha(t)]$ is connected since $(T,\beta)$ is connectivity-sensitive. We prove inductively the following statement:
\begin{equation}\label{eq:canon:s1}
\textrm{For each }t\in V(T)\textrm{ and any injective labelling }\lab\colon \sigma(t)\to [k'],
\textrm{we have }\term[I_{t,\lambda}]\neq \bot.
\end{equation}
Observe that if $r$ is the root of $T$, then $I_{r,\emptyset}=(\emptyset,\emptyset,V(G))$, so the statement \eqref{eq:canon:s1} for $r$ is equivalent to the statement of the claim.

Let $t_1,t_2,\ldots,t_p$ be the children of $t$ in $T$ (possibly $p=0$). It is more convenient to prove an even stronger statement:
\begin{align}\label{eq:canon:s2}
&\textrm{For any }X\textrm{ such that }\sigma(t)\subseteq X\subseteq \beta(t)\textrm{, any labeling }\lab_X\colon X\to [k']\textrm{ that extends }\lab,\nonumber\\
&\textrm{and any }Z\textrm{ that is the vertex set of some connected component of }G[\gamma(t)]\setminus X,\\
&\textrm{it holds that }\term[X,\lab_X,Z]\neq \bot.\nonumber
\end{align}
Observe that, again, $(X,\lab_X,Z) \in \states$ since $X\subseteq \beta(t)$.
Statement \eqref{eq:canon:s1} for $t$, which we are trying to prove, is equivalent to statement \eqref{eq:canon:s2} for $X=\sigma(t)$, $\lab_X=\lab$ and $Z=\alpha(t)$. We prove statement \eqref{eq:canon:s2} for all choices of $X,\lab_X,Z$ by induction with respect to $|Z\cap \beta(t)|$.

For the base of the induction, observe that if $Z\cap \beta(t)=\emptyset$, then $Z=\alpha(t_i)$ for some $i\in \{1,2,\ldots,p\}$, since $(T,\beta)$ is connectivity-sensitive. Moreover, since $Z$ is a connected component of $G[\gamma(t)]\setminus X$, then $X\supseteq N(Z)=\sigma(t_i)$. By induction hypothesis for statement \eqref{eq:canon:s1}, we have that $\term[\sigma(t_i),\lab_X|_{\sigma(t_i)},\alpha(t_i)]\neq \bot$. Then it follows that the term 
$$\introduce_{u_1,\lambda_X(u_1)}(\introduce_{u_2,\lambda_X(u_2)}(\ldots \introduce_{u_c,\lambda_X(u_c)}(\term[\sigma(t_i),\lab_X|_{\sigma(t_i)},\alpha(t_i)])\ldots ))$$
is among the candidates for $\term[X,\lab_X,Z]$, where $(u_1,u_2,\ldots,u_c)$ is an arbitrary enumeration of $X\setminus \sigma(t_i)$. This proves that $\term[X,\lab_X,Z]\neq \bot$.

Consider now the induction step when $Z\cap \beta(t)$ is non-empty. Let $v$ be any vertex of $Z\cap \beta(t)$, and let $i$ be any label from $[k']\setminus \lab_X(X)$; since $X\subsetneq \beta(t)$ and $|\beta(t)|\leq k'$, such a label exists. Observe now that from the induction hypothesis for statement \eqref{eq:canon:s2} it follows that $\br[X\cup \{v\},\lab_X[v\to i],Z\setminus \{v\}]\neq \bot$, since in the definition of $\br[\cdot]$ the set $Z\setminus \{v\}$ can only be partitioned into smaller connected components, each of them with a strictly smaller intersection with $\beta(t)$ than $Z\cap \beta(t)$. Therefore, the term $\term_{\forget,v,i}$ defined as in (\ref{eq:forget}) is among the candidates for the value of $\term[X,\lab_X,Z]$, which proves that $\term[X,\lab_X,Z]\neq \bot$.

This concludes the inductive proof of statement \eqref{eq:canon:s2} for all choices of $X,\lab_X,Z$, which also proves the induction step for statement \eqref{eq:canon:s1} (both for leaf and non-leaf nodes). As explained earlier, statement \eqref{eq:canon:s1} for the root of the decomposition proves that the algorithm is correct, i.e., it outputs some construction term providing that $\tw(G)<k$.
\cqed\end{proof}

\newcommand{\lbnd}{\phi}

We are left with establishing the upper bound on the length of the output term, and analysing the running time of the algorithm. To achieve this goal, we inductively bound the lengths of the terms produced by the algorithm. For a state $I=(B,\lambda,Z)$, define $\lbnd(I)$ as follows:
\begin{align}\label{eq:lbnd}
\lbnd(I)=(k'+2)\cdot \max(2|Z|-1,0)+|B|+2.
\end{align}
Observe that if $|Z|\geq 1$, then 
\begin{align}\label{eq:lbndeasy}
\lbnd(I)\leq (k'+2)\cdot 2|Z|.
\end{align}

\begin{claim}\label{cl:length}
For any $I\in \states$, if $\term[I]\neq \bot$ then $|\term[I]|\leq \lbnd(I)$.
\end{claim}
\begin{proof}
We first verify the claim for states $I$ where $Z=\emptyset$. Then it is easy to see that $\term[I]$ consists of a sequence of introduce terms that introduce consecutive vertices, finished by a leaf term. Therefore $|\term[I]|=|B|+1\leq \lbnd(I)$. In the sequel we assume that $Z\neq \emptyset$, and we proceed by induction with respect to the potential $\pot(I)$.

Assume now that $Z\neq \emptyset$ and that $\term[B,\lambda,Z]=\term_{\introduce,u}=\introduce_{\lab(u)}(\term[I'])$ for some vertex $u\in B$, where $I'=(B\setminus u,\lab[u\to \bot],Z)$. Then, using the induction hypothesis we have that:
$$|\term[I]|=1+|\term[I']|\leq 1+\lbnd(I)=1+(k'+2)\cdot (2|Z|-1)+(|B|-1)+2=\lbnd(I).$$

Finally, assume that $\term[B,\lambda,Z]=\term_{\forget,v,i}$, where $\term_{\forget,v,i}$ is defined as in (\ref{eq:forget}). Then we have
$$|\term[B,\lambda,Z]|\leq 1+|B|+|\br[B\cup \{v\},\lab[v\to i],Z\setminus \{v\}]|.$$
Let $Z_1,Z_2,\ldots,Z_p$ be the vertex sets of the connected components of $G[Z\setminus \{v\}]$ (possibly $p=0$), and let $I_j=(B\cup \{v\},\lambda[v\to i],Z_j)$ for $j=1,2,\ldots,p$.

Firstly, we consider the case when $p=0$, or equivalently $Z=\{v\}$. Using our observations about the case $Z=\emptyset$ we infer that $|\br[B\cup \{v\},\lab[v\to i],Z\setminus \{v\}]|=|B|+2$. Then
$$|\term[B,\lambda,Z]|\leq 2|B|+3\leq (k'+2)+|B|+1<\lbnd(I).$$

Secondly, we consider the case when $p>0$. Using inequality (\ref{eq:lbndeasy}) and the fact that $|Z_j|\geq 1$ for each $j=1,2\ldots,p$, we infer that
$$|\br[B\cup \{v\},\lab[v\to i],Z\setminus \{v\}]|\leq 1+\sum_{i=1}^p \lbnd(I_p)\leq 1+\sum_{i=1}^p (k'+2)\cdot 2|Z_i|\leq 1+(k'+2)\cdot 2(|Z|-1).$$
Therefore, we obtain that
$$|\term[B,\lambda,Z]|\leq 1+|B|+1+(k'+2)\cdot 2(|Z|-1)\leq \lbnd(I),$$
which concludes the proof of the claim.
\cqed\end{proof}

Claim~\ref{cl:length} implies that each term $\term[I]$ computed by the algorithm has length at most $\Oh(k'n)$, which in particular proves the claimed upper bound on the length of the output term. For the analysis of the running time of the algorithm, observe that for each state $I\in \states$ we consider $\Oh(k'n)$ possible candidates for $\term[I]$. Each of these candidates is constructed in $\Oh(kn)$ time, since we possibly need to partition $G[Z\setminus \{v\}]$ into connected components. Moreover, each of these candidates has length at most $\Oh(k'n)$, by Claim~\ref{cl:length}. It follows that the $\tleq$-minimum among these candidates can be selected in $\Oh((k'n)^2)$ time. Since $|\states|=2^{\Oh(k^5\log k)}\cdot n^3$, we conclude that the whole algorithm works in time $2^{\Oh(k^5\log k)}\cdot n^5$.
\end{proof}

\subsection{Corollaries of Theorem~\ref{main:canonization}}\label{sec:canon-cors}

We now show how Theorems~\ref{main:isomorphism} and~\ref{simple:canonization} follow directly from Theorem~\ref{main:canonization}.

\begin{proof}[Proof of Theorem~\ref{main:isomorphism}]
We run the algorithm of Theorem~\ref{thm:canonization} on both $G_1$ and $G_2$. If for any of them the algorithm concluded that the treewidth is at least $k$, then we output the appropriate answer. Otherwise, the algorithm returned two terms $\term_1,\term_2$ that construct $G_1,G_2$, respectively. Since $\term_1,\term_2$ are isomorphism-invariant, to verify whether $G_1$ and $G_2$ are isomorphic it suffices to check whether $\term_1=\term_2$.
\end{proof}

\begin{proof}[Proof of Theorem~\ref{simple:canonization}]
Given a graph $G$, we compute the canonical term $\term$ constructing $G$, 
construct an ordering $\phi: V(G) \to [n]$ of the vertices of $G$
according to pre-order in term $\term$ of operations when they become forgotten,
and output the graph $G[\term]$ on the vertex set $[n]$ together with the mapping $\phi$.
If the computation of $\term$ returned that the treewidth of $G$ is at least $k$, then we return the same answer.
\end{proof}

\section{Conclusions}\label{sec:conc}
In this paper we have developed the first fixed-parameter tractable algorithm for \GI parameterized by treewidth. The obvious open question is to improve the running time of our algorithm. 

In this work we focused on keeping the presentation as clear as possible, while targeting at a $2^{\text{poly}(k)}$ FPT algorithm at the same time --- but without any attempt of optimizing the $\textrm{poly}(k)$ factor in the exponent, nor the polynomial factor in $n$.
Although it is reasonable to suspect that the polynomial factor in $n$ in the running time of our algorithm can be reduced to $n^4$, or even $n^3$, by a more careful analysis,
we consider such an improvement of minor importance, and the more challenging question would be to make the whole algorithm run in quadratic, or even linear time. Recall that isomorphism of trees can be verified in linear time~\cite{AhoHU74}, so there is no reason why such a running time should not be achieved also for graphs of bounded treewidth. 

A possible route to improving the polynomial factor could be the alternative approach proposed by Otachi and Schweitzer~\cite{pascale}. In essence, Otachi and Schweitzer show that once an isomorphism-invariant family of potential bags of size $f(k)\cdot n^c$ is constructed, then an FPT isomorphism test can be performed using a variant of the {\em{Weisfeiler-Lehman algorithm}}, which thus can serve as an alternative to our dynamic programming procedure of Section~\ref{sec:canon}. Since the Weisfeiler-Lehman algorithm is very simple, it is possible that the combination of our enumeration algorithm and the techniques of Otachi and Schweitzer can lead to improving the polynomial factor.

For the parametric dependence, we also believe that the factor $2^{\Oh(k^5\log k)}$ is suboptimal. A challenging question would be to improve it to $2^{\Oh(k\log k)}$ or even $2^{\Oh(k)}$. In the current approach, the most significant reason for such a high polynomial in the exponent is the way we handle small sets $S$ in the proof of Lemma~\ref{lem:local-decomposition}.

It is also interesting to investigate whether the results of our work can be used to prove canonical or almost canonical variants of other graph decompositions. Actually, many structural theorems for graphs follow the general approach proposed by Robertson and Seymour in their approximation algorithm for treewidth~\cite{gm13}. In particular, the step of breaking the top adhesion $S$ using a small separator has been used multiple times in various works. Since our work provides a canonical way of performing this step (encompassed in Lemmas~\ref{lem:magical} and Lemma~\ref{lem:local-decomposition}), it might serve as a solid foundation for making other graph decompositions canonical. For concrete decomposition theorems where we hope that our approach could be applicable, let us name (a) the $H$-minor-free structural theorem of Robertson and Seymour~\cite{gm16}, (b) the $H$-topological-minor-free structural theorem of Grohe and Marx~\cite{marx-grohe-arxiv,marx-grohe}, and (c) the decomposition theorem with unbreakable parts given by a superset of the current authors~\cite{CyganLPPS13}.

\paragraph*{Acknowledgements.} We are grateful to Yota Otachi and Pascal Schweitzer for sharing with us their manuscript~\cite{pascale} and helpful comments on our work.
Furthermore, we thank an anonymous reviewer for extensive comments.

\bibliographystyle{abbrv}
\bibliography{tw-iso}

\newpage

\appendix
\section*{Appendix}
\begin{proof}[Proof of Lemma~\ref{lem:improval}]
Let $H=\imp{G}{k}$. Take any $x,y\in V(G)$ and assume that $xy\notin E(H)$, so in particular $xy\notin E(G)$. Then by the definition of $H=\imp{G}{k}$, we have that the maximum vertex flow between $x$ and $y$ in $G$ has size less than $k$. By Menger's theorem this means that there exists a separation $(A,B)$ of $G$ that has order less than $k$, and such that $x\in A\setminus B$ and $y\in B\setminus A$. We claim that $(A,B)$ is also a separation of $H$. Indeed, for any pair of vertices $u\in A\setminus B$ and $v\in B\setminus A$, the fact that $|A\cap B|<k$ certifies that $\conn_G(u,v)<k$, which means that $uv\notin E(H)$ by the definition of $H$. Since $x\in A\setminus B$, $y\in B\setminus A$, and $(A,B)$ is a separation of order less than $k$ in $H$, then this certifies that $\conn_H(x,y)<k$. As $x,y$ were chosen arbitrarily, it follows that $H$ is $k$-complemented.
\end{proof}

\begin{proof}[Proof of Lemma~\ref{lem:same-decomp}]
Let $(T,\beta)$ be a tree decomposition of $G$ of width less than $k$.
In order to show that $(T,\beta)$ is also a tree decomposition of $H:=\imp{G}{k}$, it suffices to show that for any $xy\in E(H)\setminus E(G)$ there exists some $t\in T$ such that $x,y\in \beta(t)$.

For the sake of contradiction assume that no such $t$ exists. Let $T^0_x$ and $T^0_y$ be the subtrees of $T$ induced by the nodes whose bags contain $x$ and $y$, respectively.
We know that $T^0_x$ and $T^0_y$ are connected and vertex-disjoint.
Let $T_y$ be the connected component of $T \setminus V(T^0_x)$ that contains $T^0_y$ as a subgraph,
and let $T_x = T \setminus V(T_y)$; note that $T_x$ is connected and contains $T^0_x$ as a subgraph.
Observe that $(V(T_x),V(T_y))$ forms a partition of $V(T)$.
Let $A=\bigcup_{t\in V(T_x)} \beta(t)$ and $B=\bigcup_{t\in V(T_y)} \beta(t)$; note that $x\in A\setminus B$ and $y\in B\setminus A$.
Observe that since $T$ is a tree, there exists only one edge $t_xt_y$ of $T$ that connects a node from $V(T_x)$ with a node from $V(T_y)$.
From the properties of a tree decomposition it follows that $A\cap B=\beta(t_x)\cap \beta(t_y)$.
Moreover, observe that $x \in \beta(t_x) \setminus \beta(t_y)$, thus $|A \cap B| = |\beta(t_x) \cap \beta(t_y)| < |\beta(t_x)| \leq k$.
As every vertex of $G$ is contained in some bag of $(T,\beta)$, it follows that $(A,B)$ is a separation of order less than $k$ that separates $x$ and $y$.
This proves that $\conn_G(x,y)<k$, contradicting the assumption that $xy\in E(H)$.
\end{proof}

\begin{proof}[Proof of Lemma~\ref{lem:comp-imp}]
If $|E(G)|>(k-1)n$, then we can output that $\tw(G)\geq k$, since a graph of treewidth less than $k$ is $(k-1)$-degenerate. Hence assume that $|E(G)|\leq (k-1)n$. We perform a brute-force algorithm that follows immediately from the definition: For every pair of vertices, we compute the maximum flow between them using the Ford-Fulkerson algorithm, stopping the computation if the size of the flow exceeded $k-1$. Thus we run at most $k$ iterations of the Ford-Fulkerson algorithm, and each iteration takes $\Oh(|V(G)|+|E(G)|)=\Oh(kn)$ time. Since we perform this procedure for every pair of vertices, the running time of $\Oh(k^2n^3)$ follows.
\end{proof}

\begin{proof}[Proof of Lemma~\ref{lem:sensitivitation}]
Let $(T,\beta)$ be a tree decomposition of $G$ of width $k$ and adhesion width $\ell$. We prove the following statement by a bottom-up induction on $(T,\beta)$: For every $t\in V(T)$ and every vertex set $Z$ of a connected component of $G[\alpha(t)]$, there exists a tree decomposition $(T_{t,Z},\beta_{t,Z})$ of $G[N[Z]]$ such that:
\begin{enumerate}[(a)]
\item\label{h1} $(T_{t,Z},\beta_{t,Z})$ is connectivity-sensitive, 
\item\label{h2} $(T_{t,Z},\beta_{t,Z})$ has width at most $k$ and adhesion width at most $\ell$, 
\item\label{h3} every bag of $(T_{t,Z},\beta_{t,Z})$ is a subset of some bag of $(T,\beta)$, and
\item\label{h4} $N(Z)$ is contained in the root bag of $(T_{t,Z},\beta_{t,Z})$.
\end{enumerate}
Observe that in this definition it holds that $N(Z)\subseteq \sigma(t)$. Since $G$ is connected, for the final decomposition $(T',\beta')$ we may take $(T_{r,V(G)},\beta_{r,V(G)})$, where $r$ is the root of $(T,\beta)$.

Let us focus on one choice of $t,Z$. Let $t_1,t_2,\ldots,t_p$ be the children of $t$ in $T$ (possibly $p=0$). For $i=1,2,\ldots,p$, let $Z_i=Z\cap \alpha(t_i)$, and let $Z_i^1,Z_i^2,\ldots,Z_i^{q_i}$ be the vertex sets of the connected components of $G[Z_i]$. For $i=1,2,\ldots,p$ and $j=1,2,\ldots,q_i$, let $(T_{t_i,Z_i^j},\beta_{t_i,Z_i^j})$ be the decomposition of $G[N[Z_i^j]]$ that satisfies properties (\ref{h1}), (\ref{h2}), (\ref{h3}), (\ref{h4}); existence of this decomposition is asserted by the induction hypothesis. We construct the decomposition $(T_{t,Z},\beta_{t,Z})$ by creating one bag $\beta(t)\cap N[Z]$, and attaching all the decompositions $(T_{t_i,Z_i^j},\beta_{t_i,Z_i^j})$ below it as subtrees. Let $t_i^j$ be the root node of the attached decomposition $(T_{t_i,Z_i^j},\beta_{t_i,Z_i^j})$. It is straightforward to verify that $(T_{t,Z},\beta_{t,Z})$ is indeed a tree decomposition of $G[N[Z]]$. We now verify that the requested properties are satisfied.
\begin{itemize}
\item For property~(\ref{h1}), the only checks not implied by the induction hypothesis are as follows:
\begin{itemize}
\item We need to verify that $G[N[Z]]$ is connected, but this follows from the fact that $G[Z]$ is connected.
\item We need to verify that $G[\alpha_{t,Z}(t_i^j)]$ is connected and that $N(\alpha_{t,Z}(t_i^j))=\sigma_{t,Z}(t_i^j)$. However, we have that $\alpha_{t,Z}(t_i^j)=Z_i^j$, which induces a connected graph by its definition, and that $\sigma_{t,Z}(t_i^j)=N(Z_i^j)$ by the definition of $(T_{t_i,Z_i^j},\beta_{t_i,Z_i^j})$ and property~(\ref{h4}) for it.
\end{itemize}
\item For properties~(\ref{h2}) and~(\ref{h3}), it suffices to observe that $\beta(t)\cap N[Z]\subseteq \beta(t)$ and that $\sigma_{t,Z}(t_i^j)\subseteq \sigma(t_i)$.
\item Property~(\ref{h4}) follows directly from the definition of $Z$ and of the top bag of $(T_{t,Z},\beta_{t,Z})$.
\end{itemize}
This concludes the step of the induction.
\end{proof}

\begin{proof}[Proof of Lemma~\ref{lem:tw-ct}]
From right to left, let $\term$ be a term that constructs $G$ and uses at most $k$ labels. For each subterm $\term'$ of $\term$, create one node $t_{\term'}$ with associated bag $\beta(t_{\term'})$ being the domain of $\lab[\term']$. Since $\term$ uses at most $k$ labels, it follows that $|\beta(t_{\term'})|\leq k$. Now connect nodes $t_{\term'}$ into a tree using the structure inherited from the term $\term$: for any two subterms $\term'$, $\term''$, connect $t_{\term'}$ and $t_{\term''}$ if and only if $\term''$ appears as an argument of the top-most operation in $\term'$, or vice-versa. Let $T$ be the obtained tree. It is straightforward to verify that the $(T,\beta)$ is a tree decomposition of $G$, and we already verified that it has width less than $k$.

From left to right, let $(T,\beta)$ be a tree decomposition of $G$ of width less than $k$. We apply a bottom-up induction on $(T,\beta)$. More precisely, for every $t\in V(T)$ and every injective labeling $\lab\colon \sigma(t) \to [k]$ we construct a term $\term_{t,\lab}$ that constructs $(G[\gamma(t)]\setminus \binom{\sigma(t)}{2},\lab)$, i.e., the graph $G[\gamma(t)]$ with all the edges inside $\sigma(t)$ cleared, and with labeling $\lab$ on $\sigma(t)$. The final term $\term$ will be just $\term_{r,\emptyset}$, where $r$ is the root of the tree $T$.

Let us take any $t\in V(T)$, and let $t_1,t_2,\ldots,t_p$ be the children of $t$ in $T$ (possibly $p=0$). Let $\lab'$ be any injective extension of $\lab$ onto $\beta(t)$; such an extension exists since $|\beta(t)|\leq k$. We first construct an auxiliary term $\term'$, which will construct $(G[\gamma(t)]\setminus \binom{\beta(t)}{2},\lab')$. The construction of $\term'$ distinguishes two cases: either $p=0$ or $p>0$.

Consider first the case when $p=0$, i.e., $t$ is a leaf node. Then we can take 
\begin{align*}
\term' = \introduce_{\lab'(u_1)}(\introduce_{\lab'(u_2)}(\ldots \introduce_{\lab'(u_{|\beta(t)|})}(\leaf)\ldots )),
\end{align*}
where $(u_1,u_2,\ldots,u_{|\beta(t)|})$ is an arbitrary ordering of the vertices of $\beta(t)$.

Consider now the case when $p>0$. For $i=1,2,\ldots,p$, let $\term_i$ be equal to $\term_{t_i,\lab_i}$, where $\lab_i=\lab'|_{\sigma(t_i)}$. Existence of $\term_i$ is asserted by the induction hypothesis for the node $t_i$. Construct $\term_i'$ from $\term_i$ by applying $\introduce$ operation for all the labels that are used in $\lab'$, but not in $\lab_i$. Then we can take
\begin{align*}
\term' := \begin{cases} \term_1' & \text{if }p=1{,} \\ \join(\term_1',\term_2',\ldots,\term_p') & \text{if }p>1{.}\end{cases}
\end{align*}
It is straightforward to see that in both cases $\term'$ constructs $(G[\gamma(t)]\setminus \binom{\beta(t)}{2},\lab')$, as claimed.

Now we need to show how to obtain $\term_{t,\lab}$ from $\term'$. Let $L$ be the set of labels used in $\lab'$ and let $L_\sigma\subseteq L$ be the set of labels used in $\lab$. To obtain $\term_{t,\lab}$ from $\term'$, we perform the following two operations:
\begin{itemize}
\item Apply $\edge$ operations to all the pairs of labels $\{j_1,j_2\}\in \binom{L}{2}\setminus \binom{L_\sigma}{2}$ such that $\lab'^{-1}(j_1)\lab'^{-1}(j_2)\in E(G)$, in any order.
\item Apply $\forget$ operations to all the labels of $L\setminus L_\sigma$, in any order.
\end{itemize}
It is straightforward to see that $\term_{t,\lab}$ constructs $(G[\gamma(t)]\setminus \binom{\sigma(t)}{2},\lab)$, as claimed. This concludes the inductive proof.
\end{proof}

\end{document}